\DeclareFontFamily{U}  {MnSymbolC}{}
\DeclareFontShape{U}{MnSymbolC}{m}{n}{
    <-6>  MnSymbolC5
   <6-7>  MnSymbolC6
   <7-8>  MnSymbolC7
   <8-9>  MnSymbolC8
   <9-10> MnSymbolC9
  <10-12> MnSymbolC10
  <12->   MnSymbolC12}{}
\DeclareFontShape{U}{MnSymbolC}{b}{n}{
    <-6>  MnSymbolC-Bold5
   <6-7>  MnSymbolC-Bold6
   <7-8>  MnSymbolC-Bold7
   <8-9>  MnSymbolC-Bold8
   <9-10> MnSymbolC-Bold9
  <10-12> MnSymbolC-Bold10
  <12->   MnSymbolC-Bold12}{}
\DeclareSymbolFont{MnSyC}         {U}  {MnSymbolC}{m}{n}
\DeclareMathSymbol{\diamonddot}{\mathbin}{MnSyC}{"7E}
\newtheorem{lemma}{Lemma}
\newtheorem{theorem}{Theorem}
\newtheorem{corollary}{Corollary}
\theoremstyle{definition}
\newtheorem{definition}{Definition}
\newtheorem{example}{Example}
\newcommand{\eqdef}{\triangleq}
\DeclareMathOperator{\supp}{supp}
\newcommand{\NN}{\mathbb{N}}
\newcommand{\Time}{T}
\newcommand{\histories}{\Omega}
\newcommand{\history}{\omega}
\newcommand{\allplayers}{N}
\newcommand{\players}{I}
\newcommand{\points}{\bar{\Omega}}
\newcommand{\partition}{\bar{P}}
\newcommand{\sometime}{\diamonddot}
\newcommand{\always}{\boxdot}
\newcommand{\subat}{{\scriptscriptstyle@}}
\title{Common Knowledge, Regained\thanks{The authors thank Bob Aumann, Drew Fudenberg, Joe Halpern, Sergiu Hart, Giacomo Lanzani, Shengwu Li, Jacob Leshno, Eric Maskin, Stephen Morris, Ellen Muir, Tomasz Strzalecki, Alex Wolitzky, and participants at the Harvard--MIT Economic Theory seminar, for insightful comments and discussions.}}
\author{
Yannai A. Gonczarowski\thanks{Department of Economics and Department of Computer Science, Harvard University | \emph{E-mail}: \mbox{\href{mailto:yannai@gonch.name}{yannai@gonch.name}}.}
\and
Yoram Moses\thanks{Department of Electrical and Computer Engineering, Technion---Israel Institute of Technology | \emph{E-mail}: \href{mailto:moses@technion.ac.il}{moses@technion.ac.il}.}
}
\date{April 23, 2024}
\begin{document}

\begin{titlepage}

\maketitle


\begin{abstract}
For common knowledge to arise in dynamic settings, all players must simultaneously come to know it has arisen. Consequently, common knowledge cannot arise in many realistic settings with timing frictions. This counterintuitive observation of \citet{HalpernM90} was discussed by \citet{ArrowAK87} and \citet{Aumann89}, was called a \emph{paradox} by \citet{Morris14}, and has evaded satisfactory resolution for four decades. We resolve this paradox by proposing a new definition for common knowledge, which coincides with the traditional one in static settings but is more permissive in dynamic settings. Under our definition, common knowledge can arise without simultaneity, particularly  in canonical examples of the Haplern--Moses paradox. We demonstrate its usefulness by deriving for it an agreement theorem \textit{{\`a}~la} \citet{Aumann76}, showing it arises in the setting of \citet{GeanakoplosP82} with timing frictions added, and applying it to characterize equilibrium behavior in a dynamic coordination game.
\end{abstract}

\thispagestyle{empty}

\end{titlepage}

\thispagestyle{empty}

\tableofcontents

\clearpage

\setcounter{page}{1}

{\raggedleft\small
\emph{``Knowledge equals profit.''}

\footnotesize ---Ferengi Rule of Acquisition \#74, Star Trek

\vspace{-1em}
}


\section{Introduction}

Economic theory makes abundant use of common knowledge assumptions. It is therefore of interest to understand how such common knowledge is obtained. As it turns out, in settings where the players' knowledge  evolves over time, attaining common knowledge may be quite a harsh requirement. Indeed, it has long been established that in the presence of rather mild timing frictions, common knowledge cannot dynamically arise. For example, suppose that a player~$\alpha$ sends a message containing either ``yes'' or ``no'' to another player~$\beta$. If the sending time of the message is unknown to~$\beta$ and the message might take either one or two seconds to arrive, then the content of the message might never become common knowledge among these two players.\footnote{Extremely briefly, two seconds after the message is sent, $\alpha$ knows that $\beta$ knows the content, but since $\beta$ might have received it just then and would consider it possible that the message was sent just one rather than two seconds before that, $\alpha$ does not know that $\beta$ knows that $\alpha$ knows that $\beta$ knows the content. A second later this is known, but $\alpha$ does not know that $\beta$ knows it to be so, etc. At no time is it true that $\alpha$ knows that $\beta$ knows that (repeating arbitrarily many times) $\alpha$ knows that $\beta$ knows the content. We review this issue in greater detail in \cref{prelims}.} This phenomenon has been termed by \citet{ArrowAK87} and by \citet{Aumann89} the ``Halpern--Moses Problem'' after the paper by \citet{HalpernM90} in which it was uncovered, and was called a ``paradox'' by \citet{Morris14}. Combining these two names, we henceforth refer to it as the ``Halpern--Moses Paradox.''\footnote{The example above, albeit with $0$ and $\varepsilon$ seconds rather than $1$ and $2$ seconds, is due to \citet{HalpernM90}, who use it to demonstrate that unless clocks are perfectly synchronized and messages are dated, common knowledge need not arise. A similar example was utilized by \citet{SteinerS11} to furthermore show that undated communication can destroy (probabilistic) common learning even if such learning would have held absent communication.}

The seminal paper of \citet{GeanakoplosP82} (see also \citealp{Geanakoplos94}) contains a beautiful analysis of how many messages should be sent back and forth between two players for common knowledge to arise. In fact, as we observe in \cref{prelims}, the Halpern--Moses Paradox applies also to such a setting in the presence of even slight timing frictions---i.e., uncertainty about delivery times and possibly non-synchronized watches---whether on the order of seconds, \emph{milli}seconds, or less.\footnote{For a review of the importance of timing frictions in the economic literature, see Section~3 of \citet{Morris14}.} Indeed,
even if player~$\alpha$ and player~$\beta$ exchange many back-and-forth messages, as long as there are timing frictions, new common knowledge---e.g., of the content of the first message---cannot arise. At the heart of all of these examples lies the observation by \citet{HalpernM90} that common knowledge can only arise if it becomes known simultaneously by all players, and since true simultaneity does not exist in real-life settings, new common knowledge cannot arise. Quite disturbingly, one consequence is that because of internet latency fluctuations during an online video meeting, at the conclusion of the meeting none of the new ideas raised in it are common knowledge. Moreover, given that there are small and uncertain neural information processing delays in human perception (not to mention quantum-theoretical uncertainty in the delivery of any form of information), even when two players look each other in the eye while shaking hands to seal a deal, this does not render the deal common knowledge.

In all of the above settings, even though common knowledge is never formally attained but only various approximations of common knowledge hold (e.g., $k$-level knowledge for high $k$, or various probabilistic approximations), it intuitively seems as if common knowledge \emph{effectively} holds: Indeed, it is hard to argue why after a few seconds of an online video meeting any player should not behave \emph{as if} there is common knowledge of words that were said at the beginning of the meeting. Similarly, in the \citet{GeanakoplosP82} setting with added timing frictions, the posteriors of the players do become identical (see \cref{sec:agreement}), as would have been the case had common knowledge arisen.\footnote{These contrast with the well-known electronic mail game of \citet{Rubinstein89}, in which even though various approximations of common knowledge hold, common knowledge \emph{does not even effectively exist}, in the sense that consequences of common knowledge fail to formally hold. The kind of problem that is manifested 
in the Halpern--Moses Paradox does not arise in the electronic mail game, since common knowledge neither formally nor effectively holds there.} Even though this discrepancy between the intuitive and formal meanings of common knowledge has been known and discussed for four decades, it has remained unresolved.

In this paper we offer a new definition for common knowledge, which overcomes the problematic aspects of the standard definition that are uncovered by the Halpern--Moses Paradox. Our definition coincides with the traditional one in static settings (settings in which knowledge does not evolve over time), but diverges from it and is more permissive in dynamic settings. Under our definition, common knowledge does hold in many settings in which ``common knowledge should effectively hold'' (but might not formally hold under the traditional definition), including all of the above examples. Despite being more permissive, common knowledge under our definition has similar desirable implications to those of traditionally defined common knowledge. In a precise sense, our definition resolves the Halpern--Moses Paradox.

The traditional definition, in a static setting, of a fact $\phi$ being common knowledge requires a strong form of mutuality between the epistemic states of the players: It requires that $\alpha$ knows that $\beta$ knows that $\alpha$ knows \ldots\ that $\phi$, for any level of nesting. A formal implication of this definition is that whenever 
$\phi$ is common knowledge, all players must know this (while when $\phi$ is not common knowledge, no player can know that it is).
The standard and straightforward way to generalize this definition to a dynamic setting requires precisely the same mutuality among all players, and implies that it must become known simultaneously by all players at some given instant. 
Our definition generalizes the static definition to a dynamic setting in a more nuanced way that still requires the same strong epistemic mutuality, while relaxing the simultaneity requirement.
Specifically, it reasons about each player's knowledge as a certain \emph{local event} occurs. E.g., it requires that \underline{$\alpha$ as she sends the message} knows that \underline{$\beta$ as she receives the message} knows that \underline{$\alpha$ as she sends the message} knows \ldots\ that $\phi$, for any level of nesting. Defined this way, common knowledge can arise in the absence of simultaneity.
In particular, we show in \cref{attaining} that it arises in each of the above examples, including the above-discussed \citet{GeanakoplosP82} setting with added timing frictions (even though traditionally defined common knowledge is formally not attainable in this setting due to these frictions).

Our definition, in addition to holding in settings in which the traditional definition has been criticized within economics and computer science for failing to hold, also holds true in formalizations of settings in which influential papers in other disciplines assert that common knowledge should hold. In a seminal paper on common knowledge in natural discourse, \citet{ClarkM81} identify \emph{co-presence} as a basis for establishing that people have common knowledge of facts of interest in practical settings. E.g., when $\alpha$ and $\beta$ meet at the cafe, their shared presence at the same table is the basis for concluding that they have common knowledge of their meeting.  
A closer inspection shows, however, similarly to the above discussion, that since the first instant at which~$\alpha$ observes $\beta$ at the table might not be truly simultaneous with the first instant at which $\beta$ observes this, the Halpern--Moses Paradox implies that $\alpha$ and $\beta$ do not obtain common knowledge of their meeting. We take the position that the problem here lies not in the claim that co-presence implies common knowledge but rather in the unintended feature of the traditional definition that requires simultaneity for common knowledge to arise dynamically.
In other words, we agree that co-presence \emph{should} give rise to common knowledge, once the latter is appropriately defined. It indeed does so under our new definition.

To make our notion of common knowledge more useful for analysis, it is desirable to have a convenient way to prove that it holds.
In \cref{induction} we provide an Induction Rule for common knowledge under our definition---a condition that is easy to check in many settings, and which can be used to establish that common knowledge holds.\footnote{The term Induction Rule is inspired by terminology used in the analysis of (traditionally defined) common knowledge in \citet{HalpernM92}; see also \citet{ClarkM81}.}
To ascertain the emergence of common knowledge (under our definition) using this Induction Rule, one need only check a simple condition regarding each player and apply the rule. This Induction Rule is at the heart of the proofs of many of our results.

The Halpern--Moses Paradox is often illustrated in settings where the temporal uncertainty is very small, which emphasizes its paradoxical nature. This might make it seem like the discussion of the emergence of common knowledge is limited to situations in which there is at least approximate simultaneity, or some form of co-presence of the different players.
The essence of common knowledge, as unearthed by our definition, however, turns out not to be limited in this way.
Imagine a note sent by carrier pigeon from one city to another, taking between $7$ to $8$ hours to arrive. There is no semblance of simultaneity between when it is sent and when it arrives, and certainly no co-presence between the sender and the receiver, and yet (assuming that pigeons always arrive safely) the content of the note carried by the pigeon is common knowledge (according to our definition) between the sender as she sends the pigeon off and the recipient as she receives it: The sender, as she sends the pigeon, knows that the receiver, as the pigeon arrives, will know that the sender, as she sent the pigeon knew that the receiver etc. And, this allows the sender and receiver to act based on having common knowledge of the message.\footnote{To take this example to the extreme, let us recall the \href{https://www.youtube.com/watch?v=iA1LmWWYxWY}{\textcolor{blue}{\uline{final scene}}} from the movie Back to the Future II. In this scene, Doc Brown is inside his DeLorean time machine when it is struck by lightning and disappears. Marty McFly, who was outside the car as the lightning struck, is terribly worried. Not a moment passes and a Western Union carrier arrives and delivers to McFly a letter that was in Western Union possession for 70 years, ever since it was sent by Doc Brown from the Old West, whence the lightning that struck his DeLorean sent him. Doc Brown gave explicit instructions for the letter to be delivered to McFly at that exact location, at that exact minute in time. The content of the letter is thus common knowledge (according to our definition) between Doc Brown as he sends the letter in 1885 and Marty McFly as he receives it in 1955. This is true despite there being no simultaneity whatsoever in this scenario: Doc Brown dies many decades before McFly receives the letter. Hence, not only do Doc and Marty fail to share co-presence as they attain common knowledge of the content, they do not even share ``co-existence''! Nonetheless, there certainly is some flavor of common knowledge here, which is captured by our definition.} Thus, it turns out that co-presence, which is in many settings seen as not only sufficient but also necessary in order for common knowledge to dynamically arise, is not strictly required. As we show in \cref{sec:cooccurrence}, co-presence can be relaxed to what we term \emph{co-occurrence}---the property of two events always (i.e., in any given history of the world) either both occurring, possibly at different times, or neither ever occurring. We formally define co-occurrence of events, and use it to define a necessary and sufficient condition for common knowledge to arise under our definition. Co-occurrence is not affected by timing frictions. Hence, using it to characterize common knowledge (under our definition) further showcases the robustness of our notion of common knowledge to timing frictions, regardless of their intensity.

{
\advance\leftmargini -.25em
The point of departure of what has become known as the \emph{Wilson Doctrine} is the famous excerpt from \citet[the emphasis is ours]{Wilson87}:
\begin{quote}\small
Game theory has a great advantage in explicitly analyzing the consequences of \textbf{trading rules that presumably are really common knowledge}; it is deficient to the extent it assumes other features to be common knowledge, such as one agent’s probability assessment about another’s preferences or information\ldots
\end{quote}
Indeed, many economic and game theoretic analyses make use of the idea that once the rules of some mechanism (or game, or contract) have been announced, they are common knowledge. A closer inspection reveals that in realistic settings this might never be the case \textit{vis a vis} the traditional definition of common knowledge, as even slight timing frictions foil the ability to harness common knowledge in the analysis. Under our new definition, common knowledge in such circumstances is regained, resolving a paradox that has baffled economists and computer scientists for four decades, and rendering \citeauthor{Wilson87}'s ``presumption'' (to use his own word) of common knowledge of the rules---and with it, all of the consequences that it entails---a precise mathematical truth.}

Common knowledge indeed has many appealing implications (see, e.g., \citealp{Aumann76,MilgromS82,BrandenburgerD87,AumannB95,Chwe99}) but the Halpern--Moses Paradox renders it impoverished or vacuous in various settings under its traditional definition. Under our new definition, common knowledge becomes non-vacuous in many of these settings. Despite our definition being more permissive, in \cref{using} we demonstrate that it is still powerful enough for deriving appealing implications traditionally associated with common knowledge, and in particular such implications that are known not to follow from any finite level of nested knowledge. We provide an agreement theorem (in the spirit of \citealp{Aumann76}) for our definition, and apply it to recover agreement on posteriors in the above-discussed \citet{GeanakoplosP82} setting with added timing frictions.

Common knowledge under our definition is not only a more permissive and more realistic sufficient condition for many economic implications to hold true, but it also characterizes equilibrium behavior in many situations. To see this, in the spirit of various ``coordinated attack'' games involving generals \citep[e.g.,][]{Rubinstein89,MorrisS97}, imagine two generals at two separate camps, who wish to simultaneously attack an enemy city at sundown if conditions are ripe. In order to attack, each general must prepare for attacking at least one hour in advance. Preparing without eventually attacking is costly, as is attacking alone or when conditions are not ripe. Around noon, one of the generals obtains information regarding whether or not conditions will be ripe for an attack at sundown. This general then sends an emissary to the other general (at the other camp) with this information. While the emissary is guaranteed to arrive at the other camp, this may take between one and two hours. Neither general has an accurate clock (each can only accurately recognize sunrise, noon, and sundown).

Preparing to attack once common knowledge, traditionally defined, of favorable conditions is attained misses out on the utility of attacking: Under the above conditions, this does not occur before sundown, at which time it is too late to prepare for attacking. However, it is an equilibrium for each general to prepare to attack once it is common knowledge, as defined in this paper, that conditions are favorable: The content of the message is common knowledge between the first general as she sends the emissary and the second general as she receives the message from the emissary, both of which are guaranteed to occur sufficiently early before sundown to allow proper preparation for attacking.

One might rightfully claim that an intricate epistemic analysis is hardly needed to identify that for the first general to prepare to attack once she sends the emissary, and for the second general to prepare to attack once the emissary arrives, constitutes an equilibrium. Imagine, though, analyzing the same game without being explicitly provided with details of how the generals learn about the world and communicate. In such a case it might be considerably less obvious whether an equilibrium in which both generals attack exists, and if so what it might be. Nonetheless, an epistemic characterization based on our notion of common knowledge holds true regardless of the specific technology by which the generals communicate and learn about whether conditions for attack are favorable. If the disutility from preparing to attack without eventually attacking, from attacking alone, and from attacking when conditions are not ripe is infinite, then for a large family of such technologies, an equilibrium in which both generals attack exists if and only if two events exist such that it is common knowledge (as defined in this paper) between \uline{the first general as the first event occurs} and \uline{the second general as the second event occurs} that the conditions will be ripe at sundown and that each of the two events occurs at least an hour before sundown.\footnote{If the above disutilities are finite, then an analogous characterization holds, replacing common knowledge with common $p$ belief, relaxed as we have relaxed common knowledge in this paper (i.e., common $p$ belief between \uline{the first general as the first event occurs} and \uline{the second general as the second event occurs}).} In \cref{game}, we characterize equilibrium behavior in this way for a rich family of coordinated-attack games; despite different games in this family exhibiting seemingly very disparate equilibria, these equilibria are all equally captured by our unified characterization thanks to our novel notion of common knowledge.

\subsection{Further Related Literature}\label{related}

The term  common knowledge was coined and explicitly defined by the philosopher David \citet{Lewis69}, and its study within economics was initiated by \citet{Aumann76,Aumann99}. \citeauthor{Aumann76}'s analysis, and indeed most early analyses of common knowledge in economics \citep[e.g.,][]{Milgrom81,MilgromS82,BrandenburgerD87,AumannB95}, are carried out in a static context, in which knowledge does not evolve over time. In such a setting, both the traditional definition of common knowledge and our definition coincide.  Using our notion of common knowledge, these analyses can be closely followed even in dynamic settings in which traditionally defined common knowledge does not arise.
Notable among these foundational economic papers on common knowledge is \citet{GeanakoplosP82}, who perform a dynamic rather than static analysis, albeit with no timing frictions (i.e., no temporal uncertainty on the timing of events). As we show, adding even slight timing frictions to their model unearths a vast separation between common knowledge under the traditional definition---which is no longer attained---and under our definition. \citet{SteinerS11} analyze a phenomenon similar to the Halpern--Moses Paradox in their study of when and how (probabilistic) common learning fails in a setting in which clocks are synchronized. We note that our notion of common knowledge does arise in all of the settings introduced in \citet{SteinerS11}, including those for which they prove that common learning (as traditionally defined) fails, so long as at least one message is sent and guaranteed to eventually be delivered (that is, if the distribution of message delays is, in their terms, ``not defective''). For discussions of the Halpern--Moses Paradox within economics, see also \citet{ArrowAK87,Aumann89,MorrisS97,Morris14}. Surveys on common knowledge in economics and game theory include \citet{BinmoreB88,Brandenburger92,Geanakoplos94,DekelG97}.

Within computer science, the importance of common knowledge to AI was shown by \citet{McCarthy78}, and to distributed computing by \citet{HalpernM90}. This latter paper, due to the nature of distributed computing, conducts its analysis in a dynamic setting, and among its results uncovered what would become known as the Halpern--Moses Problem/Paradox. See, e.g., \citet{ChandyM86,DworkM90,MosesT88,HalpernZ92,FaginHMV95,CastanedaGM22,CastanedaGM16} for uses of epistemic analysis in the study and design of (dynamic) distributed computing systems.
\citet{HalpernM90} also initiated a literature on variants of common knowledge defined for the goal of coordinating actions \citep[see, e.g.,][]{BenzviM13,GonczarowskiM13,FriedenbergH23}. This most recent paper, \citet{FriedenbergH23}, written concurrently and independently from our paper, defines a notion of ``action-stamped common belief.'' In a nutshell, it corresponds to ``$\alpha$, if and when it acts, believes that $\beta$, if and when it acts, believes that $\alpha$, if and when it acts, believes that\ldots.'' While related to our notion (in particular, their notion of reachability and ours are closely related from a mathematical perspective), there are qualitative differences; for example, their notion does hold in the setting of the email game of \citet{Rubinstein89} while, as already noted, ours does not. They define and apply their notion without relation to the Halpern--Moses Paradox, but rather to capture an epistemic notion underlying joint action by distributed agents that follow a shared plan (closest perhaps to our ``coordinated attack'' example above, but not in a game setting).

We define common knowledge of~$\phi$ in terms of  local events, i.e., ``$\alpha$, when her local event~$\psi_{\alpha}$ occurs, knows that $\beta$'s local event~$\psi_{\beta}$ occurs at some point and that whenever it does, $\beta$ knows that $\alpha$'s local event~$\psi_{\alpha}$ occurs at some point and that whenever it does, $\alpha$ knows \ldots\ that $\phi$.''
Our analysis naturally applies also to settings in which players are modeled using automata, and local events are captured by sets of local states of the respective automaton. Our definition of common knowledge of~$\phi$ then becomes ``$\alpha$, when in one of the states in~$\psi_{\alpha}$, knows that $\beta$ reaches one of the states in~$\psi_{\beta}$ at some point and that whenever it does, it knows that $\alpha$ reaches one of the states in~$\psi_{\alpha}$ at some point and that whenever it does, it knows \ldots\ that $\phi$.'' Modeling players as automata is customary in computer science, and to our knowledge was first employed in economics by \citet{Neyman85} and by \citet{Rubinstein86}.

\section{Preliminaries}\label{prelims}

\subsection{Model}\label{model}

\paragraph{Model Primitives}
We generally follow the notation of \citet[Section~7]{Geanakoplos94}.
We consider a finite set of \emph{players} $\allplayers$ and a set $\histories$ of objects that we call \emph{histories} (these are called \emph{states of nature} by \citealp{Geanakoplos94}), each of which is interpreted as an abstract representation of a possible complete history (for all time), or chronology, of the world. We model \emph{Time} as the natural\footnote{As in mathematical logic, we consider the natural numbers to contain zero.} numbers, $\Time\eqdef\NN$, and write $\points\eqdef\histories\times\Time$ for the set of all history-time pairs---to which we refer as \emph{points}---each uniquely identifying a specific time in a specific history. Each player has a \emph{knowledge partition} $\partition_i$ over $\points$; we write $(\history,t)\sim_i (\history',t')$, denoting that $(\history,t)$ and $(\history',t')$ are \emph{indistinguishable} in the eyes of player $i$ (when player $i$ is at either of these points), if $(\history,t),(\history',t')$ are in the same \emph{ken} (knowledge partition cell) of player $i$'s partition, i.e., if $(\history,t),(\history',t')\in\kappa$ for the same ken $\kappa\in\partition_i$. A major departure from the setting of \citet{Geanakoplos94} is that we \emph{do not} assume common knowledge of the global time (what \citealp{Geanakoplos94} refers to as all players being ``aware of the time''). That is, it is possible that $(\history,t)\sim_i (\history',t')$ while $t\ne t'$. As we will see, dispensing with this assumption has far-reaching implications.\footnote{One immediate implication is that unlike in the setting of \citet{Geanakoplos94}, a player $i$'s knowledge partition $\partition_i$ of $\points$ \emph{cannot} in general be represented as a sequence of ``time slice'' partitions $(P_{it})_{t\in\Time}$, each partitioning~$\histories$ ``at absolute time $t$.''
}
The set of players $\allplayers$, set of histories $\histories$ (and hence set of points $\points$), and the partitions $(\partition_i)_{i\in\allplayers}$ are implicit in the following definitions.

\paragraph{Events, Knowledge and Local Events} An \emph{event} is a subset of $\points$.\footnote{Another implication of not assuming common knowledge of the global time is that unlike in the analysis of \citet{Geanakoplos94}, many events of interest in our analysis are \emph{not} of the form $E\times\{t\}$ for some $E\subseteq\histories$ and $t\in T$. (Events of this form are called \emph{dated events} in \citealp{Geanakoplos94}.)}
For an event~$\phi$, we use $\lnot\phi$ to refer to the event ``$\phi$ does \emph{not} hold.'' Formally, we define $\lnot\phi\eqdef\points\setminus\phi$.
For an event~$\psi$ and an event $\phi$, we use $(\psi\rightarrow\phi)$ to refer to the event ``if~$\psi$ holds, \emph{then} so does also~$\phi$.'' Formally, we define $(\psi\rightarrow\phi)\eqdef
\lnot
(\psi\setminus\phi)$.

For a player $i\in\allplayers$ and an event $\phi\subseteq\points$, we denote by $K_i\phi$ the event ``$i$ \emph{knows} that $\phi$ is occurring,'' formally defined as
\[K_i\phi~\eqdef~\bigcup\,\{\kappa\in\partition_i \mid \kappa\subseteq\phi\}.\]
I.e., $(\history,t)\in K_i\phi$ if the ken of $(\history,t)$ in player $i$'s partition $\partition_i$ is wholly contained in~$\phi$. Since $K_i\phi$ is itself an event, this definition of knowledge makes nested knowledge events such as $K_j K_i\phi$, etc., well defined. Note that by definition, $K_i\phi\subseteq\phi$, that is, whenever $i$ knows that $\phi$ is occurring, $\phi$ is in fact indeed occurring. This is often referred to as the ``Knowledge Property'' (as it distinguishes knowledge from belief) or the ``Truth Axiom'' (only true things are known). An event $\phi$ is said to be \emph{local} to player $i\in\allplayers$, or \emph{$i$-local}, if the reverse implication holds as well, i.e., if $\phi=K_i\phi$ (\citealp{Geanakoplos94} calls such events ``self-evident 
to $i$''). That is, if additionally, whenever $\phi$ occurs, $i$ knows that $\phi$ is in fact occurring. We note that an event $\phi$ is $i$-local if and only if it is a union of kens of player $i$'s partition~$\partition_i$. Consequently, $\phi$ is $i$-local if and only if $\lnot\phi$ is $i$-local.

\paragraph{Traditionally Defined, ``Instantaneous'' Common Knowledge}
Let $\players\subseteq\allplayers$ be a set of players and let $\phi$ be an event. We define $E_\players\phi$, the event ``\emph{everyone} in $\players$ knows that $\phi$ is occurring,'' as $\bigcap_{i\in\players}K_i\phi$. For each $m\in\NN$ we define $E^m_\players\phi$ to be $m$-fold composition of $E_\players$ applied to $\phi$, that is, the event ``everyone in $\players$ knows that \ldots\ ($m$ times in total) \ldots\ everyone in $\players$ knows that $\phi$ is occurring.'' We then define $C_\players\phi$, the event ``it is \emph{common knowledge} (traditionally defined) among the players in~$\players$ that $\phi$ is occurring,'' as 
\[C_\players\phi\,~\eqdef~\, \bigcap_{m=1}^\infty E^m_\players\phi.\] We note that $C_\players\phi$ is $i$-local for every $i\in\players$, i.e., $K_i C_\players\phi=C_\players\phi$ for every $i\in\players$ and every event $\phi$.\footnote{\citet{MondererS89} call an event that is local to every player ``evident knowledge,'' while  \citet{Geanakoplos94} calls such an event a ``public event.''} In fact, it is well known that $C_\players\phi$ is the largest event contained in $\phi$ that is $i$-local for every $i\in\players$.

\subsection{The Halpern--Moses Paradox}\label{hm-paradox}

Let $\phi$ be an event that in some history $\history$ is commonly known at some time $t'>0$ but not at time $t'\!-\!1$. That is, $(\history,t')\in C_\players\phi$, while $(\history,t'\!-\!1)\notin C_\players\phi$. Since $C_\players\phi$ is $i$-local for all $i\in\players$, this means that for each $i\in\players$ we have $(\history,t')\in K_i C_\players\phi$ but $(\history,t'\!-\!1)\notin K_i C_\players\phi$. That is, 
for new common knowledge to arise, the knowledge \emph{that common knowledge has arisen} must be obtained simultaneously by each and every player $i\in\players$. Since in many natural settings, timing frictions prevent such simultaneity, the conclusion is that in such settings, new common knowledge cannot arise. That is, a fact that is not common knowledge at time $t\!=\!0$ never becomes common knowledge. Note that in real life, true simultaneity of perception, as well as certainty in that simultaneity, is never guaranteed. Taking the granularity of time to be sufficiently fine to reveal this, e.g., on the order of milliseconds, it follows that \emph{in reality, new common knowledge never arises.} This is the essence of the Halpern--Moses Paradox.

As a simple example of this paradox within our model, consider the following example adapted from \citet{HalpernM90}.\footnote{A similar example was utilized by \citet{SteinerS11} in their study of when and how (probabilistic) common learning fails in a setting in which there is common knowledge of the global time.} There are two players $\alpha$ and~$\beta$, whose watches might not be perfectly synchronized. Imagine player $\alpha$ sending a message to player $\beta$ at some time $t$ (this ``actual true time'' $t$ might not be known to any of the players), and that the message is guaranteed to arrive either one or two time units later.
In addition, suppose for simplicity that the two players have no interaction except for this message.
Consider two histories. In the first history $\history_1$, both players have accurate clocks, $\alpha$ sends the message at time~$t$, and the message is delivered at time~$t\!+\!1$. In the second history $\history_2$, player $\alpha$'s clock accurately shows the true time however $\beta$'s clock runs one unit slower than $\alpha$'s. In this history, $\alpha$ sends the same message at time~$t$, and the message is delivered at time~$t\!+\!2$, which $\beta$ sees as $t\!+\!1$ on her clock. 
Assume by way of contradiction that the content of the message becomes common knowledge at some time $t'>t$ in $\history_1$. In particular, in $\history_1$ both players know at time~$t'$, but not at time~$t'\!-\!1$, that the content is common knowledge. Since $\alpha$'s view is the same in both histories (that is, she finds both histories indistinguishable), she also knows at time~$t'$ in $\history_2$ that the content is common knowledge. However, $\beta$'s view is shifted by one time unit between the two histories, and for her $\history_2$ at (true) time $t'$ is indistinguishable from $\history_1$ at (true) time $t'\!-\!1$. Therefore, $\beta$ does not know at time~$t'$ in $\history_2$ that the content is common knowledge---a contradiction, since at time~$t'$ in $\history_2$ it must be that either both players or no players know that the content is common knowledge. 

This example is just the tip of an iceberg. Indeed, consider any back-and-forth correspondence initiated by a message from player $\alpha$ at a time unknown to player~$\beta$, and assume that each message (from $\alpha$ to $\beta$ or vice versa) is guaranteed to arrive either one or two time units after it is sent. Consider two histories that are identical except that in one, all messages sent by~$\alpha$ take one time unit to arrive and all messages sent by $\beta$ take two time units to arrive, and in the other, all messages sent by $\alpha$ take two time units to arrive and all messages sent by~$\beta$ take one time unit to arrive. Observe that both histories are indistinguishable in the eyes of player $\alpha$, and indistinguishable---however with a shift of one time unit (as in the simpler example above)---in the eyes of player $\beta$. By virtually the same argument as above, in neither of these two histories can any fact that is not commonly known at the time of the sending of the first message ever become commonly known at any later time. 

More generally, we use the following setting as a fairly general running example throughout this paper. This setting is inspired by the \emph{DCMAK} (\emph{Dynamically Consistent Model of Action and Knowledge}) setting of \citet[Section~7]{Geanakoplos94}, and can be thought of as a two-player version thereof, generalized to the model introduced in \cref{model} above, and with added timing frictions.
In this setting, which we call a Bilateral DCMAK with Timing Frictions (or BDTF, for short) and which we define formally in \cref{dcmak} below, there are two players: $\alpha$ and $\beta$. Each player $i$
does not know the objective, absolute time, but rather only knows her \emph{subjective time} $t-z^i$, where $z^i$ is a history-dependent number that we call the \emph{birth date} of player $i$.
Each player at each step sends a signal to the other player that is specified by the sender's \emph{signalling function}; this signal is received with a delay of $d^i$ where $i$ is the receiver. (I.e., a signal sent at absolute time $t$ is received by~$i$ at absolute time $t+d^i$; the delay varies across histories but is fixed throughout a given history.) Each player's initial knowledge is described by some initial partition and it evolves dynamically based on the signals that she receives.
Finally, we model each player $i$ as conscious (i.e., aware of her subjective time and able to send and receive signals) only starting at her subjective time~$0$ (i.e., from time~$t=z^i$ onward).\footnote{If any player $i$ were always conscious starting precisely at $t=0$, then this player could figure out her birth date $z^i$ by checking the subjective time at her first instant of consciousness, and the players could together figure out the difference $z^i-z^j$.
In the settings that we analyze, in contrast, this difference never becomes known.}
This last modeling feature could have been dispensed with had time been modeled as the integers rather than as the natural numbers.

\begin{definition}[Bilateral DCMAK with Timing Frictions (BDTF)]\label{dcmak}
In a \emph{Bilateral Dynamically Consistent Model of Action and Knowledge with Timing Frictions (henceforth, BDTF)}, there are two players $\allplayers\eqdef\{\alpha,\beta\}$. There is a set $O$ called the set of \emph{initial conditions}, and the set of histories is
$\histories\eqdef O\times\NN^2\times\bigl(\NN\setminus\{0\}\bigr)^2$.
For a history $\history\in\histories$, we write $\history=(o_\history,z^\alpha_\history,z^\beta_\history,d^\alpha_\history,d^\beta_\history)$.
For $i\in\allplayers$, we call $z^i_\history$ and $d^i_\history$ player~$i$'s \emph{birt hdate} and  \emph{delay} (in~$\history$), respectively. 
Each player $i\in\allplayers$ has a \emph{signal space} $S_i$ and a \emph{signalling function} $f_i:\points\rightarrow S_i$ that is measurable with respect to the partition~$\partition_i$ and which satisfies for every $\history\in\histories$ and $t<z^i_\history$ that $f_i(\history,t)=\emptyset$ (i.e., $i$ sends no signals before $i$ is conscious). Slightly abusing notation, we also write $f_i(\history,t)=\emptyset$ for every $\history\in\histories$ and $t<0$. For each player $i\in\allplayers$ there is a partition $P^0_i$ over $O$, called $i$'s \emph{initial partition}. For every player $i\in\allplayers$ and every pair  of points $(\history,t),(\history',t')\in\points$, the partition $\partition_i$ of player $i$ over $\points$ satisfies that $(\history,t)\sim_i(\history',t')$ if and only if either of the following holds:
\begin{itemize}
\item 
$t-z^i_\history<0$, $t'-z^i_{\history'}<0$, and $o_\history$ and $o_{\history'}$ are in the same partition cell of the partition~$P^0_i$ of~$O$. (Player~$i$ cannot distinguish between points at which she is not conscious except based on the initial information.)
\item
$t-z^i_\history=t'-z^i_{\history'}=0$ and both $o_\history$ and $o_{\history'}$ are in the same partition cell of the partition~$P^0_i$ of~$O$.
\item 
$t-z^i_\history=t'-z^i_{\history'}>0$ and both of the following hold:
\begin{itemize}
    \item $(\history,t-1)\sim_i(\history',t'-1)$ and
    \item $f_j(\history,t-d^i_\history)=f_j(\history',t'-d^i_{\history'})$, where $\{j\}=\allplayers\setminus\{i\}$.
\end{itemize}
\end{itemize}
\end{definition}

Arguments similar to the one preceding the introduction of BDTF in \Cref{hm-paradox} give rise to the following \lcnamecref{no-ck}.

\begin{theorem}[No New Common Knowledge with Timing Frictions {\citep[see also][]{HalpernM90,SteinerS11}}]\label{no-ck}
Consider any BDTF. Let $\history$ be a history such that $d^\alpha_\history+d^\beta_\history>2$. If a fact is not common knowledge (as traditionally defined) between $\alpha$ and $\beta$ at time $0$ in $\history$, then it is never common knowledge between them at any later time in $\history$.
\end{theorem}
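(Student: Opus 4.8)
The plan is to combine the standard characterization of common knowledge with a pair of explicitly constructed time-shifted histories. Recall from the excerpt that $C_{\{\alpha,\beta\}}\phi$ is $i$-local for each $i\in\{\alpha,\beta\}$; since an $i$-local event $E$ satisfies $E=K_iE$ and is therefore a union of kens of $\partition_i$, it is closed under $\sim_i$ (if $p\in E$ and $p\sim_i q$, then $q\in E$). Hence, whenever $(\history,t')\in C_{\{\alpha,\beta\}}\phi$ and there is a finite chain $(\history,t')\sim_{i_1}p_1\sim_{i_2}\cdots\sim_{i_n}(\history,t'-1)$ with every $i_k\in\{\alpha,\beta\}$, we also get $(\history,t'-1)\in C_{\{\alpha,\beta\}}\phi$. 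So it suffices to exhibit, for every history $\history$ with $d^\alpha_\history+d^\beta_\history>2$ and every $t'\ge 1$, such a chain; iterating this yields $(\history,t')\in C_{\{\alpha,\beta\}}\phi\Rightarrow(\history,0)\in C_{\{\alpha,\beta\}}\phi$, whose contrapositive is the theorem.

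Since $d^\alpha_\history,d^\beta_\history\ge 1$ are integers with sum exceeding $2$, at least one of them is at least $2$; assume $d^\alpha_\history\ge 2$ (the case $d^\beta_\history\ge 2$ is symmetric, swapping the roles of $\alpha$ and $\beta$). Define
\[
\history'\;\eqdef\;(o_\history,\;z^\alpha_\history,\;z^\beta_\history+1,\;d^\alpha_\history-1,\;d^\beta_\history+1)\in\histories ,
\]
which is a legal history precisely because $d^\alpha_\history-1\ge 1$. Informally, $\history'$ slides $\beta$'s entire timeline one step into the future: $\beta$ is born one step later, $\alpha$'s delay shrinks by one so that $\alpha$ still receives $\beta$'s (now later) signals at the same absolute times, and $\beta$'s delay grows by one so that $\beta$ still receives $\alpha$'s (unchanged) signals at the same $\beta$-subjective times. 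The heart of the argument is the claim that for every $t\ge 0$,
\[
(\history,t)\sim_\alpha(\history',t)\qquad\text{and}\qquad(\history,t)\sim_\beta(\history',t+1).
\]
Granting this, the required chain for any $t'\ge 1$ is just $(\history,t')\sim_\alpha(\history',t')\sim_\beta(\history,t'-1)$, whose second link is the $\beta$-part of the claim at $t=t'-1$ read backwards.

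I would prove the displayed claim by simultaneous strong induction on $t$, establishing both parts together. The base case, and more generally every point at which the relevant player is not yet conscious or is at subjective time $0$, is immediate from the BDTF partition definition, since $\history$ and $\history'$ share the initial condition $o_\history$ and $\alpha$'s birth date, and the $\beta$-subjective times match under the one-step shift by construction. For the inductive step of the $\alpha$-part at a conscious point $t>z^\alpha_\history$, the partition definition requires $(\history,t-1)\sim_\alpha(\history',t-1)$ (the $\alpha$-part at $t-1$) and $f_\beta(\history,t-d^\alpha_\history)=f_\beta(\history',t-d^\alpha_{\history'})$; as $t-d^\alpha_{\history'}=(t-d^\alpha_\history)+1$, this equality is exactly the $\beta$-part of the claim at time $t-d^\alpha_\history$, combined with $f_\beta$ being $\partition_\beta$-measurable. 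Symmetrically, the inductive step of the $\beta$-part at $t+1$ requires $(\history,t-1)\sim_\beta(\history',t)$ (the $\beta$-part at $t-1$) and $f_\alpha(\history,t-d^\beta_\history)=f_\alpha(\history',t-d^\beta_\history)$ (note $(t+1)-d^\beta_{\history'}=t-d^\beta_\history$, the same absolute time), which follows from the $\alpha$-part of the claim at time $t-d^\beta_\history$ and $\partition_\alpha$-measurability of $f_\alpha$. Since $d^\alpha_\history,d^\beta_\history\ge 1$, every appeal is to a strictly smaller time index, so the induction is well founded; the only fussiness is with negative time indices, where the convention that $f_i$ vanishes before consciousness makes the needed equalities trivial.

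The main obstacle, and the reason this needs a genuine induction rather than a one-line unwinding, is the circular interdependence of the BDTF primitives: $\partition_\alpha$ is defined through $f_\beta$, which is $\partition_\beta$-measurable, whereas $\partition_\beta$ is defined through $f_\alpha$, which is $\partition_\alpha$-measurable. One cannot simply declare that $\history$ and $\history'$ ``look identical'' to the players; the indistinguishabilities have to be bootstrapped, with $\alpha$-indistinguishability at one time feeding $\beta$-indistinguishability a few steps later and vice versa. Positivity of the delays is what makes the recursion well founded, and the hypothesis $d^\alpha_\history+d^\beta_\history>2$ provides exactly the slack needed to carry out the one-step shift while keeping every delay in $\NN\setminus\{0\}$.
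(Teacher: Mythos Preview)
Your proof is correct and follows essentially the same approach as the paper: both construct the same time-shifted auxiliary history $\history'$ and use the same two-link indistinguishability chain to step common knowledge back (equivalently, non-common-knowledge forward) by one time unit. The paper simply asserts the indistinguishabilities ``by definition of $\history'$,'' whereas you rightly note that the circular dependency between $\partition_\alpha$ and $\partition_\beta$ through the signalling functions makes a simultaneous induction on $t$ the honest way to verify them; your added rigor here is welcome.
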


We note that the condition $d^\alpha_\history+d^\beta_\history>2$ in \cref{no-ck} is a technical condition that has to do with the discrete modeling of time. The \emph{round-trip delay} $d^\alpha_\history+d^\beta_\history$ might become known by both players during the history $\history$ (see the proof of \cref{get-ck} below). The technical condition in \cref{no-ck} ensures that the round-trip delay does not uniquely identify the two individual delays (since $1$ is the minimum possible delay). Indeed, if either of the individual delays becomes known, new common knowledge can arise. This technical condition could have been dispensed with had time been modeled in such a way that there were no smallest possible delay.

\begin{proof}[Proof of \cref{no-ck}]
Let $\history$ be a history such that $d^\alpha_\history+d^\beta_\history>2$. To prove the claim, it suffices to show that, for all times~$t$ and events~$\phi$, if $(\history,t)\in\lnot C_\allplayers\phi$ then $(\history,t\!+\!1)\in\lnot C_\allplayers\phi$. 
Fix $t$ and~$\phi$, and suppose that $(\history,t)\in\lnot C_\allplayers\phi$. 
Note that since $C_\allplayers\phi$ is $i$-local for every $i\in\allplayers$, so is $\lnot C_\allplayers\phi$.

Since $d^\alpha_\history+d^\beta_\history>2$, there exists $j\in\allplayers$ such that $d^j_\history>1$. Let $j$ be such a player and let $i$ be the other player. Consider the history $\history'$ defined such that $(o_{\history'},z^i_{\history'},z^j_{\history'},d^i_{\history'},d^j_{\history'})=(o_\history,z^i_\history\!+\!1,z^j_\history,d^i_\history\!+\!1,d^j_\history\!-\!1)$. By definition of $\history'$, we have that $(\history,t)\sim_i(\history',t\!+\!1)$ and that $(\history,t\!+\!1)\sim_j(\history',t\!+\!1)$. Since $(\history,t)\in\lnot C_\allplayers\phi$ and since $\lnot C_\allplayers\phi$ is $i$-local, by the former we have that $(\history',t\!+\!1)\in\lnot C_\allplayers\phi$ and since $\lnot C_\allplayers\phi$ is $j$-local, by the latter we then have that $(\history,t\!+\!1)\in\lnot C_\allplayers\phi$, as required.
\end{proof}

We defined BDTF as a reasonably general model so that the positive results that we prove about it in later sections are also meaningful. Since \cref{no-ck} is a negative result, we note that it holds even in much more restrictive models (in which much more is common knowledge to begin with). Specifically, even if, for some integer $D>2$ we were to restrict any BDTF only to ``single dimensional'' timing frictions satisfying $z^\alpha_\history=0$, $z^\beta_\history=d^\beta_\history$,  and $d^\alpha_\history=D-d^\beta_\history$ (in which case, for example, the value of~$D$ and the correctness of all of these equations would be common knowledge to begin with; furthermore, in this case if $O$ is finite, so is $\histories$), \cref{no-ck} would still hold.\footnote{The proof is similar, albeit if the history $\history'$ constructed in the proof of \cref{no-ck} does not satisfy the restriction for any choice of $i,j$, the proof instead turns to the history $\history''$ defined using $(o_\history,z^i_\history,z^j_\history\!-\!1,d^i_\history\!+\!1,d^j_\history\!-\!1)$ for suitable $i,j$ and proceeds by noting that that $(\history,t)\sim_i(\history'',t)$ and $(\history'',t)\sim_j(\history,t\!+\!1)$.}

As a special case, \cref{no-ck} precludes a guarantee of attaining common knowledge of posteriors in the setting of \citet{GeanakoplosP82} (in which signals containing updated posteriors are sent back and forth) if even slight timing frictions are introduced (and even if much more is transmitted at any round than merely updated posteriors).

\section{Overcoming the Tyranny of the Clock}\label{attaining}

Having reviewed the Halpern--Moses Paradox, in this section we define our notion of common knowledge and derive necessary and sufficient conditions for it to arise. In particular, we show that it arises even in settings that exhibit the Halpern--Moses Paradox, such as the \citet{GeanakoplosP82} setting with added timing frictions.

\subsection{Common Knowledge, Redefined}

For an event $\phi$, we define $\histories(\phi)\eqdef\bigl\{\history\in\histories~\big|~(\{\history\}\times\Time)\cap\phi\ne\emptyset\bigr\}$. I.e., $\histories(\phi)$ is the set of histories during which $\phi$ occurs (at least once).
For an event $\phi$, we use $\always\phi$ to refer to the event ``\emph{throughout} this entire history, $\phi$ holds.''
Formally, we define $\always\phi\eqdef\bigcup\bigl\{\{\history\}\times\Time~\big|~\{\history\}\times\Time\subseteq\phi\bigr\}$. 
A second temporal operator that will serve us is the dual of $\always$, denoted by~$\sometime$, where $\sometime\phi$ denotes the time-invariant event ``at \emph{some time} in the current history, $\phi$ holds.'' Formally, $\sometime\phi\eqdef \histories(\phi)\times\Time$.
We say that an event $\phi$ is \emph{time-invariant} if $\phi=\sometime\phi$ (equivalently, if $\phi=\always\phi$). Note that $\always\phi$ and $\sometime\phi$ are time-invariant for every event $\phi$.

For a player~$i$, an $i$-local event $\psi_i$, and an event $\phi$,
we denote by $K_{i\subat\psi_i}\phi$ the time-invariant event ``$\psi_i$ holds at some time during this history, and whenever it does, player~$i$ knows that $\phi$.'' Formally,
$K_{i\subat\psi_i}\phi\,\eqdef\,
\sometime\psi_i\cap\always(\psi_i\rightarrow K_i\phi$). 
In what follows, it will sometimes be useful to restrict attention to events $\psi_i$ that are furthermore \emph{singular}, i.e., occur at most once throughout any given history. 
In this special case, $
\always(\psi_i\rightarrow K_i\phi)$ means ``\emph{at the time at which} $\psi_i$ holds in the current history, $K_i\phi$ holds as well.''

Let $\players\subseteq\allplayers$ be a set of players. An \emph{$\players$-profile} is a tuple $\bar{\psi}=(\psi_i)_{i\in\players}$ such that $\psi_i$ is an $i$-local event for every $i\in\players$. Let $\bar{\psi}$ be such a profile and let~$\phi$ be an event. We define $E_{\players\subat\bar{\psi}}\phi\eqdef\bigcap_{i\in\players}K_{i\subat\psi_i}\phi$, meaning ``each $\psi_i$ holds at some time during this history, and whenever one of them does, the player $i$ in question knows that $\phi$.'' For every $m\in\NN$ we use $E^m_{\players\subat\bar{\psi}}$ to denote the $m$-fold composition of $E_{\players\subat\bar{\psi}}$.

\begin{definition}[Common Knowledge]\label{ck}
Let $\players$ be a set of players, let $\bar{\psi}$ be an $\players$-profile, and let~$\phi$ be an event.
We define $C_{\players\subat\bar{\psi}}\phi$ as the time-invariant event $\bigcap_{m=1}^{\infty}E^m_{\players\subat\bar{\psi}}\phi$.
\end{definition}

If the time-invariant event $C_{\players\subat\bar{\psi}}\phi$ holds at (every point throughout) a certain history, then, denoting $\players=\{i_1,\ldots,i_n\}$, we say that in that history the event~$\phi$ is \emph{common knowledge} between $i_1@\psi_{i_1}$ (that is, $i_1$ as $\psi_{i_1}$ holds), $i_2@\psi_{i_2}$ (that is, $i_2$ as $\psi_{i_2}$ holds), \ldots, and $i_n@\psi_{i_n}$ (that is, $i_n$ as $\psi_{i_n}$ holds). 

We emphasize that $(\history,t)\in C_{\players\subat\bar{\psi}}\phi$ does not mean that $(\history,t)\in\phi$. (If this were the case, then since $C_{\players\subat\bar{\psi}}\phi$ is time-invariant, the event $\phi$ would need to hold throughout the entire history~$\history$.) Rather, $(\history,t)\in C_{\players\subat\bar{\psi}}\phi$ means that for every $i\in\players$, the $i$-local event $\psi_{i}$ holds at some point in $\history$ (possibly at a time other than $t$), and whenever it does, the event $\phi$ also holds, and $i$ knows that $\phi$ holds, and $i$ knows that whenever any~$\psi_j$ holds, $\phi$ also holds, and $j$ knows that $\phi$ holds, etc. It follows that, for each $i\in\players$, the precise event in which $i$ participates in this joint state of common knowledge is $C_{\players\subat\bar{\psi}}^i\phi\eqdef\psi_i\cap C_{\players\subat\bar{\psi}}\phi$. 

\begin{lemma}\label{cki}
Let $\players$ be a set of players, let $\bar{\psi}$ be an $\players$-profile, and let~$\phi$ be an event. For every player $i\in\players$, we have that (1)~$\histories(C_{\players\subat\bar{\psi}}^i\phi)=\histories(C_{\players\subat\bar{\psi}}\phi)$, and (2)~$C_{\players\subat\bar{\psi}}^i\phi\subseteq\phi$.
\end{lemma}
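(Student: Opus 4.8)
The plan is to unwind the definitions and exploit two structural facts that are immediate from \cref{ck}: first, $C_{\players\subat\bar{\psi}}\phi$ is time-invariant (so it holds either everywhere or nowhere along any given history); and second, by construction $C_{\players\subat\bar{\psi}}\phi\subseteq E_{\players\subat\bar{\psi}}\phi\subseteq K_{i\subat\psi_i}\phi=\sometime\psi_i\cap\always(\psi_i\rightarrow K_i\phi)$ for every $i\in\players$. Everything follows by combining these with the Truth Axiom $K_i\phi\subseteq\phi$; there is no genuinely hard step, the only subtlety being to keep track of \emph{at which times} the various component events hold.

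For part~(2), I would argue that $C^i_{\players\subat\bar{\psi}}\phi=\psi_i\cap C_{\players\subat\bar{\psi}}\phi\subseteq\psi_i\cap\always(\psi_i\rightarrow K_i\phi)$ using the containment above. Now fix $(\history,t)$ in this intersection. Membership in $\always(\psi_i\rightarrow K_i\phi)$ means $\{\history\}\times\Time\subseteq(\psi_i\rightarrow K_i\phi)=\lnot(\psi_i\setminus K_i\phi)$; evaluating this at the specific time $t$ and using that $(\history,t)\in\psi_i$ forces $(\history,t)\in K_i\phi$. Applying $K_i\phi\subseteq\phi$ then gives $(\history,t)\in\phi$. (In fact this shows the slightly stronger $C^i_{\players\subat\bar{\psi}}\phi\subseteq K_i\phi$, which may be worth recording.)

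For part~(1), the inclusion $\histories(C^i_{\players\subat\bar{\psi}}\phi)\subseteq\histories(C_{\players\subat\bar{\psi}}\phi)$ is immediate from $C^i_{\players\subat\bar{\psi}}\phi\subseteq C_{\players\subat\bar{\psi}}\phi$ and the monotonicity of $\histories(\cdot)$. For the reverse inclusion, take $\history\in\histories(C_{\players\subat\bar{\psi}}\phi)$, so $(\history,t)\in C_{\players\subat\bar{\psi}}\phi$ for some $t$; by time-invariance the entire fiber $\{\history\}\times\Time$ lies in $C_{\players\subat\bar{\psi}}\phi$, hence in $K_{i\subat\psi_i}\phi\subseteq\sometime\psi_i=\histories(\psi_i)\times\Time$. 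Therefore $\history\in\histories(\psi_i)$, i.e.\ $(\history,s)\in\psi_i$ for some time $s$; since also $(\history,s)\in C_{\players\subat\bar{\psi}}\phi$, we get $(\history,s)\in\psi_i\cap C_{\players\subat\bar{\psi}}\phi=C^i_{\players\subat\bar{\psi}}\phi$, whence $\history\in\histories(C^i_{\players\subat\bar{\psi}}\phi)$. I would also note the degenerate case $\histories(C_{\players\subat\bar{\psi}}\phi)=\emptyset$, where both sides of~(1) are empty and~(2) holds vacuously, so no separate treatment is needed; the argument above already covers it. The main ``obstacle'' is thus purely bookkeeping: making sure the time $s$ at which $\psi_i$ occurs is recognized as also lying in the (time-invariant) event $C_{\players\subat\bar{\psi}}\phi$.
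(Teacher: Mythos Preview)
Your proof is correct and follows essentially the same approach as the paper's: both parts rely on the time-invariance of $C_{\players\subat\bar{\psi}}\phi$ together with the containment $C_{\players\subat\bar{\psi}}\phi\subseteq K_{i\subat\psi_i}\phi=\sometime\psi_i\cap\always(\psi_i\rightarrow K_i\phi)$. The only cosmetic difference is that the paper compresses part~(1) into a single chain of equalities using the identity $\histories(A\cap B)=\histories(A)\cap\histories(B)$ for time-invariant~$B$, whereas you unwind the two inclusions pointwise.
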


\begin{proof}
For the first part, observe that\[\histories(C_{\players\subat\bar{\psi}}^i\phi)=\histories(\psi_i\cap C_{\players\subat\bar{\psi}}\phi)=\histories(\psi_i)\cap \histories(C_{\players\subat\bar{\psi}}\phi)=\histories(C_{\players\subat\bar{\psi}}\phi),\]
where the second equality follows from the fact that $C_{\players\subat\bar{\psi}}\phi$ is time-invariant and the last equality is since $C_{\players\subat\bar{\psi}}\phi\subseteq K_{i\subat\psi_i}\phi\subseteq\sometime\psi_i$. For the second part, we have that
\[C_{\players\subat\bar{\psi}}^i\phi=\psi_i\cap C_{\players\subat\bar{\psi}}\phi\subseteq \psi_i\cap K_{i\subat\psi_i}\phi\subseteq \psi_i\cap\always(\psi_i\rightarrow K_i\phi)\subseteq K_i\phi\subseteq\phi.\qedhere
\]
\end{proof}

A question that we are sometimes asked is ``When does $\phi$ become common knowledge under your definition?'' There are two perspectives one might take here: One, which is easier to see from \cref{ck}, is that common knowledge of $\phi$ between $i_1@\psi_{i_1},\ldots,i_n@\psi_{i_n}$ is a time-invariant event that either holds in a given history---i.e., whenever any $\psi_{i_j}$ holds throughout this history, $i_j$ knows the relevant facts---or does not hold in that history. Another perspective one might take here is that common knowledge arises at different times for different players, i.e., common knowledge holds for each player $i_j$ whenever $C_{\players\subat\bar{\psi}}^{i_j}\phi$ holds. Recall that this happens in every history in which the time-invariant event $C_{\players\subat\bar{\psi}}\phi$ holds, at each instant at which $\psi_{i_j}$ holds. The fact that these events, and the times at which they hold in any given history, might differ across the various players is due to the asymmetry and non-simultaneity that our notion of common knowledge admits (which allow it to be attainable even in settings that exhibit the Halpern--Moses Paradox). This is in contrast with traditionally defined common knowledge, which arises among all players $i_j$ simultaneously and hence for each of them the event in which she participates is precisely the same: $C_\players\phi=K_{i_j}C_\players\phi$.\footnote{Instead of defining $C_{\players\subat\bar{\psi}}\phi$ and deriving the individualized events ${C_{\players\subat\bar{\psi}}^{i_1}\phi=\psi_{i_1}\cap C_{\players\subat\bar{\psi}}\phi},$ $\,\ldots\,,C_{\players\subat\bar{\psi}}^{i_n}\phi=\psi_{i_n}\cap C_{\players\subat\bar{\psi}}\phi$ from it, we alternatively could have directly defined these individualized events as our building blocks. One way to do that is in the spirit of \cref{ck-fixed-point} below: The tuple $(C_{\players\subat\bar{\psi}}^{i_1}\phi,\ldots,C_{\players\subat\bar{\psi}}^{i_n}\phi)$ can be defined as the greatest fixed point of the (vectorial) function $(\chi_1,\ldots,\chi_n)\mapsto\bigl(K_{i_1}(\psi_{i_1}\cap\phi\cap\bigcap_{i\ne i_1}(\sometime\psi_i\cap\always(\psi_i\rightarrow\chi_i))),\ldots,K_{i_n}(\psi_{i_n}\cap\phi\cap\bigcap_{i\ne i_n}(\sometime\psi_i\cap\always(\psi_i\rightarrow\chi_i)))\bigr)$. Equivalently, we could have defined each of these individualized events as a distinct infinite intersection of events in the spirit of \cref{ck}. While for some asymmetric variants of common knowledge \citep{GonczarowskiM13} we do not know of a way to avoid using one of these approaches, in our setting we are able to avoid using them, resulting in what we not only view as a technically simple definition, but also as a conceptually better one since using it, the phrase ``common knowledge holds/is attained'' also technically, and not only conceptually, refers to a single event rather than to multiple events. This also simplifies and clarifies the statement of some of our results, such as \cref{induction-rule} below.}

As we now show, much as traditionally defined common knowledge ($C_\players\phi$) is local to each player (so each player knows when traditionally defined common knowledge holds), each of the individualized events $C_{\players\subat\bar{\psi}}^i\phi=\psi_{i}\cap C_{\players\subat\bar{\psi}}\phi$ that we just discussed is local for its respective player~$i$ (so each player knows when her own individualized event holds). 

\begin{lemma}[Locality of Common Knowledge]\label{ck-local}
Let $\players$ be a set of players, let $\bar{\psi}$ be an $\players$-profile, and let~$\phi$ be an event.
For every player $i\in\players$, the event $C_{\players\subat\bar{\psi}}^i\phi=\psi_i\cap C_{\players\subat\bar{\psi}}\phi$ is $i$-local.
\end{lemma}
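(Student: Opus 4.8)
The plan is to show $K_i C_{\players\subat\bar{\psi}}^i\phi = C_{\players\subat\bar{\psi}}^i\phi$; since the containment $K_i C_{\players\subat\bar{\psi}}^i\phi \subseteq C_{\players\subat\bar{\psi}}^i\phi$ is automatic from the Truth Axiom, the content is the reverse containment, i.e., that $C_{\players\subat\bar{\psi}}^i\phi$ is \emph{self-evident} to $i$. Now $C_{\players\subat\bar{\psi}}^i\phi = \psi_i \cap C_{\players\subat\bar{\psi}}\phi$, so I would try to establish that both factors are ``known by $i$ whenever they co-occur.'' The event $\psi_i$ is $i$-local by hypothesis (it is an $i$-local event in the $\players$-profile), so $\psi_i = K_i\psi_i$. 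The crux is therefore to understand $i$'s knowledge of $C_{\players\subat\bar{\psi}}\phi$ at the instants where $\psi_i$ holds.

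The key observation is the following fixed-point / self-referential feature of the construction: the infinite intersection $C_{\players\subat\bar{\psi}}\phi = \bigcap_{m=1}^\infty E^m_{\players\subat\bar\psi}\phi$ satisfies $E_{\players\subat\bar\psi}(C_{\players\subat\bar{\psi}}\phi) = C_{\players\subat\bar{\psi}}\phi$ (the composition of $E_{\players\subat\bar\psi}$ with the intersection just reindexes), so in particular $C_{\players\subat\bar{\psi}}\phi \subseteq K_{i\subat\psi_i}(C_{\players\subat\bar{\psi}}\phi) = \sometime\psi_i \cap \always(\psi_i \to K_i C_{\players\subat\bar{\psi}}\phi)$. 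First I would record this and also recall (from \cref{cki} and its proof) that $C_{\players\subat\bar{\psi}}\phi$ is time-invariant and contained in $\sometime\psi_i$, so $\histories(C_{\players\subat\bar{\psi}}\phi) = \histories(C_{\players\subat\bar{\psi}}^i\phi)$. Then I would argue pointwise: suppose $(\history,t) \in \psi_i \cap C_{\players\subat\bar{\psi}}\phi$. Take any point $(\history',t') \sim_i (\history,t)$. Since $\psi_i$ is $i$-local, $(\history',t') \in \psi_i$. Since $(\history,t) \in C_{\players\subat\bar{\psi}}\phi \subseteq \always(\psi_i \to K_i C_{\players\subat\bar{\psi}}\phi)$ and $(\history,t)\in\psi_i$, we get $(\history,t) \in K_i C_{\players\subat\bar{\psi}}\phi$, hence $(\history',t') \in C_{\players\subat\bar{\psi}}\phi$ as well. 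Thus $(\history',t') \in \psi_i \cap C_{\players\subat\bar{\psi}}\phi = C_{\players\subat\bar{\psi}}^i\phi$, which shows the ken of $(\history,t)$ in $\partition_i$ lies in $C_{\players\subat\bar{\psi}}^i\phi$, i.e., $(\history,t) \in K_i C_{\players\subat\bar{\psi}}^i\phi$.

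The main obstacle — really the only subtlety — is getting the self-reference $E_{\players\subat\bar\psi}(C_{\players\subat\bar{\psi}}\phi)=C_{\players\subat\bar{\psi}}\phi$ cleanly, and in particular the direction $C_{\players\subat\bar{\psi}}\phi \subseteq K_i C_{\players\subat\bar{\psi}}\phi$ ``on $\psi_i$,'' since this is what powers the step where a point $i$-indistinguishable from $(\history,t)$ is pulled back into $C_{\players\subat\bar{\psi}}\phi$. This follows because $E^{m}_{\players\subat\bar\psi}\phi \supseteq E^{m+1}_{\players\subat\bar\psi}\phi$ for all $m$ (monotonicity of $K_i$ and $\always$, plus $K_i\phi\subseteq\phi$), so $C_{\players\subat\bar{\psi}}\phi = \bigcap_{m\ge 1} E^m_{\players\subat\bar\psi}\phi = \bigcap_{m\ge 1} E^{m+1}_{\players\subat\bar\psi}\phi = E_{\players\subat\bar\psi}\bigl(\bigcap_{m\ge1}E^m_{\players\subat\bar\psi}\phi\bigr) = E_{\players\subat\bar\psi}(C_{\players\subat\bar{\psi}}\phi)$, where the third equality uses that $K_i$ (and $\sometime$, $\always$, finite intersection) commute with intersections over a downward-directed family — I would either invoke a general ``$K_i$ commutes with arbitrary intersections'' fact (immediate from the definition $K_i\phi = \bigcup\{\kappa \in \partition_i : \kappa \subseteq \phi\}$) or just verify the one inclusion $C_{\players\subat\bar{\psi}}\phi \subseteq E_{\players\subat\bar\psi}(C_{\players\subat\bar{\psi}}\phi)$ directly, which is all I need. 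Everything else is a short unwinding of the definitions of $K_{i\subat\psi_i}$, $\sometime$, $\always$, and $i$-locality, so I would keep the write-up to a few lines once this fixed-point inclusion is in hand.
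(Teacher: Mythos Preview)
Your proposal is correct and follows essentially the same approach as the paper: both arguments hinge on the inclusion $C_{\players\subat\bar{\psi}}\phi \subseteq K_{i\subat\psi_i}\,C_{\players\subat\bar{\psi}}\phi$ (obtained from the definition together with the fact that $K_{i\subat\psi_i}$ commutes with intersections), then unwind $K_{i\subat\psi_i}$ and use $i$-locality of $\psi_i$ to conclude. The only difference is presentational: the paper packages this as a four-step chain of set inclusions ending in $\psi_i\cap K_i\,C_{\players\subat\bar{\psi}}\phi = K_i(\psi_i\cap C_{\players\subat\bar{\psi}}\phi)$, whereas you carry out the same reasoning pointwise; your aside about $\histories(C_{\players\subat\bar{\psi}}\phi)=\histories(C_{\players\subat\bar{\psi}}^i\phi)$ is not actually needed.
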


\begin{proof}
It suffices to show that $\psi_i\cap C_{\players\subat\bar{\psi}}\phi\subseteq K_i(\psi_i\cap C_{\players\subat\bar{\psi}}\phi)$. Indeed,
\begin{multline*}
\psi_i\cap C_{\players\subat\bar{\psi}}\phi\subseteq
\psi_i\cap K_{i\subat\psi_i}C_{\players\subat\bar{\psi}}\phi\subseteq
\psi_i\cap \always(\psi_i\rightarrow K_i C_{\players\subat\bar{\psi}}\phi)\subseteq\\*
\subseteq\psi_i\cap K_i C_{\players\subat\bar{\psi}}\phi=K_i(\psi_i\cap C_{\players\subat\bar{\psi}}\phi),
\end{multline*}
where the first inclusion is by definition of $C_{\players\subat\bar{\psi}}\phi$ and since $K_{i\subat\psi_i}(\cdot)$ commutes with intersection, the second inclusion is by definition of $K_{i\subat\psi_i}$, and the equality follows since $\psi_i$ is $i$-local.
\end{proof}

By \cref{ck-local}, if $C_{\players\subat\bar{\psi}}\phi$ holds in a history~$\history$, then when $\psi_i$ holds in $\history$, player~$i$ (in addition to knowing $\phi$ as discussed above) \emph{knows} that $C_{\players\subat\bar{\psi}}\phi$ holds.
This leads to an alternative, equivalent definition of common knowledge as a fixed point, which will also be useful in our analysis later in this paper. This definition is inspired by an analogous definition of (traditionally defined) common knowledge, which dates back explicitly to \citet{Harman77} (see also \citealp{Barwise88}), and implicitly to \citet{Aumann76}.\footnote{This definition formulates $C_I\phi$ as the greatest fixed point of the function $\chi\mapsto E_\players(\phi\cap\chi)$.}

\begin{lemma}[Common Knowledge as a Fixed Point]\label{ck-fixed-point}
Let $\players$ be a set of players, let $\bar{\psi}$ be an $\players$-profile, and let~$\phi$ be an event.
\begin{itemize}
\item
The function $\chi\mapsto E_{\players\subat\bar{\psi}}(\phi\cap\chi)$ has a greatest fixed point.
\item
$C_{\players\subat\bar{\psi}}\phi$ is the greatest fixed point of the function $\chi\mapsto E_{\players\subat\bar{\psi}}(\phi\cap\chi)$.
\end{itemize}
\end{lemma}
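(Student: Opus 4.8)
The plan is the classical Knaster--Tarski argument, reorganized to accommodate the one feature that distinguishes our operator from the traditional one: we do \emph{not} have $E_{\players\subat\bar{\psi}}\phi\subseteq\phi$ (the time-invariant $C_{\players\subat\bar{\psi}}\phi$ need not imply $\phi$ at the current point), so the usual telescoping must be rephrased. Throughout, abbreviate $E\eqdef E_{\players\subat\bar{\psi}}$ and write $f(\chi)\eqdef E(\phi\cap\chi)$, so the claim is that $C\eqdef C_{\players\subat\bar{\psi}}\phi=\bigcap_{m\geq1}E^m\phi$ is the greatest fixed point of $f$.

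For the first bullet I would note that $f$ is monotone with respect to inclusion: each $K_{i\subat\psi_i}=\sometime\psi_i\cap\always(\psi_i\rightarrow K_i(\cdot))$ is monotone, being the intersection of the constant $\sometime\psi_i$ with a composition of the monotone operators $K_i$, $\psi_i\rightarrow(\cdot)$, and $\always$; hence $E=\bigcap_{i\in\players}K_{i\subat\psi_i}$ is monotone, and so is $\chi\mapsto\phi\cap\chi$. Since events (subsets of $\points$) form a complete lattice under inclusion, Knaster--Tarski yields a greatest fixed point.

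The second bullet rests on two facts about $E$ already invoked in the proof of \cref{ck-local}: (i)~each $K_{i\subat\psi_i}$, and hence $E$, commutes with intersection, so $f(\chi)=E\phi\cap E\chi$; and (ii)~$E$ commutes with countable intersections (here one uses that $K_i$, $\psi_i\rightarrow(\cdot)$, and $\always$ all do, while $\sometime\psi_i$ is a constant and can be pulled outside). Granting these, I would first verify that $C$ is a fixed point by the computation
\[
f(C)=E\phi\cap EC=E\phi\cap E\Bigl(\bigcap_{m\geq1}E^m\phi\Bigr)=E\phi\cap\bigcap_{m\geq1}E^{m+1}\phi=\bigcap_{m\geq1}E^m\phi=C,
\]
the middle step being (ii) and the last one merely reabsorbing the $m=1$ term. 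Then I would show $C$ dominates every post-fixed point (in particular every fixed point): if $\chi\subseteq f(\chi)=E\phi\cap E\chi$, then $\chi\subseteq E\phi=E^1\phi$, and if $\chi\subseteq E^m\phi$ then, by monotonicity of $E$, $\chi\subseteq E\chi\subseteq E(E^m\phi)=E^{m+1}\phi$; hence $\chi\subseteq\bigcap_{m\geq1}E^m\phi=C$ by induction. Together these two facts identify $C$ as the greatest fixed point.

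I expect the only genuine subtlety to be fact~(ii): one must check that the ``$@\psi_i$''-flavored knowledge operator still commutes with countable intersections \emph{even though $\sometime$ on its own does not}, which works precisely because $\sometime\psi_i$ does not depend on the events being intersected. Everything else is routine lattice theory and bookkeeping; if desired, one could even drop the separate Knaster--Tarski step, since the direct verification of the second bullet already establishes existence of the greatest fixed point.
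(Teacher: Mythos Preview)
Your proof is correct and follows essentially the same route as the paper's. Both establish Part~1 via monotonicity of the constituent operators and Knaster--Tarski, and both hinge Part~2 on the fact that $E_{\players\subat\bar{\psi}}$ commutes with (countable) intersections; the only cosmetic difference is that the paper packages Part~2 as an invocation of Kleene's fixed-point theorem (downward continuity $\Rightarrow$ greatest fixed point $=\bigcap_m f^m(\points)$, which unwinds to $\bigcap_m E^m_{\players\subat\bar{\psi}}\phi$), whereas you carry out that same argument by hand---verifying directly that $C$ is a fixed point and that every post-fixed point lies below it. Your explicit remark that $\sometime\psi_i$ is constant in the argument of $K_{i\subat\psi_i}$, so that commutation with countable intersections survives even though $\sometime$ itself does not commute with them, is a point the paper leaves implicit.
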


\begin{proof}
Part 1: Since the operators $\rightarrow$ (in its right operand), $K_i$, $\always$, $\cap$, and $\sometime$ are all monotone, the function $\chi\mapsto E_{\players\subat\bar{\psi}}(\phi\cap\chi)$ is monotone. Therefore, by Tarski's fixed-point theorem, it indeed has a greatest fixed point.

Part 2: The same function $f(\chi)=E_{\players\subat\bar{\psi}}(\phi\cap\chi)$ commutes with intersection, and is thus downward-continuous. Therefore, by Kleene's fixed-point theorem, its greatest fixed point, $C_{\players\subat\bar{\psi}}\phi$, equals $\bigcap_{m=1}^{\infty}f^m(\points)=\bigcap_{m=1}^{\infty}E^m_{\players\subat\bar{\psi}}\phi$, as claimed.
\end{proof}

\subsection{Induction Rule}\label{induction}

As already noted by \citet{ClarkM81}, ascertaining that common knowledge holds is quite an arduous task that requires ensuring that infinitely many events hold (or, we might add---using \cref{ck-fixed-point}---that an implicitly defined event holds). To make this task more practical in analyses, we define an \emph{Induction Rule} for our variant of common knowledge.

\begin{theorem}[Induction Rule]\label{induction-rule}\leavevmode
Let $\players$ be a set of players, let $\bar{\psi}$ be an $\players$-profile, and let~$\phi$ be an event. If $\phi\subseteq E_{\players\subat\bar{\psi}}\phi$, then $\phi\subseteq C_{\players\subat\bar{\psi}}\phi$.
\end{theorem}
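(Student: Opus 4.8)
The plan is to prove $\phi\subseteq E^m_{\players\subat\bar{\psi}}\phi$ for every $m\geq 1$ by induction on $m$, and then take the intersection over all $m$: since $C_{\players\subat\bar{\psi}}\phi=\bigcap_{m=1}^{\infty}E^m_{\players\subat\bar{\psi}}\phi$, this immediately gives $\phi\subseteq C_{\players\subat\bar{\psi}}\phi$, as required. The base case $m=1$ is exactly the hypothesis $\phi\subseteq E_{\players\subat\bar{\psi}}\phi$.

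For the inductive step, I would lean on monotonicity of the operator $E_{\players\subat\bar{\psi}}$ in its argument. Assuming $\phi\subseteq E^m_{\players\subat\bar{\psi}}\phi$, applying $E_{\players\subat\bar{\psi}}$ to both sides yields $E_{\players\subat\bar{\psi}}\phi\subseteq E_{\players\subat\bar{\psi}}\bigl(E^m_{\players\subat\bar{\psi}}\phi\bigr)=E^{m+1}_{\players\subat\bar{\psi}}\phi$, and chaining this with the hypothesis $\phi\subseteq E_{\players\subat\bar{\psi}}\phi$ gives $\phi\subseteq E^{m+1}_{\players\subat\bar{\psi}}\phi$, closing the induction. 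The only ingredient that needs justification is that $E_{\players\subat\bar{\psi}}$ is monotone, and this is already in hand: $E_{\players\subat\bar{\psi}}\chi=\bigcap_{i\in\players}\bigl(\sometime\psi_i\cap\always(\psi_i\rightarrow K_i\chi)\bigr)$, and $K_i$, $\always$, $\cap$, and $\rightarrow$ (in its right operand) are all monotone, exactly as observed in the proof of Part~1 of \cref{ck-fixed-point}. So I do not expect any real obstacle here; the rule is in essence the Knaster--Tarski principle that every post-fixed point lies below the greatest fixed point.

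In fact I would also record, as a one-line alternative, the fixed-point route: the hypothesis says $\phi\subseteq E_{\players\subat\bar{\psi}}\phi=E_{\players\subat\bar{\psi}}(\phi\cap\phi)$, i.e., $\phi$ is a post-fixed point of the monotone map $\chi\mapsto E_{\players\subat\bar{\psi}}(\phi\cap\chi)$; since by \cref{ck-fixed-point} the event $C_{\players\subat\bar{\psi}}\phi$ is the greatest fixed point of this map (equivalently, the union of all its post-fixed points), it follows that $\phi\subseteq C_{\players\subat\bar{\psi}}\phi$. Either presentation is short; I would likely use the explicit induction, since it keeps the proof self-contained and does not require invoking Tarski beyond what \cref{ck-fixed-point} already packaged.
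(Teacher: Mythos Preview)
Your proposal is correct. Your second, fixed-point route is exactly how the paper proceeds: it first records the slightly more general \cref{induction-lemma} (any $\xi$ with $\xi\subseteq E_{\players\subat\bar{\psi}}(\phi\cap\xi)$ satisfies $\xi\subseteq C_{\players\subat\bar{\psi}}\phi$, by Tarski's characterization of the greatest fixed point as the union of post-fixed points together with \cref{ck-fixed-point}), and then obtains the Induction Rule by taking $\xi=\phi$. Your primary route---the direct induction on~$m$ using monotonicity of $E_{\players\subat\bar{\psi}}$---is a perfectly valid and more elementary alternative that avoids invoking Tarski; the paper's detour through \cref{induction-lemma} has the small advantage that the more general statement (for $\xi\ne\phi$) is reused later, e.g., in the proof of \cref{monotonicity}.
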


In the Induction Rule, set inclusion (``$\subseteq$'') should be interpreted as implication between events. E.g., the condition $\phi\subseteq E_{\players\subat\bar{\psi}}\phi$ should be interpreted as saying ``the event $\phi$ implies the event $E_{\players\subat\bar{\psi}}\phi$,'' (this is equivalent to the more cumbersome statement $(\phi\rightarrow E_{\players\subat\bar{\psi}}\phi)=\points$).
\cref{induction-rule} is implied by the following technical lemma.

\begin{lemma}\label{induction-lemma}
Let $\players$ be a set of players, let $\bar{\psi}$ be an $\players$-profile, and let $\phi$ be an event. If an event $\xi$ satisfies $\xi\subseteq E_{\players\subat\bar{\psi}}(\phi\cap\xi)$, then $\xi\subseteq C_{\players\subat\bar{\psi}}\phi$.
\end{lemma}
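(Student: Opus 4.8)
The plan is to read \cref{induction-lemma} as a \emph{coinduction principle}: the hypothesis asserts that $\xi$ is a post-fixed point of the operator whose greatest fixed point is $C_{\players\subat\bar{\psi}}\phi$, and every post-fixed point of a monotone operator on a complete lattice lies below that greatest fixed point. In this sense the real work has already been done in \cref{ck-fixed-point}, and the proof is essentially a one-liner built on it.

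Concretely, write $f(\chi)\eqdef E_{\players\subat\bar{\psi}}(\phi\cap\chi)$, viewed as a function on the complete lattice of events (subsets of $\points$ ordered by inclusion). By \cref{ck-fixed-point}, $f$ is monotone and $C_{\players\subat\bar{\psi}}\phi$ is its greatest fixed point. The hypothesis of \cref{induction-lemma}, namely $\xi\subseteq E_{\players\subat\bar{\psi}}(\phi\cap\xi)$, is precisely the statement $\xi\subseteq f(\xi)$, i.e., $\xi$ is a post-fixed point of $f$. Since the greatest fixed point of a monotone operator on a complete lattice equals (in particular contains) the join of all its post-fixed points (Knaster--Tarski), we conclude $\xi\subseteq C_{\players\subat\bar{\psi}}\phi$, as claimed.

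If one prefers not to invoke the general form of Knaster--Tarski, the same conclusion follows from a short induction, using that the proof of \cref{ck-fixed-point} (via Kleene's theorem) identifies $C_{\players\subat\bar{\psi}}\phi$ with $\bigcap_{m\ge 0}f^m(\points)$. It then suffices to show $\xi\subseteq f^m(\points)$ for every $m$: the base case $m=0$ is $\xi\subseteq\points$, and the inductive step combines the hypothesis $\xi\subseteq f(\xi)$ with monotonicity, giving $\xi\subseteq f(\xi)\subseteq f\bigl(f^m(\points)\bigr)=f^{m+1}(\points)$. I do not expect any genuine obstacle here; the only point needing a little care is to iterate $f(\chi)=E_{\players\subat\bar{\psi}}(\phi\cap\chi)$ rather than $E_{\players\subat\bar{\psi}}$ alone — keeping $\phi$ intersected in at every stage — since $E_{\players\subat\bar{\psi}}\phi$ is time-invariant and need not be contained in $\phi$, so the naive induction ``$\xi\subseteq E^m_{\players\subat\bar{\psi}}\phi$'' does not chain on its own. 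Passing through $f$ and its greatest fixed point sidesteps this bookkeeping entirely, which is why I would present the argument in that form.
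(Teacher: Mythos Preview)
Your proof is correct and essentially identical to the paper's: the paper also observes that $f(\chi)=E_{\players\subat\bar{\psi}}(\phi\cap\chi)$ is monotone, invokes Tarski's fixed-point theorem to write its greatest fixed point (which by \cref{ck-fixed-point} is $C_{\players\subat\bar{\psi}}\phi$) as $\bigcup\{\chi\mid\chi\subseteq f(\chi)\}$, and concludes from the hypothesis $\xi\subseteq f(\xi)$ that $\xi\subseteq C_{\players\subat\bar{\psi}}\phi$.
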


\begin{proof}
The function $E_{\players\subat\bar{\psi}}(\phi\cap\cdot)$ is monotone, and therefore, by Tarski's fixed point theorem, its greatest fixed point, which by \cref{ck-fixed-point} is $C_{\players\subat\bar{\psi}}\phi$, equals $\bigcup\bigl\{\chi\mid\chi\subseteq E_{\players\subat\bar{\psi}}(\phi\cap\chi)\bigr\}$. Therefore, since $\xi\subseteq E_{\players\subat\bar{\psi}}(\phi\cap\xi)$, we have that $\xi\subseteq C_{\players\subat\bar{\psi}}\phi$, as claimed.
\end{proof}

\begin{proof}[Proof of \cref{induction-rule}]
The \lcnamecref{induction-rule} follows from \cref{induction-lemma}, taking $\xi=\phi$.
\end{proof}

The following useful lemma allows us to use the Induction Rule (\cref{induction-rule}) to prove that $\sigma\subseteq C_{\players\subat\bar{\psi}}\phi$ by showing that $\sigma\subseteq E_{\players\subat\bar{\psi}}(\sigma)$, for events $\sigma\subset\phi$ (i.e., sufficient conditions for $\phi$) rather than only for $\sigma=\phi$.

\begin{lemma}[Monotonicity of $C_{\players\subat\bar{\psi}}$]\label{monotonicity}
Let $\players$ be a set of players, let $\bar{\psi}$ be an $\players$-profile, and let $\sigma$ and $\phi$ be events. If $\sigma\subseteq\phi$, then $C_{\players\subat\bar{\psi}}\sigma\subseteq C_{\players\subat\bar{\psi}}\phi$.
\end{lemma}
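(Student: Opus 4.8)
The plan is to exploit the fixed-point characterization of $C_{\players\subat\bar{\psi}}$ from \cref{ck-fixed-point} together with the Induction Lemma (\cref{induction-lemma}), so that almost all the work has already been done. Set $\xi\eqdef C_{\players\subat\bar{\psi}}\sigma$. Since $C_{\players\subat\bar{\psi}}\sigma$ is the greatest fixed point of $\chi\mapsto E_{\players\subat\bar{\psi}}(\sigma\cap\chi)$, we have the exact identity $\xi=E_{\players\subat\bar{\psi}}(\sigma\cap\xi)$.

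The key observation is then that $E_{\players\subat\bar{\psi}}$ is monotone, which was already noted in the proof of \cref{ck-fixed-point} (it is built by composing the monotone operators $\cap$, $K_i$, $\always$, $\sometime$, and $\rightarrow$ in its right operand). Using $\sigma\subseteq\phi$ we get $\sigma\cap\xi\subseteq\phi\cap\xi$, and applying monotonicity of $E_{\players\subat\bar{\psi}}$ yields
\[
\xi=E_{\players\subat\bar{\psi}}(\sigma\cap\xi)\subseteq E_{\players\subat\bar{\psi}}(\phi\cap\xi).
\]
This is precisely the hypothesis of \cref{induction-lemma} with the event $\phi$ and the witness event $\xi$. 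Invoking that \lcnamecref{induction-lemma} gives $\xi\subseteq C_{\players\subat\bar{\psi}}\phi$, i.e., $C_{\players\subat\bar{\psi}}\sigma\subseteq C_{\players\subat\bar{\psi}}\phi$, as desired.

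There is no real obstacle here: the statement is a routine monotonicity fact, and the only thing to be mildly careful about is to cite (rather than re-derive) that $E_{\players\subat\bar{\psi}}$ is monotone, which is already established within the proof of \cref{ck-fixed-point}. As an alternative route one could argue entirely at the level of \cref{ck}: monotonicity of $E_{\players\subat\bar{\psi}}$ gives $E^m_{\players\subat\bar{\psi}}\sigma\subseteq E^m_{\players\subat\bar{\psi}}\phi$ for every $m\in\NN$ by induction on $m$, and intersecting over all $m$ yields $C_{\players\subat\bar{\psi}}\sigma=\bigcap_{m=1}^{\infty}E^m_{\players\subat\bar{\psi}}\sigma\subseteq\bigcap_{m=1}^{\infty}E^m_{\players\subat\bar{\psi}}\phi=C_{\players\subat\bar{\psi}}\phi$. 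I would likely present the first (fixed-point) argument as the main proof since it is a one-liner given the lemmas already in hand, and perhaps mention the second as a remark.
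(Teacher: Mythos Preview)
Your proof is correct and follows essentially the same route as the paper: set $\xi=C_{\players\subat\bar{\psi}}\sigma$, use \cref{ck-fixed-point} to get $\xi=E_{\players\subat\bar{\psi}}(\sigma\cap\xi)$, apply monotonicity of $E_{\players\subat\bar{\psi}}$ together with $\sigma\subseteq\phi$ to obtain $\xi\subseteq E_{\players\subat\bar{\psi}}(\phi\cap\xi)$, and conclude via \cref{induction-lemma}. The alternative argument you sketch via the infinite intersection in \cref{ck} is also fine, though the paper does not present it.
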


\begin{proof}
Let $\sigma\subseteq\phi$. For every $\xi$, we have that $E_{\players\subat\bar{\psi}}(\sigma\cap\xi)\subseteq E_{\players\subat\bar{\psi}}(\phi\cap\xi)$. In particular, taking $\xi=C_{\players\subat\bar{\psi}}\sigma$ we have by \cref{ck-fixed-point} that $\xi=E_{\players\subat\bar{\psi}}(\sigma\cap\xi)\subseteq E_{\players\subat\bar{\psi}}(\phi\cap\xi)$. Hence, by \cref{induction-lemma}, $C_{\players\subat\bar{\psi}}\sigma=\xi\subseteq C_{\players\subat\bar{\psi}}\phi$.
\end{proof}

The Induction Rule can be used to directly ascertain the emergence of (our notion of) common knowledge in many settings. E.g., consider the example from \cref{hm-paradox} in which $\alpha$ sends a single message to $\beta$. Let $\psi_\alpha$ be the $\alpha$-local event ``a message with content~$c$ is sent by~$\alpha$ to $\beta$,'' and let $\psi_\beta$ be the $\beta$-local event ``a message with content~$c$ is received by $\beta$ from $\alpha$.'' Then, $\sometime\psi_\alpha\subseteq K_{\alpha\subat\psi_\alpha}\sometime\psi_\alpha$ (indeed, when~$\alpha$ sends her message in this example, she surely knows this) and $\sometime\psi_\alpha\subseteq K_{\beta\subat\psi_\beta}\sometime\psi_\alpha$ (indeed, the message sent by~$\alpha$ is guaranteed to be received by $\beta$ eventually, at which point $\beta$ surely knows that it had been sent at some prior point). Therefore, by the Induction Rule, $\sometime\psi_\alpha\subseteq C_{\players\subat\bar{\psi}}\sometime\psi_\alpha$. That is, in histories in which $\alpha$ sends a message with content~$c$ to~$\beta$, the fact that such a message is sent in the history is common knowledge between $\alpha$ as she sends this message and $\beta$ as she receives it. If, in addition, we have that $\sometime\psi_\alpha\subseteq\phi$ for some event $\phi$, e.g., in the generals example in the introduction taking $\phi=$``there are favorable conditions to attack at sundown'' (the assumption $\sometime\psi_\alpha\subseteq\phi$ means in this case that $\alpha$ will not send a message with content~$c$ unless there are favorable conditions to attack at sundown), then by \cref{monotonicity} we have that in histories in which $\alpha$ sends a message with content~$c$ to $\beta$, the fact that there are favorable conditions to attack at sundown is common knowledge between $\alpha$ as she sends this message and $\beta$ as she receives it.

\subsection{Co-occurrence}\label{sec:cooccurrence}

Reflecting upon our analysis of the ``attack at sundown'' example in the end of \cref{induction}, it is useful to generally ascertain the precise conditions under which the fact that a message is sent by a player $i$ to a player $j$ implies that the content of the message is common knowledge between $i$ as she sends the message and $j$ as she receives it. In this \lcnamecref{sec:cooccurrence}, we provide a necessary and sufficient condition for this implication to hold true.

The Stanford Encyclopedia of Philosophy \citep{Zalta13} highlights events being simultaneous and public as a prerequisite for common knowledge to arise. Relatedly, \citet{ClarkM81} discuss \emph{co-presence} of all players as giving rise to common knowledge of anything that is said or observed during their interaction. Our notion of common knowledge is no longer tied to (even approximate) simultaneity or to events being public, yet it is tied to what we term as the events $\psi_i$ \emph{co-occurring}: At each history either all of the $\psi_i$ events occur, possibly at different times in the history, or none ever occurs. For example, if a message sent by a player~$\alpha$ is guaranteed to eventually be received by a player~$\beta$, then the transmission by~$\alpha$ and the receipt by~$\beta$ are co-occurring events. We now formalize the notion of co-occurrence, which is a relaxation of co-presence, and show that it provides for a necessary and sufficient condition for the emergence of common knowledge of communicated information.

\begin{definition}[Co-occurrence]
Let $\players$ be a set of players, let $\bar{\psi}=(\psi_i)_{i\in\players}$ be an $\players$-profile, and let $\histories'\subseteq\histories$ be a set of histories. We say that $\bar{\psi}$ satisfies co-occurrence in~$\histories'$ if $\histories'\times\Time\subseteq(\sometime\psi_i\rightarrow\sometime\psi_j)$ for every pair of players $i,j\in\players$.
\end{definition}

Note that in particular, an $\players$-profile satisfies co-occurrence in the set $\histories$ of all histories if and only if $\sometime\psi_i=\sometime\psi_j$ (equivalently, $\histories(\psi_i)=\histories(\psi_j)$) for every pair of players $i,j\in\players$.

\begin{theorem}[Co-occurrence and Common Knowledge]\label{cooccurrence-iff}
Let $\players$ be a set of players, let $\bar{\psi}=(\psi_i)_{i\in\players}$ be an \mbox{$\players$-profile}, and let $\phi$ be an event such that $\sometime\psi_\ell\subseteq\phi$ for some $\ell\in\players$. Then $\psi_i\subseteq C_{\players\subat\bar{\psi}}\phi$ for every $i\in\players$ if and only if $\bar{\psi}$ satisfies co-occurrence in $\histories$.
\end{theorem}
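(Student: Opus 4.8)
The plan is to prove the two implications separately, treating the ``only if'' direction first as it is the easier one. Suppose $\psi_i\subseteq C_{\players\subat\bar{\psi}}\phi$ for every $i\in\players$, and fix an ordered pair $i,j\in\players$. Since by \cref{ck} we have $C_{\players\subat\bar{\psi}}\phi=\bigcap_{m=1}^{\infty}E^m_{\players\subat\bar{\psi}}\phi\subseteq E_{\players\subat\bar{\psi}}\phi\subseteq K_{j\subat\psi_j}\phi$, and since $K_{j\subat\psi_j}\phi=\sometime\psi_j\cap\always(\psi_j\rightarrow K_j\phi)\subseteq\sometime\psi_j$ by definition, we obtain $\psi_i\subseteq\sometime\psi_j$. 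Now $\sometime\psi_j=\histories(\psi_j)\times\Time$ is time-invariant, so $\psi_i\subseteq\sometime\psi_j$ forces $\histories(\psi_i)\subseteq\histories(\psi_j)$, hence $\sometime\psi_i\subseteq\sometime\psi_j$, which is exactly the co-occurrence containment $\histories\times\Time\subseteq(\sometime\psi_i\rightarrow\sometime\psi_j)$. Letting $i,j$ range over all ordered pairs yields co-occurrence of $\bar{\psi}$ in $\histories$. (Note that this direction does not even use the hypothesis $\sometime\psi_\ell\subseteq\phi$.)

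For the ``if'' direction, suppose $\bar{\psi}$ satisfies co-occurrence in $\histories$, so that $\sometime\psi_i=\sometime\psi_\ell$ as events for every $i\in\players$. I would apply the Induction Rule (\cref{induction-rule}) to the time-invariant event $\sometime\psi_\ell$, i.e., verify $\sometime\psi_\ell\subseteq E_{\players\subat\bar{\psi}}\sometime\psi_\ell$. Fixing $i\in\players$, unfold $K_{i\subat\psi_i}\sometime\psi_\ell=\sometime\psi_i\cap\always(\psi_i\rightarrow K_i\sometime\psi_\ell)$ and substitute $\sometime\psi_\ell=\sometime\psi_i$ to get $\sometime\psi_i\cap\always(\psi_i\rightarrow K_i\sometime\psi_i)$. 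Since $\psi_i$ is $i$-local and $\psi_i\subseteq\sometime\psi_i$, monotonicity of $K_i$ gives $\psi_i=K_i\psi_i\subseteq K_i\sometime\psi_i$, so the event $\psi_i\rightarrow K_i\sometime\psi_i$ equals all of $\points$, hence so does $\always(\psi_i\rightarrow K_i\sometime\psi_i)$, and therefore $K_{i\subat\psi_i}\sometime\psi_\ell=\sometime\psi_i=\sometime\psi_\ell$. Intersecting over $i\in\players$ yields $E_{\players\subat\bar{\psi}}\sometime\psi_\ell=\sometime\psi_\ell$, so the Induction Rule gives $\sometime\psi_\ell\subseteq C_{\players\subat\bar{\psi}}\sometime\psi_\ell$. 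Finally, $\sometime\psi_\ell\subseteq\phi$ together with monotonicity of $C_{\players\subat\bar{\psi}}$ (\cref{monotonicity}) gives $\sometime\psi_\ell\subseteq C_{\players\subat\bar{\psi}}\phi$, and then $\psi_i\subseteq\sometime\psi_i=\sometime\psi_\ell\subseteq C_{\players\subat\bar{\psi}}\phi$ for every $i\in\players$, as required.

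I expect the main conceptual obstacle to lie in the ``if'' direction: there is a temptation to think one must argue that each player \emph{knows} the co-occurrence relation among the $\psi_i$'s. The resolution is that co-occurrence in $\histories$ says precisely that each event $(\sometime\psi_i\rightarrow\sometime\psi_j)$ equals the whole space $\points$, and is therefore trivially known by everyone; concretely, it lets one replace $\sometime\psi_\ell$ by $\sometime\psi_i$ \emph{before} applying $K_i$, so that the Induction Rule check collapses to the statement that each player knows her own local event has occurred whenever it occurs --- immediate from $i$-locality. The only other point requiring mild care is the time-invariance bookkeeping in the ``only if'' direction, namely upgrading $\psi_i\subseteq\sometime\psi_j$ to $\sometime\psi_i\subseteq\sometime\psi_j$ by passing through $\histories(\cdot)$, which is legitimate exactly because $\sometime\psi_j$ is time-invariant.
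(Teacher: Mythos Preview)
Your proof is correct and follows essentially the same approach as the paper's. The paper packages the argument via an auxiliary lemma (\cref{characterization}) and applies the Induction Rule to $\bigcap_{j\in\players}\sometime\psi_j$ rather than to $\sometime\psi_\ell$, but under co-occurrence these two events coincide, so the difference is purely cosmetic; your handling of both directions, including the time-invariance bookkeeping and the substitution of $\sometime\psi_\ell$ by $\sometime\psi_i$ inside $K_i$, matches the paper's reasoning step for step.
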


Before we prove \cref{cooccurrence-iff}, we emphasize that the expression $\psi_i\subseteq C_{\players\subat\bar{\psi}}\phi$ in the statement of this \lcnamecref{cooccurrence-iff} implies that $\psi_i\subseteq \psi_i\cap C_{\players\subat\bar{\psi}}\phi=C_{\players\subat\bar{\psi}}^i\phi$, which by \cref{ck-local} in turn implies that $\psi_i\subseteq K_i(C_{\players\subat\bar{\psi}}^i\phi)\subseteq K_i(C_{\players\subat\bar{\psi}}\phi)$.
That is, whenever $\psi_i$ holds (e.g., whenever $i$ sends or receives a message whose content implies $\phi$), it is the case that $i$ knows that it is common knowledge between $i_1@\psi_1,\ldots,i_n@\psi_n$ (one of which is $i@\psi_i$, i.e., $i$ at that instant) that $\phi$ holds. (And, in particular, $i$ at that instant also knows that $\phi$ holds.)
\cref{cooccurrence-iff} is implied by the following technical lemma.

\begin{lemma}\label{characterization}
Let $\players$ be a set of players and let $\bar{\psi}=(\psi_i)_{i\in\players}$ be an $\players$-profile.
\begin{enumerate}
\item\label{characterization-if}
 If $\bar{\psi}$ satisfies co-occurrence in $\histories$, then $\psi_i\subseteq C_{\players\subat\bar{\psi}}\bigl(\bigcap_{j\in\players}\sometime\psi_j\bigr)$ for every $i\in\players$.
\item\label{characterization-onlyif}
If for some event $\phi$ it holds that $\psi_i\subseteq C_{\players\subat\bar{\psi}}\phi$ for every $i\in\players$, then $\bar{\psi}$ satisfies co-occurrence in $\histories$.
\end{enumerate}
\end{lemma}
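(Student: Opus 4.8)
The plan is to prove the two parts separately: part~\ref{characterization-onlyif} is essentially a one-line unwinding of the definitions, while part~\ref{characterization-if} is the substantive direction, which I would derive from the Induction Rule (\cref{induction-rule}).

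For part~\ref{characterization-onlyif}, I would start from the hypothesis $\psi_i\subseteq C_{\players\subat\bar{\psi}}\phi$ for all $i\in\players$ and fix an arbitrary ordered pair $i,j\in\players$. Peeling a single layer off the infinite intersection defining $C_{\players\subat\bar{\psi}}\phi$ in \cref{ck} gives $C_{\players\subat\bar{\psi}}\phi\subseteq E_{\players\subat\bar{\psi}}\phi\subseteq K_{j\subat\psi_j}\phi\subseteq\sometime\psi_j$, where the last inclusion is immediate from the definition $K_{j\subat\psi_j}\phi=\sometime\psi_j\cap\always(\psi_j\rightarrow K_j\phi)$. Hence $\psi_i\subseteq\sometime\psi_j$, and since $\sometime\psi_j$ is time-invariant this upgrades to $\histories(\psi_i)\subseteq\histories(\psi_j)$, i.e., $\sometime\psi_i\subseteq\sometime\psi_j$, which is exactly the statement $\histories\times\Time\subseteq(\sometime\psi_i\rightarrow\sometime\psi_j)$. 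As the pair was arbitrary, $\bar{\psi}$ satisfies co-occurrence in $\histories$.

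For part~\ref{characterization-if}, the opening observation is that co-occurrence in $\histories$ makes all the events $\sometime\psi_j$ ($j\in\players$) equal; write $\Psi$ for this common time-invariant event, so that $\bigcap_{j\in\players}\sometime\psi_j=\Psi$ and $\psi_i\subseteq\sometime\psi_i=\Psi$ for every $i$. It therefore suffices to show $\Psi\subseteq C_{\players\subat\bar{\psi}}\Psi$, for which I would invoke the Induction Rule: it is enough to check $\Psi\subseteq E_{\players\subat\bar{\psi}}\Psi=\bigcap_{j\in\players}K_{j\subat\psi_j}\Psi$, i.e., $\Psi\subseteq\sometime\psi_j\cap\always(\psi_j\rightarrow K_j\Psi)$ for each $j\in\players$. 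The first conjunct is just the identity $\Psi=\sometime\psi_j$. For the second, since $\psi_j$ is $j$-local we have $\psi_j=K_j\psi_j$, and as $\psi_j\subseteq\sometime\psi_j=\Psi$, monotonicity of $K_j$ yields $\psi_j\subseteq K_j\Psi$; hence $\psi_j\setminus K_j\Psi=\emptyset$, so $(\psi_j\rightarrow K_j\Psi)=\points$ and a fortiori $\Psi\subseteq\always(\psi_j\rightarrow K_j\Psi)$. Applying \cref{induction-rule} to $\Psi$ then gives $\Psi\subseteq C_{\players\subat\bar{\psi}}\Psi$, and chaining with $\psi_i\subseteq\Psi$ yields $\psi_i\subseteq C_{\players\subat\bar{\psi}}\bigl(\bigcap_{j\in\players}\sometime\psi_j\bigr)$ for every $i\in\players$.

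The only genuinely conceptual step is in part~\ref{characterization-if}: recognizing that locality of each $\psi_j$ is precisely what powers the Induction Rule here---when $\psi_j$ occurs, player~$j$ knows it, hence knows $\sometime\psi_j$, hence (by co-occurrence) knows that every $\psi_k$ occurs somewhere in the history, which is exactly the content being made common knowledge. I expect that to be the only place demanding thought; the residual obstacles are bookkeeping---verifying that $\psi_i\subseteq\sometime\psi_j$ upgrades to $\sometime\psi_i\subseteq\sometime\psi_j$ using time-invariance, and that $(\psi_j\rightarrow K_j\Psi)$ collapses to $\points$ once $\psi_j\subseteq K_j\Psi$---both of which follow straight from the definitions of $\sometime$, $\always$, $\rightarrow$, and locality.
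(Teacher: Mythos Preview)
Your proposal is correct and follows essentially the same approach as the paper's proof: for part~\ref{characterization-onlyif} you unwind $C_{\players\subat\bar{\psi}}\phi\subseteq E_{\players\subat\bar{\psi}}\phi\subseteq\sometime\psi_j$ exactly as the paper does, and for part~\ref{characterization-if} you verify the hypothesis of the Induction Rule using locality of each $\psi_j$ and the fact that co-occurrence collapses all the $\sometime\psi_j$ to a single event. The only cosmetic difference is that the paper first establishes $K_{i\subat\psi_i}\psi_i=\sometime\psi_i$ and then applies monotonicity of $K_{i\subat\psi_i}$, whereas you directly verify $\psi_j\subseteq K_j\Psi$; these are two phrasings of the same step.
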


\begin{proof}
Part 1: For every $i\in\players$, the event $\psi_i$ is $i$-local, and hence $\psi_i=K_i\psi_i$, and therefore we have $\always(\psi_i\rightarrow K_i\psi_i)=\points$. By co-occurrence, $\bigcap_{j\in\players}\sometime\psi_j=\sometime\psi_i$. Hence, $\bigcap_{j\in\players}\sometime\psi_j=\sometime\psi_i={\sometime\psi_i\cap\always(\psi_i\rightarrow K_i\psi_i)}=K_{i\subat\psi_i}\psi_i\subseteq K_{i\subat\psi_i}\sometime\psi_i=K_{i\subat\psi_i}\bigl(\bigcap_{j\in\players}\sometime\psi_j\bigr)$. Since this holds for every $i\in\players$, it follows that $\bigcap_{j\in\players}\sometime\psi_j\subseteq E_{\players\subat\bar{\psi}}\bigl(\bigcap_{j\in\players}\sometime\psi_j\bigr)$, and so by the Induction Rule (\cref{induction-rule}), we have that $\bigcap_{j\in\players}\sometime\psi_j\subseteq C_{\players\subat\bar{\psi}}\bigl(\bigcap_{j\in\players}\sometime\psi_j\bigr)$. Therefore, for every $i\in\players$ we have that $\psi_i\subseteq \sometime\psi_i=\bigcap_{j\in\players}\sometime\psi_j\subseteq C_{\players\subat\bar{\psi}}\bigl(\bigcap_{j\in\players}\sometime\psi_j\bigr)$.

Part 2:
For every $i\in\players$ we have that $\psi_i\subseteq C_{\players\subat\bar{\psi}}\phi\subseteq E_{\players\subat\bar{\psi}}\phi=\bigcap_{j\in\players}K_{j\subat\psi_j}\phi\subseteq\bigcap_{j\in\players}\sometime\psi_j$. Therefore, for every $i,j\in \players$ we have $\sometime\psi_i\subseteq\sometime\psi_j$ (and similarly $\sometime\psi_j\subseteq\sometime\psi_i$), and so $\bar{\psi}$ satisfies co-occurrence in $\histories$.
\end{proof}

\begin{proof}[Proof of \cref{cooccurrence-iff}]
The ``only if''  direction follows from \cref{characterization}(\labelcref{characterization-onlyif}). For the ``if'' direction, let $i\in\players$. By \cref{characterization}(\labelcref{characterization-if}), we have that $\psi_i\subseteq C_{\players\subat\bar{\psi}}\bigl(\bigcap_{j\in\players}\sometime\psi_j\bigr)$; by \cref{monotonicity} and since $\bigcap_{j\in\players}\sometime\psi_j\subseteq\sometime\psi_\ell\subseteq\phi$, it follows that $\psi_i\subseteq C_{\players\subat\bar{\psi}}\phi$ as required.
\end{proof}

To see why we view \cref{cooccurrence-iff} as establishing that co-occurrence is necessary and sufficient for the emergence of common knowledge of communicated information, consider how this \lcnamecref{cooccurrence-iff} can be used to considerably simplify the analysis of the ``attack at sundown'' example from the end of \cref{induction}. Let $\psi_\alpha$, $\psi_\beta$, and $\phi$ be as defined there. Notice that $\bar{\psi}=\{\psi_\alpha,\psi_\beta\}$ satisfies co-occurrence and that $\sometime\psi_\alpha\subseteq\phi$. Therefore, by \cref{cooccurrence-iff}, we immediately have that
$\psi_i\subseteq C_{\players\subat\bar{\psi}}\phi$, for every $i\in\{\alpha,\beta\}$, as was the conclusion there. Furthermore, by the same \lcnamecref{cooccurrence-iff} the co-occurrence of $\bar{\psi}$ cannot be relaxed (e.g., to probabilistic delivery of messages as in \citealp{Rubinstein89}) and still yield the same conclusion. Since co-occurrence of $\bar{\psi}$ is equivalent to guaranteed delivery of the message sent by $\alpha$, it is precisely what we would intuitively want to have as a necessary and sufficient condition for the sending of a message to imply common knowledge of its content. Under our definition of common knowledge, this is a necessary and sufficient condition not only intuitively, but also formally.

\subsection{Reachability}

As already pointed out by \citet{Aumann76}, traditionally defined common knowledge is closely related to reachability among points. 
Namely, $(\history,t)\in C_\players\phi$ if and only if $(\history',t')\in\phi$ holds for every point $(\history',t')$ such that there is a sequence of players $i_1,\ldots,i_m\in\players$  and points $(\history,t)=(\history_1,t_1),\ldots,(\history_{m+1},t_{m+1})=(\history',t')$ such that $(\history_j,t_j)\sim_{i_j}(\history_{j+1},t_{j+1})$ for every $j\in\{1,\ldots,m\}$. As we now show, 
common knowledge is related to (an analogous definition of) reachability under our definition as well, which affords a differently flavored, yet still very useful, characterization of when common knowledge holds. 

\begin{definition}[Reachability]
Let $\players$ be a set of players and let $\bar{\psi}$ be an $\players$-profile.
\begin{enumerate}
\item
The reachability graph $G_{\players,\bar{\psi}}=(\histories,\mathcal{E}_{\players,\bar{\psi}})$ is an undirected graph where \[\mathcal{E}_{\players,\bar{\psi}}\eqdef\bigl\{\{\history,\history'\}\in\histories^2~\big|~\exists i\in \players,t,t'\in T : (\history,t),(\history',t')\in\psi_i~\&~(\history,t)\sim_i(\history',t')\bigr\}.\]
\item 
We say that a point $(\history',t')\in\points$ is \emph{reachable} (with respect to $\players$ and $\bar{\psi}$) from $(\history,t)$ if (i)~$\history'$ is in the connected component of $\history$ in $G_{\players,\bar{\psi}}$, and (ii)~there exists a player $i\in\players$ s.t.\ $(\history',t')\in\psi_i$.
\end{enumerate}
\end{definition}

\begin{theorem}[Reachability and Common Knowledge]\label{reachability}
Let $\players$ be a set of players, let $\bar{\psi}$ be an $\players$-profile,
let~$\phi$ be an event, and let $(\history,t)\in\points$. 
Then $(\history,t)\in C_{\players\subat\bar{\psi}}\phi$ if and only if
(i) some $\psi_i$ holds at some point in $\history$, 
(ii) $\bar{\psi}$ satisfies co-occurrence in all histories of the connected component of $\history$ in $G_{\players,\bar{\psi}}$, and (iii)~$(\history',t')\in\phi$ for every $(\history',t')$ that is reachable from $(\history,t)$.
\end{theorem}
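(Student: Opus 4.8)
The plan is to prove the two directions separately, leaning heavily on the Induction Rule (\cref{induction-rule}), the reachability machinery, and the co-occurrence characterization (\cref{characterization}). For the ``if'' direction, assume (i)--(iii) hold. Condition (ii) says $\bar{\psi}$ satisfies co-occurrence in all histories of the connected component $\mathcal{H}$ of $\history$; combined with (i), some $\psi_j$ holds somewhere in $\history$, and hence by co-occurrence \emph{every} $\psi_k$ holds somewhere in $\history$, so $(\history,t)\in\sometime\psi_k$ for all $k$, and likewise for every history in $\mathcal{H}$. I would then define the candidate event $\xi \eqdef \bigcup_{\history''\in\mathcal{H}}\bigl(\{\history''\}\times T\bigr)$, i.e.\ all points whose history lies in $\mathcal{H}$ (this is time-invariant), and aim to apply \cref{induction-lemma} to get $\xi\subseteq C_{\players\subat\bar{\psi}}\phi$; since $(\history,t)\in\xi$, this gives the conclusion. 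The key verification is $\xi\subseteq E_{\players\subat\bar{\psi}}(\phi\cap\xi)$. Fix $(\history'',t'')\in\xi$ and $i\in\players$. I must show $(\history'',t'')\in K_{i\subat\psi_i}(\phi\cap\xi)$, i.e.\ (a) $\sometime\psi_i$ holds at $(\history'',t'')$ — true by co-occurrence as above — and (b) whenever $\psi_i$ holds at a point $(\history''',s)$ with $\history'''\in\mathcal{H}$, player $i$ knows $\phi\cap\xi$ there. For (b): any point $\sim_i$-equivalent to $(\history''',s)$ at which $\psi_i$ holds is, by definition of $\mathcal{E}_{\players,\bar{\psi}}$, in a history adjacent to $\history'''$ in $G_{\players,\bar{\psi}}$, hence in $\mathcal{H}$, hence in $\xi$; and such a point is reachable from $(\history,t)$ (same connected component, lies in $\psi_i$), so by (iii) it lies in $\phi$. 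The only subtlety is that the ken of $(\history''',s)$ might contain points \emph{not} in $\psi_i$; but since $\psi_i$ is $i$-local, the entire ken of a point in $\psi_i$ lies in $\psi_i$, so this cannot happen. Hence the whole ken lies in $\phi\cap\xi$, giving $K_i(\phi\cap\xi)$ there, which is exactly $\always(\psi_i\rightarrow K_i(\phi\cap\xi))$ restricted appropriately; combined with (a) this yields $(\history'',t'')\in K_{i\subat\psi_i}(\phi\cap\xi)$.

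For the ``only if'' direction, assume $(\history,t)\in C_{\players\subat\bar{\psi}}\phi$. Condition (i) is immediate: $C_{\players\subat\bar{\psi}}\phi\subseteq E_{\players\subat\bar{\psi}}\phi\subseteq\sometime\psi_i$ for each $i$. Condition (ii): since $C_{\players\subat\bar{\psi}}\phi$ is time-invariant, I want to show it holds at every point of $\mathcal{H}$, not just at $\history$ — this follows because $C_{\players\subat\bar{\psi}}^i\phi=\psi_i\cap C_{\players\subat\bar{\psi}}\phi$ is $i$-local (\cref{ck-local}), so moving along a $G_{\players,\bar{\psi}}$-edge via player $i$ (which by definition connects two points \emph{both in $\psi_i$}) preserves membership in $C_{\players\subat\bar{\psi}}\phi$; thus $C_{\players\subat\bar{\psi}}\phi$ holds throughout every history in $\mathcal{H}$, and then \cref{characterization}(\labelcref{characterization-onlyif}) applied with this event (noting $\psi_i\subseteq C_{\players\subat\bar{\psi}}\phi$ holds within $\mathcal{H}$) gives co-occurrence in $\mathcal{H}$. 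Condition (iii): if $(\history',t')$ is reachable from $(\history,t)$, then $(\history',t')\in\psi_i$ for some $i$ and $\history'\in\mathcal{H}$, so by the previous step $(\history',t')\in\psi_i\cap C_{\players\subat\bar{\psi}}\phi=C_{\players\subat\bar{\psi}}^i\phi\subseteq\phi$ by \cref{cki}(2).

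The main obstacle I anticipate is the bookkeeping in the ``only if'' direction showing that $C_{\players\subat\bar{\psi}}\phi$ propagates along the entire connected component — one must be careful that a $G_{\players,\bar{\psi}}$-edge is witnessed by a player $i$ and \emph{two points both in $\psi_i$}, which is precisely what makes $i$-locality of $C_{\players\subat\bar{\psi}}^i\phi$ (rather than of $C_{\players\subat\bar{\psi}}\phi$ itself, which need not be $i$-local) applicable. A secondary point of care is that reachability is defined on the level of histories (connected components) plus a $\psi_i$-membership condition on the target point, so one should not conflate ``reachable point'' with ``$\sim_i$-step-reachable point'' in the traditional sense; checking that these notions align with the co-occurrence hypothesis is where the argument's novelty sits, but none of it requires genuinely hard estimates — it is all a matter of unwinding definitions in the right order.
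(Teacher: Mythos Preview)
Your argument is correct, but it takes a genuinely different route from the paper's own proof. The paper proceeds by a direct induction on~$m$: it defines $m$-reachability and proves that $(\history,t)\in E^m_{\players\subat\bar{\psi}}\phi$ iff each $\psi_i$ holds in every history connected to~$\history$ by a path of length at most $m-1$ and $\phi$ holds at every $m$-reachable point; letting $m\to\infty$ then yields both directions at once. Your proof instead splits the two directions and leans on machinery already built in the paper: for the ``if'' direction you apply \cref{induction-lemma} to the time-invariant event $\xi=\mathcal{H}\times T$, and for ``only if'' you propagate $C_{\players\subat\bar{\psi}}\phi$ along $G_{\players,\bar{\psi}}$-edges using the $i$-locality of $C^i_{\players\subat\bar{\psi}}\phi$ from \cref{ck-local}, then invoke \cref{cki}(2). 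The paper's approach is self-contained and additionally exposes the finite-depth approximations $E^m$; yours is more modular and showcases that the theorem really is a consequence of the fixed-point and locality lemmas already in hand.

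Two small points. First, in your ``only if'' step for condition~(ii), the appeal to \cref{characterization}(\labelcref{characterization-onlyif}) is loose: that lemma is stated for global co-occurrence under the hypothesis $\psi_i\subseteq C_{\players\subat\bar{\psi}}\phi$, which you do not have globally. The fix is immediate and does not need that lemma at all: once you have shown $C_{\players\subat\bar{\psi}}\phi$ holds throughout every $\history'\in\mathcal{H}$, the containment $C_{\players\subat\bar{\psi}}\phi\subseteq\bigcap_{j\in\players}\sometime\psi_j$ directly gives that every $\psi_j$ occurs in every such~$\history'$, which is co-occurrence in~$\mathcal{H}$ (and more). Second, in your ``if'' direction you write ``whenever $\psi_i$ holds at a point $(\history''',s)$ with $\history'''\in\mathcal{H}$''; strictly, $\always(\psi_i\to K_i(\phi\cap\xi))$ at $(\history'',t'')$ only quantifies over times in the \emph{fixed} history~$\history''$, so you are checking more than necessary --- harmless, but worth tightening in a final write-up.
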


\begin{proof}
Let $(\history,t)\in\points$. For a player $i\in\players$, we say that a point $(\history',t')\in\points$ is \emph{$(i,1)$-reachable} from $(\history,t)$ if
there exists $t''\in\Time$ such that $(\history,t''),(\history',t')\in\psi_i$ and $(\history,t'')\sim_i(\history',t')$. 
For $m\in\NN$, we say that $(\history',t')\in\points$ is \emph{$m$-reachable} from $(\history,t)$ if (i)~$\history$ and $\history'$ are connected via a path of length $m$ in $G_{\players,\bar{\psi}}$, and (ii)~there exist $i\in\players$ s.t.\ $(\history',t')\in\psi_i$. 

For each $i\in\players$, we observe that $(\history,t)\in K_{i\subat\psi_i}\phi$ if and only if both (i)~$\psi_i$ holds at some point in $\history$, and (ii)~$(\history',t')\in\phi$ for every $(\history',t')$ that is $(i,1)$-reachable from $(\history,t)$. Therefore, $(\history,t)\in E_{\players\subat\bar{\psi}}\phi$ if and only if both 
(i)~\emph{each} $\psi_i$ holds at some point in~$\history$, 
and (ii)~$(\history',t')\in\phi$ for every $(\history',t')$ that is $1$-reachable from $(\history,t)$. By induction, therefore, for every $m\in\NN\setminus\{0\}$ we have that $(\history,t)\in E^m_{\players\subat\bar{\psi}}\phi$ if and only if both 
(i)~each~$\psi_i$ holds at some point in every history that is connected to $\history$ via a path of length at most $m\!-\!1$ in $G_{\players,\bar{\psi}}$ (including, in particular, $\history$ itself), 
and (ii)~$(\history',t')\in\phi$ for every $(\history',t')$ that is $m$-reachable from $(\history,t)$. Since $C_{\players\subat\bar{\psi}}\phi=\bigcap_{m=1}^{\infty}E^m_{\players\subat\bar{\psi}}\phi$, we therefore have that $(\history,t)\in C_{\players\subat\bar{\psi}}\phi$ if and only if (i)~each~$\psi_i$ holds at some point in every history in the connected component of $\history$ in $G_{\players,\bar{\psi}}$ and (ii)~$(\history',t')\in\phi$ for every $(\history',t')$ that is reachable from $(\history,t)$.  \cref{reachability} follows since each $\psi_i$ holds at some point in every history in the connected component of $\history$ in $G_{\players,\bar{\psi}}$ if and only if (i)~some~$\psi_i$ holds at some point in $\history$ and (ii)~$\bar{\psi}$ satisfies co-occurrence in all histories in the connected component of $\history$ in $G_{\players,\bar{\psi}}$.
\end{proof}

\subsection{Getting to Common Knowledge despite Timing Frictions}

We conclude this \lcnamecref{attaining} by showing that the Halpern--Moses Paradox does not apply to common knowledge as we have defined it in the way that it does apply to traditionally defined common knowledge. Recall that  \cref{no-ck} establishes that traditionally defined common knowledge never arises in a BDTF. In contrast, we now use 
the Induction Rule and reachability arguments (\cref{induction-rule,reachability}) to show that under our definition,
common knowledge of all future signals in a BDTF always arises in finite time, analogously to the Getting to Common Knowledge Theorem of \citet[Section~7]{Geanakoplos94} (which analyzes a setting without timing frictions). This in particular implies that in the setting of \citet{GeanakoplosP82} with added timing frictions, common knowledge (as we have defined it) of posteriors arises in finite time under very mild conditions. 

We say that a BDTF has \emph{timestamps} if for each $i\in\allplayers$ and $(\history,t)\in\points$ such that $t\ge z^i_\history$, the signal $f_i(\history,t)$ identifies also the subjective time of $i$ at $(\history,t)$ as well as the subjective time---of the other player---that is identified in the signal just received by $i$ at $(\history,t)$ if any such signal was indeed received. One way to think about having timestamps is if both players communicate by email (so each email message contains the subjective time at which it is sent) and always send a new message by hitting ``reply'' on the last message that was received (so the subjective time at which that last message was sent is quoted).

For every player $i\in\allplayers$ and every $t\in T$, we use $[\tau_i=t]$ to denote the event ``the subjective time of $i$ is~$t$.'' We note that this is a singular, $i$-local event that occurs in every history.

\begin{theorem}[Getting to Common Knowledge with Timing Frictions]\label{get-ck}
Consider any BDTF with timestamps in which the set $O$ of initial conditions is finite.\footnote{The assumption that $O$ is finite is analogous to the assumption of \citet{Geanakoplos94} that the set of histories~$\histories$ is finite and to the assumption of \citet{GeanakoplosP82} that the number of kens in each player's initial knowledge partition over $\histories$ is finite (since it implies that $\histories$ is a union of a finite number of classes of forever-indistinguishable histories). Note, however, that in our setting both $\histories$ and the number of kens in each player's initial partition (and the number of classes of forever-indistinguishable histories) are infinite. We nonetheless prove that common knowledge arises in finite time.} For every history $\history$ there exist $\hat{t}_\alpha,\hat{t}_\beta\in\NN$ such that the contents of all signals to ever be sent by $\alpha$ after her subjective time $\hat{t}_\alpha$ and the contents of all signals to ever be sent by~$\beta$ after her subjective time $\hat{t}_\beta$ are common knowledge in $\history$ between $\alpha@[\tau_\alpha=\hat{t}_\alpha]$ and~$\beta@[\tau_\beta=\hat{t}_\beta]$.
\end{theorem}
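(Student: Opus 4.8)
The plan is to apply the Induction Rule (\cref{induction-rule}), with \cref{monotonicity} for cleanup, to the profile $\bar{\psi}=\bigl([\tau_\alpha=\hat{t}_\alpha],[\tau_\beta=\hat{t}_\beta]\bigr)$ for $\hat{t}_\alpha,\hat{t}_\beta\in\NN$ chosen large in terms of the fixed history $\history$. Since each $[\tau_i=\hat{t}_i]$ is singular, $i$-local, and occurs in every history, $\sometime[\tau_i=\hat{t}_i]=\points$, so $K_{i\subat[\tau_i=\hat{t}_i]}\phi$ reduces to ``$i$ knows $\phi$ at her subjective time $\hat{t}_i$.'' It therefore suffices to exhibit a time-invariant event $\phi_0$ with $\history\in\histories(\phi_0)$, which implies the desired conclusion about the signals, and which satisfies $\phi_0\subseteq E_{\allplayers\subat\bar{\psi}}\phi_0$: then \cref{induction-rule} gives $\phi_0\subseteq C_{\allplayers\subat\bar{\psi}}\phi_0$, hence $\history\in\histories(C_{\allplayers\subat\bar{\psi}}\phi_0)$, and \cref{monotonicity} passes to the weaker event the theorem names. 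I would take $\phi_0$ to be the event ``the two players' timing offsets and their entire signal streams coincide with those of $\history$,'' where the timing offsets are certain integer combinations of birth dates and delays specified below; then $\phi_0\subseteq E_{\allplayers\subat\bar{\psi}}\phi_0$ reduces to the claim that in any $\phi_0$-history, $\alpha$ at her subjective time $\hat{t}_\alpha$ (and, symmetrically, $\beta$ at $\hat{t}_\beta$) already knows $\phi_0$ --- i.e., her ken at that instant determines both players' full signal streams. (One could instead route this through the reachability characterization, \cref{reachability}.)

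Establishing that claim rests on three ingredients. First, the ``conversation'' is deterministic: unrolling \cref{dcmak}, player $i$'s ken at her subjective time $k$ is fixed by her initial-partition cell together with the signals received by then, so by measurability of $f_i$ the $k$-th signal $\alpha$ sends is a fixed function of $\alpha$'s initial cell and of a window of $\beta$'s earlier signals, and symmetrically for $\beta$; the index windows here are shifted by integer offsets $p,q$ that depend only on the birth dates and delays, with $p+q$ equal to minus the round-trip delay $d^\alpha_\history+d^\beta_\history$ and hence $\le-2$. Because the delays are positive, composing the two relations shows that each signal ultimately depends only on strictly-earlier signals and on the two initial cells, so the whole conversation is a well-founded function of the ordered pair of initial cells and of $(p,q)$ alone. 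Second, the timestamps recover $(p,q)$ in finite time: the first non-empty signal $i$ receives carries the sender's subjective time (giving one offset), and once a full round-trip has elapsed the second timestamp field gives the other. Third --- and here finiteness of $O$ enters, in the spirit of \citet{Geanakoplos94} --- with $(p,q)$ fixed to $\history$'s values the conversation is a function of the finitely many pairs of initial cells, so for each initial cell $A$ of $\alpha$ the map sending $\beta$'s initial cell to $\beta$'s signal stream has finite image; hence a finite-length prefix of $\beta$'s stream, together with $A$, already determines all of $\beta$'s stream, and then $\alpha$'s stream too.

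Combining these, I would choose $\hat{t}_\alpha$ large enough that by her subjective time $\hat{t}_\alpha$ player $\alpha$ --- in any history sharing $\history$'s offsets --- has both recovered $(p,q)$ and received a prefix of $\beta$'s stream at least as long as the bound from the third ingredient, and $\hat{t}_\beta$ symmetrically. Then in any $\phi_0$-history, $\alpha$'s ken at subjective time $\hat{t}_\alpha$ forces every history she cannot distinguish from it to share $\history$'s offsets (by the timestamps) and hence to have $\history$'s full conversation (by the stabilization step), so $\alpha$ knows $\phi_0$; symmetrically $\beta$ does, giving $\phi_0\subseteq E_{\allplayers\subat\bar{\psi}}\phi_0$, and the Induction Rule and \cref{monotonicity} conclude. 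The main obstacle I anticipate is exactly this closing-up of the mutually recursive streams: unlike in the frictionless model of \citet{Geanakoplos94}, neither player ever learns the other's birth date or delay, nor in general her initial information, so ``eventually everyone knows everything'' fails, and one must instead show that what a player does observe --- her signal prefix, plus the offsets decoded from the timestamps --- already suffices, by finiteness of $O$, to reconstruct the remaining (possibly never directly observed) signals of both players. The well-foundedness granted by the delays being positive is what makes that reconstruction terminate; the fussiest routine work is the bookkeeping between subjective and absolute time and between signals observed versus merely inferred.
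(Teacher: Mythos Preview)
Your approach is correct and reaches the same conclusion, but the route differs from the paper's in an instructive way. The paper proceeds in three distinct stages: it first applies the Induction Rule at fixed \emph{early} subjective times $(t_2,t_3)$ to a small event~$\phi$ (``a specific round-trip has occurred''), obtaining common knowledge of the round-trip delay $d^\alpha_\history+d^\beta_\history$ and of a bound on $|z^\alpha-z^\beta|$; it then invokes the reachability characterization (\cref{reachability}) to deduce that the connected component of~$\history$ in the reachability graph is a \emph{finite} union of equivalence classes of the form~\eqref{equivalence}; and finally it runs a \citet{Geanakoplos94}-style partition-refinement argument on the ``subjective-time slice'' partitions $P_{i\subat[\tau_i=t]}$, using this finiteness to locate the $\hat t_\alpha,\hat t_\beta$ at which refinement stabilizes. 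You instead collapse all three stages into a single application of the Induction Rule, applied directly at the final profile $\bar\psi=\bigl([\tau_\alpha=\hat t_\alpha],[\tau_\beta=\hat t_\beta]\bigr)$ to the ``maximal'' event $\phi_0$; you replace the reachability-plus-refinement machinery with the direct observation that, once the offsets $(p,q)$ are fixed, the space of possible conversations is finite (indexed by pairs of initial cells), so a sufficiently long observed prefix of the other player's stream pins down both full streams. Your argument is more economical and sidesteps \cref{reachability} entirely; the paper's is more modular---isolating ``common knowledge of the round-trip delay and birth-date bounds'' as an explicit intermediate state---and makes the analogy with the classical \citeauthor{Geanakoplos94} partition-stabilization argument more visible. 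Both rely on exactly the same two non-negotiable ingredients: timestamps to decode the offsets after one round-trip, and finiteness of~$O$ to force eventual stabilization.
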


The proof of \cref{get-ck} proceeds in several steps. The first step uses the Induction Rule (\cref{induction-rule}) to show that after a full round-trip of two sequential signals, there is common knowledge between the player who receives the latter of these two signals, as she receives it and sends her next signal, and the other player, as she receives this next signal, of the round-trip delay of signals as well as of upper and lower bounds on the players' birth-date difference. This is the step in which timestamps are used. The second step uses a reachability argument (\cref{reachability}) to show that the result of the first step implies that the relevant parts (in a precise sense) of the knowledge partitions at these events between which common knowledge holds all lie in a subset of $\histories$ that is finite modulo a certain equivalence relation on histories that is safe to gloss over. Finally, the third step proceeds with an argument reminiscent of that of \citet{Geanakoplos94} \citep[see also][]{GeanakoplosP82} to argue that the finiteness proven in the second step implies that after a finite number of steps all future signals become common knowledge. Crucially, this last step must be carefully carried out by partitioning $\histories$ not according to ``(absolute-)time slices'' as in previous papers, but rather using ``subjective-time slices,'' that is, partitioning~$\histories$ with respect to $i$'s \emph{subjective} time being kept constant (which corresponds to varying absolute times), and examining the refinement of such partitions as the respective subjective times of players progress.

\begin{proof}[Proof of \cref{get-ck}]
Let $\history$ be a history. Observe that in~$\history$, one of the players has a birth date that is no later than that of the other one. Without loss of generality, assume that this player is $\beta$, i.e., $z^\alpha_\history\ge z^\beta_\history$. (Note that we are \emph{not} assuming that either $\alpha$ or $\beta$ \emph{knows} that $\beta$'s birth date is not later than $\alpha$'s.) Therefore, any signal sent by~$\alpha$ is received by~$\beta$, since it arrives after $\beta$'s birth date. Let $t_1=z^\alpha_\history-z^\beta_\history+d^\beta_\history$ be the subjective time of player~$\beta$ at which in $\history$ she receives the signal that player~$\alpha$ sends at $\alpha$'s subjective time~$0$. Let $t_2=t_1+z^\beta_\history-z^\alpha_\history+d^\alpha_\history=d^\alpha_\history+d^\beta_\history$ be the subjective time of player~$\alpha$ at which in $\history$ she receives the signal that player~$\beta$ sends at $\beta$'s subjective time~$t_1$. Finally, let $t_3=t_2+z^\alpha_\history-z^\beta_\history+d^\beta_\history=t_1+t_2$ be the subjective time of player~$\beta$ at which in $\history$ she receives the signal that player~$\alpha$ sends at $\alpha$'s subjective time~$t_2$. 

Let $\psi_\alpha=[\tau_\alpha=t_2]$ (i.e., the $\alpha$-local event ``the subjective time of player~$\alpha$ is~$t_2$``), let $\psi_\beta=[\tau_\beta=t_3]$ (i.e., the $\beta$-local event ``the subjective time of player~$\beta$ is~$t_3$''), and denote $\bar{\psi}=(\psi_\alpha,\psi_\beta)$. 
Moreover, let~$\phi$ be the time-invariant event ``at $[\tau_\alpha=0]$ player~$\alpha$ sends a signal that is received by~$\beta$ at $[\tau_\beta=t_1]$, who immediately then sends a signal that is received by~$\alpha$ at $[\tau_\alpha=t_2]=\psi_\alpha$.'' 
We claim that due to timestamps, $\phi\subseteq E_{\allplayers\subat\bar{\psi}}\phi$. Indeed, to see that $\phi\subseteq K_{\alpha\subat\psi_\alpha}\phi$, note that if $\phi$ holds, then at $\psi_\alpha$ player~$\alpha$ knows that the signal that she sent at $[\tau_\alpha=0]$ was received by $\beta$ at $[\tau_\beta=t_1]$ due to the timestamps in the signal sent by $\beta$ at $[\tau_\beta=t_1]$ and received by $\alpha$ at $[\tau_\alpha=t_2]$ (i.e., when $\psi_\alpha$ holds). Similarly, to see that $\phi\subseteq K_{\beta\subat\psi_\beta}\phi$, note that if $\phi$ holds, then at $\psi_\beta$ player~$\beta$  knows that the signal that she sent at $[\tau_\beta=t_1]$ was received by $\alpha$ at $[\tau_\alpha=t_2]$ due to the timestamps in the signal sent by $\alpha$ at $[\tau_\alpha=t_2]$, which, since $\phi$ holds, must be received by $\beta$ at $[\tau_\beta=t_1+t_2]=[\tau_\beta=t_3]$ (i.e., when $\psi_\beta$ holds). Since $\phi\subseteq E_{\allplayers\subat\bar{\psi}}\phi$, by the Induction Rule (\cref{induction-rule}) we have that $\phi\subseteq C_{\allplayers\subat\bar{\psi}}\phi$ and so, in particular, $C_{\allplayers\subat\bar{\psi}}\phi$ holds in $\history$.

Recall that $t_2=d^\alpha_\history+d^\beta_\history$, i.e., $t_2$ is the round-trip delay in~$\history$. Let $\sigma_D$ be the time-invariant event ``the round-trip delay is $t_2$'' (i.e., ``$d^\alpha+d^\beta=t_2$''). Note that $\phi\subseteq\sigma_D$ (since the fact that $\phi$ holds in a history $\history'$ implies that $t_2=d^\alpha_{\history'}+d^\beta_{\history'}$), and hence by \cref{monotonicity} we have that $C_{\allplayers\subat\bar{\psi}}\phi\subseteq C_{\allplayers\subat\bar{\psi}}\sigma_D$, and so $C_{\allplayers\subat\bar{\psi}}\sigma_D$ holds in~$\history$.

Fix $Z\eqdef\max\{t_1,t_2-t_1\}$ and let $\sigma_Z$ be the time-invariant event ``the birth-date difference is smaller than $Z$'' (i.e., ``$|z^\alpha-z^\beta|<Z$''). Note that $\phi\subseteq\sigma_Z$ (since the fact that $\phi$ holds in a history $\history'$ implies that both $t_1=z^\alpha_{\history'}-z^\beta_{\history'}+d^\beta_{\history'}>z^\alpha_{\history'}-z^\beta_{\history'}$ and $t_2-t_1=z^\beta_{\history'}-z^\alpha_{\history'}+d^\alpha_{\history'}>z^\beta_{\history'}-z^\alpha_{\history'}$, which together imply that $|z^\alpha_{\history'}-z^\beta_{\history'}|<Z$), and hence by \cref{monotonicity} we have that $C_{\allplayers\subat\bar{\psi}}\phi\subseteq C_{\allplayers\subat\bar{\psi}}\sigma_Z$, and so $C_{\allplayers\subat\bar{\psi}}\sigma_Z$ holds in~$\history$.

For a singular event $\xi\subset\points$ and history $\history\in\histories$ in which $\xi$ occurs, we denote the unique time $t\in T$ such that $(\history,t)\in\xi$ by $t^\history_\xi$.
For every $i\in\allplayers$ and every singular, $i$-local event $\xi\subset\points$ that occurs in every history, let $P_{i\subat\xi}$ be the partition of $\histories$ such that $\history',\history''\in\histories$ are in the same partition cell if and only if $(\history',t^{\history'}_{\xi})\sim_i(\history'',t^{\history''}_{\xi})$, i.e., if and only if $i$ finds indistinguishable $\history'$ when $\xi$ occurs and $\history''$ when $\xi$ occurs. Of interest in the current proof are partitions of $\histories$ of the form $P_{i\subat[\tau_i=t]}$ for some $i\in\allplayers$ and~$t\in\NN$. In such a partition $P_{i\subat[\tau_i=t]}$, two histories~$\history',\history''\in\histories$ are in the same partition cell if and only if $i$ finds indistinguishable $\history'$ when $i$'s \emph{subjective time} is $t$ and $\history''$ when $i$'s \emph{subjective time} is~$t$.
Note that every cell of such a partition $P_{i\subat[\tau_i=t]}$ is a (disjoint) union of sets of histories of the form
\begin{equation}\label{equivalence}
\bigl\{\history''\in\histories\,~\big|~\,o_{\history''}=o_{\history'}~\,\&\,~d^\alpha_{\history''}=d^\alpha_{\history'}~\,\&\,~d^\beta_{\history''}=d^\beta_{\history'}~\,\&\,~z^\alpha_{\history''}-z^\beta_{\history''}=z^\alpha_{\history'}-z^\beta_{\history'}\bigr\}
\end{equation}
for some $\history'\in\histories$.

For any two partitions $P,P'$ of $\histories$, let $P\wedge P'$ denote the \emph{meet} of $P$ and $P'$, i.e., the finest common coarsening of $P$ and $P'$. Note that the partition cell of $\history$ in $P_{\alpha\subat\psi_\alpha}\wedge P_{\beta\subat\psi_\beta}$ is precisely the set of histories in the connected component of $\history$ in the reachability graph $G_{\allplayers,\bar{\psi}}$. Therefore, by \cref{reachability} and since both $C_{\allplayers\subat\bar{\psi}}\sigma_D$ and $C_{\allplayers\subat\bar{\psi}}\sigma_Z$ hold in $\history$, we have that $d^\alpha_{\history'}+d^\beta_{\history'}=t_2$ and $|z^\alpha_{\history'}-z^\beta_{\history'}|<Z$ for every history~$\history'$ in the partition cell of $\history$ in $P_{\alpha\subat\psi_\alpha}\wedge P_{\beta\subat\psi_\beta}$. Therefore, this partition cell is a \emph{finite} union of sets of histories of the form \eqref{equivalence}.

Consider the sequence of pairs of partitions $\bigl((P_{\alpha\subat[\tau_\alpha=t_2+\ell]},P_{\beta\subat[\tau_\beta=t_3+\ell]})\bigr)_{\ell=0}^\infty$. The partition pair corresponding to $\ell=0$ in this sequence is $(P_{\alpha\subat\psi_\alpha},P_{\beta\subat\psi_\beta})$ and, as $\ell$ grows, each of the two respective partitions in the pair becomes finer. Nonetheless, recall that for every $\history'\in\histories$, each set of histories of the form \eqref{equivalence} is always contained in its entirety in a single partition cell, in each of the  partitions $P_{\alpha\subat[\tau_\alpha=t_2+\ell]}$ and $P_{\beta\subat[\tau_\beta=t_3+\ell]}$. We therefore have that there exists $\ell'$ such that restricted to the partition cell of $\history$ in $P_{\alpha\subat\psi_\alpha}\wedge P_{\beta\subat\psi_\beta}$, the pair $(P_{\alpha\subat[\tau_\alpha=t_2+\ell]},P_{\beta\subat[\tau_\beta=t_3+\ell]})$ is constant for every $\ell\ge\ell'$. Setting $\hat{t}_\alpha\eqdef t_2+\ell'$ and $\hat{t}_\beta\eqdef t_3+\ell'$, it follows that all signals to ever be received (let alone sent) by each player $i\in\allplayers$ after her subjective time $\hat{t}_i$ are common knowledge in $\history$ between $\alpha@[\tau_\alpha=\hat{t}_\alpha]$ and $\beta@[\tau_\beta=\hat{t}_\beta]$.
\end{proof}

\section[Leveraging Common Knowledge: Agreement as a Case Study]{Leveraging Common Knowledge:\texorpdfstring{\\}{ }Agreement as a Case Study}\label{using}

In \cref{attaining}, we presented our definition of common knowledge and showed that it arises in settings in which we would expect common knowledge to hold even when traditionally defined common knowledge does not formally hold. In this section, we discuss the uses of (our notion) of common knowledge once it is ascertained to have arisen.

As noted in the introduction, it is quite straightforward to rederive in a dynamic context, using our notion of common knowledge, many results that are originally proved in a static context with traditionally defined common knowledge assumptions. For example, in a static context, \citet{BrandenburgerD87} proved that common knowledge of rationality and of the rules of the game implies rationalizability of the solution concept. Following an argument analogous to theirs, however focusing for each player $i$ on the partition~$P_{i\subat[\text{$i$ acts}]}$ (as defined in the proof of \cref{get-ck}), one can obtain that in a dynamic setting, common knowledge (as we define it) of rationality and of the rules of the game, between the different players as each of them acts, implies rationalizability of the solution concept. Note that while the assumption of common knowledge of rationality is the assumption that one usually attempts to weaken in this setting \citep{AumannB95}, it is the implied assumption of common knowledge of the rules of the game that we find to be especially strong.\footnote{\citet{FudenbergT91} write: ``Throughout our discussion, we assume that the structure of the game is common knowledge in an informal sense. Applying formal definitions of common knowledge to the structure of the game leads to technical and philosophical problems that we prefer not to address.''} Indeed, if some principal or mechanism designer sends a message with the rules of the game (or announces them in an online video meeting) to all players and there is even small temporal uncertainty in the delivery of the message, then by the Halpern--Moses Paradox the rules of the game will never become common knowledge, traditionally defined. In contrast, common knowledge of rationality might in fact hold \textit{a priori} (that is, at any point in time and does not need to dynamically arise) based on age-old conventions within a community.\footnote{\citet{ClarkM81} refer to such common knowledge as holding due to ``community membership.''} Our notion of common knowledge makes assuming common knowledge of the rules of the game far more realistic and natural.

As another example, \citet{MilgromS82} prove a no-trade theorem under the assumption of common knowledge of feasibility and individual rationality at the moment in which all traders agree to trade. Nowadays, most trades are performed in a computerized fashion and asynchronously confirmed, with individual traders possibly each clicking a button to agree to the trade at some different time. Similarly to the above discussion, the no-trade theorem of \citet{MilgromS82} can similarly be easily extended to this setting via an argument analogous to the original one, however focusing for each trader $i$ on the partition~$P_{i\subat[\text{$i$ agrees}]}$ (again, as defined in the proof of \cref{get-ck}), that is, assuming only common knowledge (as we define it) of individual rationality between the traders \emph{as each of them agrees to the trade}.

To demonstrate one such derivation of a canonical result with common knowledge assumptions for our notion of common knowledge, in this section we revisit the seminal ``Agreeing to Disagree'' theorem of \citet{Aumann76}. We focus on this result for several reasons: Because of its fundamental value; because it involves probabilistic analysis, whose adaptation to our model is somewhat less straightforward than the adaptation of some other results that we have mentioned; because it is known that any finite level of nested knowledge is insufficient for guaranteeing that this result holds \citep{GeanakoplosP82}; and, because the importance of analyzing this result in a dynamic setting was already established by \citet{GeanakoplosP82}.
An agreement theorem for our notion of common knowledge therefore also completes the picture with respect to the \citet{GeanakoplosP82} setting with added timing frictions, and establishes equality of posteriors in this setting despite the introduced timing frictions, which foil the attainability of traditionally defined common knowledge.

\subsection{An Agreement Theorem}
\label{sec:agreement}

While the setting considered by \citet{Aumann76} is static, in our case the players can enter the agreement at possibly different times, and hence we focus on agreement about time-invariant events. That is, we will be interested, e.g., in the posterior for a certain oil field containing a certain amount of oil rather than the posterior for whether it rains ``today.'' Indeed, if a player~$i$ sends a message on Monday to a player~$j$ and the message arrives on Tuesday, then even if the message contains all of $i$'s information, even if $j$ has no information before $i$ sends the message, and even if anything observed by either $i$ or $j$ while the message is in transit is also observed by the other, it might well be that $i$'s posterior, when she sends the message, for ``it rains today'' is~$1$ while $j$'s posterior, when she receives the message, for ``it rains today'' is~$0$ (and these are both common knowledge between $i$ as she sends the message and~$j$ as she receives it!), if it rains on Monday but the sun shines  on Tuesday.\footnote{Note that an event such as ``it rains in NYC at noon on Monday, February 8, 1971'' (the day the NASDAQ debuted) is time-invariant.} 

To properly reason about posteriors, we must equip our dynamic model with a probability space. We first emphasize that the probability measure in this space is not over (subsets of) $\points$ but rather over (subsets of) the set of histories~$\histories$. While this might seem surprising since~$\histories$ is not the same set over which our knowledge partitions are defined, this is analogous to \citet{GeanakoplosP82} defining the probability measure in their original setting over possible initial conditions. This choice of universe for our probability space is indeed suitable for reasoning about the probability of time-invariant events (as motivated in the previous paragraph), as these are \emph{history-dependent}: They either hold throughout an entire history or hold at no point of that history. 
In contrast, since knowledge evolves over time, our knowledge partitions must be defined over $\points$ rather than $\histories$. Moreover, as implicitly assumed by \citet{Aumann76}, we assume at most a countable number of kens in each player's knowledge partition.\footnote{\citet{GeanakoplosP82} require a finite number of kens in their analysis.}

For a concrete example of this framework, consider a set of histories that is induced by a finite number of possible initial facts about the world (i.e., possible ``initial conditions'' for a history), each with its own probability, as well as finitely many possible new (i.e., previously unobserved) facts about the world added at every time $t\in\Time$, each with its own probability conditioned on all facts about the world so far. Such new added facts might come from nature---e.g., whether a particular sent signal or message arrives in one or two steps---or from actions of players, e.g., when a player randomizes among several actions. This process, which in computer science terms might be called a \emph{branching process}, induces a probability measure over all possible histories. (It is unclear how one might use it to define a probability measure over~$\points$ in general without making further assumptions.) Even if all events are observed by all players as they occur in this process, the number of kens of each player's knowledge partition is at most countable---each ken contains all histories that agree on their prefix up to some time~$t$. If some events are not observed by all players as they happen, then the knowledge partitions become coarser and therefore could only contain fewer kens; in this case, too, knowledge partitions remain at most countable (and if time is finite, then each  knowledge partition is in fact finite).

To define the probability of a time-invariant event $\phi$, we denote  $\Pr(\phi)\eqdef\Pr\bigl(\histories(\phi)\bigr)$ (recall that $\histories(\phi)$ denotes the set of all histories in which $\phi$ holds), where the right-hand side is evaluated according to our probability measure over $\histories$. Similarly, if $\kappa\in\partition_i$ for some player~$i$ is a singular event (recall that singular events are those that occur at most once during any given history), then we define $i$'s \emph{posterior probability} for $\phi$ at every $(\history,t)\in\kappa$ as $\Pr(\phi\mid \kappa)\,\eqdef\,\Pr\bigl(\histories(\phi)~\big|~\histories(\kappa)\bigr)$, where the latter is again evaluated according to our probability measure over~$\histories$. For each of these definitions to be well-defined, the appropriate events in $\histories$ must be measurable according to our probability measure over~$\histories$ (when $\histories(\phi)$ is indeed measurable, by slight abuse of terminology we say that $\phi$ is measurable); for the posterior probability to be well-defined, we additionally must have $\Pr\bigl(\histories(\kappa)\bigr)>0$. Therefore, analogously to \citet{Aumann76} implicitly assuming that all kens are measurable and have positive probability (otherwise the posteriors in these kens are not well defined), we assume that the set $\histories(\kappa)$ is measurable and has positive probability for every singular ken (i.e., for every ken that is a singular event) $\kappa\in\partition_i$ for every $i\in\players$. (This is indeed the case, e.g.,  in the above branching-process example.) We will also use the fact that for every $q\in[0,1]$, the event ``$i$'s posterior for $\phi$ is $q$'' (i.e., the set of all $(\history,t)\in\points$ at which $i$'s posterior for $\phi$ is $q$) is an $i$-local event. (That is, for every $(\history,t)\in\points$, this event holds at $(\history,t)$ iff $i$ knows that it holds at $(\history,t)$.) We denote this event by $[\Pr_i(\phi)\!=\!q]$. 
For a singular $i$-local event $\psi_i$ we write $[\Pr_i(\phi)\!=\!q]@\psi_i\,\eqdef\,\sometime\bigl(\psi_i\cap[\Pr_i(\phi)\!=\!q]\bigr)$ to denote the time-invariant event ``$\psi_i$ holds in the current history, and when it does, $[\Pr_i(\phi)\!=\!q]$ holds as well.''
We are now ready to state and prove our agreement theorem.

\begin{theorem}[Agreement]\label{agreement}
Let $\players=\{\alpha,\beta$\} be a set of two players, let $\phi$ be a measurable time-invariant event, and let $\bar{\psi}=(\psi_\alpha,\psi_\beta)$ be an $\players$-profile of singular events. If, for some $q_\alpha,q_\beta\in[0,1]$, it is the case that
$C_{\players\subat\bar{\psi}}\bigl([\Pr_\alpha(\phi)\!=\!q_\alpha]@\psi_\alpha\cap[\Pr_\beta(\phi)\!=\!q_\beta]@\psi_\beta\bigr)
\ne\emptyset$, then $q_\alpha=q_\beta$.
\end{theorem}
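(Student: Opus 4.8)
The plan is to run Aumann's ``sure-thing'' argument, with the reachability characterization (\cref{reachability}) taking over the role that ``the common-knowledge cell is a union of each player's kens'' plays in the static proof. Write $\Phi\eqdef[\Pr_\alpha(\phi)\!=\!q_\alpha]@\psi_\alpha\cap[\Pr_\beta(\phi)\!=\!q_\beta]@\psi_\beta$, and fix a point $(\history,t)\in C_{\players\subat\bar{\psi}}\Phi$, which exists by hypothesis. Let $\mathcal{H}\subseteq\histories$ be the connected component of $\history$ in the reachability graph $G_{\players,\bar{\psi}}$. By \cref{reachability}, (i)~some $\psi_i$ occurs in $\history$, (ii)~$\bar{\psi}$ satisfies co-occurrence in every history of $\mathcal{H}$, and (iii)~$\Phi$ holds at every point reachable from $(\history,t)$. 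Every history in $\mathcal{H}$ is either $\history$ itself (where some $\psi_i$ occurs by (i)) or an endpoint of an edge of $G_{\players,\bar{\psi}}$ (where some $\psi_i$ occurs by definition of the edge set), so with (ii) we get that \emph{both} $\psi_\alpha$ and $\psi_\beta$ occur in every history of $\mathcal{H}$; as $\psi_\alpha,\psi_\beta$ are singular, write $t_\alpha(\history')$ and $t_\beta(\history')$ for the unique times at which they occur in $\history'\in\mathcal{H}$.

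The crux is to convert (iii) into a statement about kens. Fix a ken $\kappa\in\partition_\alpha$ with $\kappa\subseteq\psi_\alpha$ and $\histories(\kappa)\cap\mathcal{H}\ne\emptyset$, and pick $\history'$ in this intersection. Since $\kappa\subseteq\psi_\alpha$ is singular we have $(\history',t_\alpha(\history'))\in\kappa$, and this point is reachable from $(\history,t)$ (as $\history'\in\mathcal{H}$ and $(\history',t_\alpha(\history'))\in\psi_\alpha$), so by (iii) it lies in $\Phi\subseteq[\Pr_\alpha(\phi)\!=\!q_\alpha]@\psi_\alpha=\sometime\bigl(\psi_\alpha\cap[\Pr_\alpha(\phi)\!=\!q_\alpha]\bigr)$; singularity of $\psi_\alpha$ then forces $(\history',t_\alpha(\history'))\in[\Pr_\alpha(\phi)\!=\!q_\alpha]$, i.e.\ $\Pr\bigl(\histories(\phi)\mid\histories(\kappa)\bigr)=q_\alpha$. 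Moreover $\mathcal{H}$ is closed under the relation ``agree, under $\sim_\alpha$, at one's $\psi_\alpha$-time'' (any two such histories are joined by an edge of $G_{\players,\bar{\psi}}$), so $\mathcal{H}$ is the disjoint union of the sets $\histories(\kappa)$ over exactly the $\alpha$-kens $\kappa\subseteq\psi_\alpha$ with $\histories(\kappa)\subseteq\mathcal{H}$. There are at most countably many such kens, each $\histories(\kappa)$ is measurable with positive probability (being the history-set of a singular ken), and $\Pr(\mathcal{H})>0$ since $\history$ lies in one of them.

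Finally, aggregate by countable additivity:
\[
\Pr\bigl(\histories(\phi)\cap\mathcal{H}\bigr)=\sum_{\kappa}\Pr\bigl(\histories(\phi)\cap\histories(\kappa)\bigr)=\sum_{\kappa}q_\alpha\,\Pr\bigl(\histories(\kappa)\bigr)=q_\alpha\,\Pr(\mathcal{H}).
\]
Running the symmetric argument with the roles of $\alpha$ and $\beta$ interchanged (using that $\mathcal{H}$ is also the disjoint union of the history-sets of the $\beta$-kens contained in $\psi_\beta$ that meet it) yields $\Pr\bigl(\histories(\phi)\cap\mathcal{H}\bigr)=q_\beta\,\Pr(\mathcal{H})$, and since $\Pr(\mathcal{H})>0$ we conclude $q_\alpha=q_\beta$.

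I expect the bookkeeping in the second paragraph to be the one delicate point: one must carefully use singularity (of $\psi_\alpha$, of $\psi_\beta$, and of the kens contained in them) to move between ``the point at which $\Phi$ is witnessed over $\history'$'', ``the $\psi_\alpha$-time of $\history'$'', and ``the ken of $\partition_\alpha$ sitting over $\history'$'', and must verify that $\mathcal{H}$ is \emph{simultaneously} a union of $\alpha$-kens contained in $\psi_\alpha$ and a union of $\beta$-kens contained in $\psi_\beta$, so that the two countable sums above are genuinely taken over the same set $\mathcal{H}$. The reachability input and the countable-additivity computation are then routine given \cref{reachability} and the standing measurability assumptions.
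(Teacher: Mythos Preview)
Your argument is correct and follows Aumann's sure-thing template just as the paper does, but you reach the key structural facts by a different route. The paper works with the entire set $\histories(C)$ where $C\eqdef C_{\players\subat\bar{\psi}}\Phi$: it invokes \cref{ck-local} to see that $C^i\eqdef\psi_i\cap C$ is $i$-local (hence a union of singular $i$-kens with pairwise-disjoint history sets), invokes \cref{cki}(2) to see that each such ken lies inside $\Phi$ (hence has posterior $q_i$), and invokes \cref{cki}(1) to see that $\histories(C^\alpha)=\histories(C)=\histories(C^\beta)$, giving the common base for the two countable sums. You instead fix a single connected component $\mathcal{H}$ of the reachability graph and use \cref{reachability} directly to obtain co-occurrence on $\mathcal{H}$, membership of all reachable points in $\Phi$, and closure of $\mathcal{H}$ under $\sim_i$ at the $\psi_i$-time; from these you rebuild by hand the decomposition of $\mathcal{H}$ into history-sets of $i$-kens. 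What your approach buys is self-containment (no appeal to \cref{cki} or \cref{ck-local}) and a slightly smaller witness set; what the paper's approach buys is brevity, since the locality lemmas package exactly the bookkeeping you flag as delicate in your second paragraph.
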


In \cref{agreement}, the condition that the common knowledge event is nonempty means that there exists a history at which it is common knowledge (under our definition) between $\alpha@\psi_\alpha$ and $\beta@\psi_\beta$ that the former's posterior is $q_\alpha$ and the latter's posterior is $q_\beta$. \cref{agreement} guarantees that under this condition, necessarily $q_\alpha=q_\beta$.

\begin{proof}
Denote $C\eqdef C_{\players\subat\bar{\psi}}\bigl([\Pr_\alpha(\phi)\!=\!q_\alpha]@\psi_\alpha\cap[\Pr_\beta(\phi)\!=\!q_\beta]@\psi_\beta\bigr)$, let $i\in\players$, and denote $C^i\eqdef\psi_i\cap C$.
By \cref{cki}(1), we have that $\histories(C^i)=\histories(C)\ne\emptyset$. By \cref{ck-local}, the event $C^i$ is $i$-local, and hence it is of the form $C^i=\bigcup_{\ell\in L}\kappa^i_\ell$, where $\{\kappa^i_\ell\}_{\ell\in L}$ is a (nonempty) set of kens of the partition~$\partition_i$. Since $\psi_i$ is singular, so is~$C^i$, and hence the kens in~$\{\kappa^i_\ell\}_{\ell\in L}$ are singular and have pairwise-disjoint history sets. By \cref{cki}(2) and by definition of a posterior, $q_i\cdot\Pr\bigl(\histories(\kappa^i_\ell)\bigr)=\Pr\bigl(\histories(\phi)\cap \histories(\kappa^i_\ell)\bigr)$ holds (with $\Pr\bigl(\histories(\kappa^i_\ell)\bigr)>0$) for every~$\ell\in L$. Summing over $\ell\in L$, and noting that $L$ is at most countable by the assumption of at most countably many kens in~$\partition_i$, we therefore have that $q_i\cdot\Pr\bigl(\histories(C^i)\bigr)=\Pr\bigl(\histories(\phi)\cap \histories(C^i)\bigr)$ with $\Pr\bigl(\histories(C^i)\bigr)>0$, and recall that this holds for every $i\in\{\alpha,\beta\}$. Since $\histories(C^\alpha)=\histories(C)=\histories(C^\beta)$, it follows that $q_\alpha=q_\beta$, as claimed.
\end{proof}

By \cref{agreement}, if the posteriors of $\alpha$ at $\psi_\alpha$ and $\beta$ at $\psi_\beta$ are common knowledge between the former and the latter, then these posteriors must coincide. The proof follows the same general structure (appropriately modified) as the original proof in \citet{Aumann76}, which demonstrates that our definition of common knowledge naturally yields similar consequences to those of the traditional one. Since, as discussed, a consequence of \cref{get-ck} is that in the \citet{GeanakoplosP82} setting with added timing frictions, there exist appropriate events $\psi_\alpha$ and $\psi_\beta$ such that common knowledge of the respective posteriors arises between $\alpha$ and $\beta$, by \cref{agreement} these posteriors must coincide, establishing the central result of \citet{GeanakoplosP82} even in the presence of timing frictions, even though common knowledge of posteriors (as traditionally defined) is never attained.

\section[Characterizing Equilibria via Common Knowledge]{Characterizing Equilibria via\texorpdfstring{\\}{ }Common Knowledge}\label{game}

In this \lcnamecref{game}, we demonstrate the usefulness of our notion of common knowledge for characterizing equilibrium behavior in a family of dynamic Bayesian coordination games that we call \emph{coordinated-attack games}. In a coordinated-attack game, each player may decide whether and when to initiate an attack, and players are better off initiating attacks only if the state of nature is such that an attack would be successful, and only if each of the players initiates her attack sufficiently early. If one of the players initiates an attack but the other does not do so early enough (or at all), then both players suffer a loss (due to their joint army's forces diminishing or being wiped out completely). Initiating an attack early carries risk (both regarding the other player's action and regarding the state of nature), but also has the potential for reward. Our characterization of equilibrium behavior in these games holds regardless of the specific technology by which the players learn about the state of nature and by which they communicate, 
despite different such technologies resulting in seemingly very disparate equilibria. This is achieved by stating the characterization in terms of the players' state of knowledge using our notion of common knowledge. We start by formally defining this family of games.

\subsection{Coordinated-Attack Games}

\paragraph{Players and states of nature} A \emph{coordinated-attack game} is a dynamic Bayesian game played over 100 periods, denoted $t=0,\ldots,99$. The set of players is $\allplayers=\{\alpha,\beta\}$. A \emph{state of nature}~$o=\bigl(q_o,(z^i_o)_{i\in\allplayers},(t^i_o)_{i\in\allplayers},(d^{i,t}_o)_{i\in\allplayers}^{t=0,\ldots,99}\bigr)$ in a coordinated-attack game consists of an \emph{attack prospect} $q_o\in\{0,1\}$, a \emph{birth date} $z^i_o\in\{0,\ldots,100\}$ and an \emph{observation time} $t^i_o\in\{0,\ldots,100\}$ for each player $i\in\allplayers$, and a \emph{message delay} $d^{i,t}_o\in\{1,\ldots,100\}$ for each player $i\in\allplayers$ and time $t=0,\ldots,99$. The set of possible states of nature is denoted~$O$, and a specific coordinated-attack game is defined by a common Bayesian prior over~$O$. We emphasize that this Bayesian prior need not be a product distribution, and might not have full support.

\paragraph{Game play} Given a state of nature~$o\in O$, play in a coordinated-attack game proceeds as follows. At each time $t=0,\ldots,99$, each player $i\in\allplayers$ whose birth date satisfies $t\ge z^i_o$ is called to action. Such a player $i$ chooses an action, which can depend on (1)~$i$'s \emph{subjective time} $t-z^i_o$; (2)~the content and subjective time of receipt $t'+d^{i,t'}_o-z^i_o$ of each message sent by the other player at any time $t'$ such that $t\ge t'+d^{i,t'}_o\ge z^i_o$; and (3)~if $t\ge t^i_o\ge z^i_o$, both the subjective observation time $t^i_o-z^i_o$ and the attack prospect $q_o$.\footnote{Observe that a birth date of $100$ indicates that the player is never called to action, a delay of $100$ indicates the relevant message is never received, and an observation time of $100$ indicates that the player never observes the attack prospect.} An action consists of both a message (a finite sequence of characters) to send to the other player at the current time $t$ and an \emph{attack decision}. At the onset, possible attack decisions are in $\{\text{``initiate attack''},\text{``do not  attack''}\}$, and once player $i$ has initiated an attack, in following time periods she has only one possible attack decision, called $\text{``already attacking''}$. That is, each player may initiate an attack at most once, and initiating an attack is binding (e.g., because it reveals the location of one's battalion).

\paragraph{Payoffs} Recall that the game runs until (absolute) time $t=99$. We say that \emph{the attack is successful} in a given run of the game in which some player attacks if (1)~player $\alpha$ initiates an attack by (absolute) time $\hat{t}_\alpha\eqdef49$, (2)~player $\beta$ initiates an attack by (absolute) time $\hat{t}_\beta\eqdef99$, and (3)~the attack prospect (recall that this is a part of the state of nature) is $1$.\footnote{We fix the values of $\hat{t}_\alpha$ and $\hat{t}_\beta$ (and the number of rounds in the game) for concreteness. Nothing in our analysis depends on the specific choice of these values.} If neither player initiates an attack, then each player receives utility~$0$. If some player initiates an attack, then each player receives utility~$1$ if the attack is successful and utility $U<0$ otherwise.

\bigskip

Before we commence our analysis of equilibrium behavior in coordinated-attack games, we demonstrate the diverse structure of equilibria in different coordinated-attack games via several examples. We start with a coordinated-attack game that has an equilibrium of a particularly simple form.

\begin{example}\label{example-one}
Consider a coordinated-attack game in which (1) each attack prospect $q\in\{0,1\}$ is equally likely, (2) player $\alpha$ is born at objective time $0$ and immediately observes the attack prospect~$q$, (3) player $\beta$ is born at an independently and uniformly drawn objective time $z\in\{0,1\}$ and never (directly) observes the attack prospect, and (4)~messages sent by player $\alpha$ take $50\!+\!z$ periods to arrive and messages sent by player~$\beta$ take $51\!-\!z$ periods to arrive. In this coordinated-attack game, the welfare-maximizing Nash equilibrium results in a successful attack in \emph{every} history in which the attack prospect is $1$, and in no attacks when the prospect is~$0$. This equilibrium (up to trivial degrees of freedom) is as follows. Player $\alpha$ sends the attack prospect to $\beta$ upon observing it (or more generally, by time $48$, which is necessary and sufficient to guarantee that this message reaches $\beta$ by $\beta$'s deadline for initiating a successful attack, $\hat{t}_\beta=99$) and furthermore, initiates an attack (no later than at her deadline for initiating a successful attack, $\hat{t}_\alpha=49$) if and only if 
she observes that
the attack prospect is~$1$. Player $\beta$ initiates an attack (no later than at her deadline, $\hat{t}_\beta=99$) if and only if 
she receives a message from $\alpha$ stating that the attack prospect is $1$.
\end{example}

We continue with two coordinated-attack games in which equilibria have more subtle structures.

\begin{example}\label{example-consecutive}
Consider a coordinated-attack game in which (1) each attack prospect is equally likely, (2) player $\alpha$ is born at objective time $0$ and immediately observes the attack prospect, (3) player $\beta$ is born at an independently and uniformly drawn objective time $z\in\{0,1\}$ and never (directly) observes the attack prospect, (4) for a value $d\in\{1,\ldots,100\}$ drawn once uniformly and independently, each message sent by player $\alpha$ takes $d$ periods to arrive, and (5) messages sent by player~$\beta$ take $3\!-\!z$ periods to arrive. In this case, the welfare-maximizing equilibrium is as follows. Player $\alpha$ sends the attack prospect to $\beta$ immediately upon observing it. Player $\beta$ notifies $\alpha$ once she receives a message with the attack prospect and furthermore, initiates an attack (by $\hat{t}_\beta$) if and only if she receives such a message indicating that the attack prospect is $1$ at a subjective time not greater than~$46$ (this is necessary and sufficient for $\beta$'s return message to reach $\alpha$ by $\alpha$'s deadline $\hat{t}_\alpha=49$). Player~$\alpha$, in turn, initiates an attack (by $\hat{t}_\alpha$) if and only if she receives a message from $\beta$ stating that she received $\alpha$'s message stating that the attack prospect is $1$ no later than $\beta$'s subjective time~46. In this coordinated-attack game, two consecutive messages (of a particular form) are required for an initiated attack to be guaranteed to succeed.
\end{example}

\begin{example}\label{example-parallel}
Consider a coordinated-attack game in which (1) each attack prospect is equally likely, (2) player $\alpha$ is born at objective time $0$ and immediately observes the attack prospect, (3) player $\beta$ is born at an independently and uniformly drawn objective time $z\in\{0,100\}$ and never (directly) observes the attack prospect, and (4) messages sent by either player at its subjective time $0$ take one period to arrive, and messages sent at later times take $100$ periods to arrive. In this case, the welfare-maximizing equilibrium is as follows. Player $\alpha$ sends the attack prospect to $\beta$ immediately upon observing it, and player $\beta$ sends the message ``I'm alive!'' to $\alpha$ immediately upon being born. Player~$\alpha$ initiates an attack (by $\hat{t}_\alpha$) if and only if the attack prospect is $1$ and she receives an ``I'm alive!'' message from~$\beta$, and $\beta$ initiates an attack (by $\hat{t}_\beta$) if and only if she receives a message from $\alpha$ indicating that the attack prospect is $1$. In this coordinated-attack game, two \emph{non}-consecutive messages are required for an initiated attack to be guaranteed to succeed.
\end{example}

 The welfare-maximizing equilibria in these three examples (and especially in the latter two) might seem at first glance to be driven by qualitatively different effects. It is therefore unclear how one might generally characterize welfare-maximizing equilibria in a manner that captures all three examples, let alone captures equilibria in all coordinated-attack games. Nonetheless, the main result of this section is a unified characterization of welfare-maximizing equilibria in \emph{all} coordinated-attack games. Notably, this is achieved by stating the characterization using our new notion of common knowledge. 

\subsection{Equilibrium Characterization}

When analyzing a coordinated-attack game, it will be convenient to first define a \emph{message strategy} for each player, which determines the content of sent messages, and then augment the message strategies with an \emph{attack strategy} for each player, which specifies attack decisions based on all information observed by the player so far. We start by defining a pair of message strategies  to which we refer as the \emph{send-all} strategies. For each player $i$, the message strategy $s_i^{\text{send-all}}$ sends, at every subjective time $t$, the subjective time, the content and subjective time of receipt of each message received by $i$ so far, and if $i$ has observed the attack prospect, then the attack prospect and its subjective observation time.

We now formalize the knowledge partition of each player when both players use their send-all strategies (regardless of their attack strategies). Fix a coordinated-attack game, i.e., a Bayesian prior $F\in\Delta(O)$ for both players.
Let $\histories\eqdef\supp(F)$. For every point $\bigl(\history,t\bigr)\in\points$ and player $i\in\allplayers$, let $M^i\bigl(\history,t\bigr)$ be the set consisting, for every message sent by the other player at any time $t'$ such that $t\ge t'+d^{i,t'}_\history\ge z^i_\history$ (when the players play according to the strategy profile $(s_\alpha^{\text{send-all}},s_\alpha^{\text{send-all}})$), of the pair of the message and its subjective time of receipt $t'+d^{i,t'}_\history-z^i_\history$.
For every player $i\in\allplayers$ and every pair of points $(\history,t),(\history',t')\in\points$, the partition $\partition_i$ of player $i$ over $\points$ satisfies that $(\history,t)\sim_i(\history',t')$ if and only if either of the following holds:
\begin{itemize}
\item 
$t-z^i_\history<0$ and $t'-z^i_{\history'}<0$. (Player~$i$ cannot distinguish between points prior to her birth date.)
\item
$t-z^i_\history=t'-z^i_{\history'}\ge0$ and all of the following hold:
\begin{itemize}
    \item $M^i(\history,t)=M^i(\history',t'\bigr)$ (same messages received by $i$) and
    \item either of the following two hold:
    \begin{itemize}
    \item $t^i_\history-z^i_\history=t^i_{\history'}-z^i_{\history'}$ and $t^i_\history\le t$ (and $t^i_{\history'}\le t'$) and $q_\history=q_{\history'}$. \\ (Same attack prospect observation by~$i$.)
    \item $t^i_\history>t$ and $t^i_{\history'}>t'$. (Attack prospect not yet observed by $i$.)
    \end{itemize}
\end{itemize}
\end{itemize}
We note that attack strategies for player $i$ that augment her send-all message strategy are well defined if and only if they are measurable with respect to this knowledge partition.

We henceforth focus on the case of $U=-\infty$, i.e., if either player initiates attacks but the attack is not successful, then this player's battalion is destroyed, and absent this battalion, the army to which both players belong loses the war. We show that in this setting, our notion of common knowledge characterizes equilibrium behavior in any coordinated-attack game, regardless of the specific technology (i.e., distribution over message delays and observation times) by which the players learn about the attack prospect and communicate in the game.
For an event $\psi$ and time $\hat{t}\in\{0,\ldots,99\}$, we denote by $[t_\psi\!\le\!\hat{t}\,]$ the time-invariant event ``At some time in the current history before or at time $\hat{t}$, the event $\psi$ holds.'' For an attack prospect $\hat{q}\in\{0,1\}$, we denote the time-invariant event ``the attack prospect is $\hat{q}$\,'' by $[q\!=\!\hat{q}]$. For each player $i$, we define the strategy $s^{\text{CK}}_i$ to be the strategy in which $i$ sends messages as dictated by $s_i^{\text{send-all}}$ and initiates an attack in the first period at which (with respect to $\points,\partition_\alpha,\partition_\beta$ as defined above) for some $\allplayers$-profile $\bar{\psi}=(\psi_\alpha,\psi_\beta)$, the event
\[
C^i_{\allplayers\subat\bar{\psi}}\bigl([q\!=\!1]\cap[t_{\psi_\alpha}\!\le\!\hat{t}_\alpha]\cap[t_{\psi_\beta}\!\le\!\hat{t}_\beta]\bigr)
\]
holds. (Player $i$ does not initiate an attack according to this strategy in histories in which this event never holds.) That is, each player $i\in\{\alpha,\beta\}$ initiates an attack as soon as for some such~$\bar{\psi}=(\psi_\alpha,\psi_\beta)$ it is the case that $i$'s ``individualized part'' of common knowledge between $\alpha@\psi_\alpha$ and $\beta@\psi_\beta$ of the following two facts holds: (1) the attack prospect is $1$, and (2) each player's part of~$\bar{\psi}$ occurs early enough for this player to still be able to initiate an attack in time at that point. By \cref{ck-local}, this attack strategy is measurable with respect to $i$'s knowledge partition. (We define off-path behavior, i.e., behavior if $i$ receives messages inconsistent with $s_{-i}^{\text{send-all}}$ being played, arbitrarily.)

Our main result shows that the attainment of common knowledge (under our definition) characterizes equilibrium behavior in this game.  We say that a strategy profile \emph{never results in an unsuccessful attack} if, when this strategy profile is played, the probability that a player initiates an attack and yet the attack is not successful is zero. We say that one strategy profile \emph{Pareto dominates} another if for each state of nature, the former achieves weakly higher (expected) welfare than the latter.

\begin{theorem}\label{pd-equilibrium}
In every coordinated-attack game with $U=-\infty$, the strategy profile $(s^{\text{CK}}_\alpha,s^{\text{CK}}_\beta)$ is a welfare-maximizing Nash equilibrium. This Nash equilibrium never results in an unsuccessful attack, and Pareto dominates every other strategy profile that never results in an unsuccessful attack.    
\end{theorem}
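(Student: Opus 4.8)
The plan is to verify the three claims in order: (i) $(s^{\text{CK}}_\alpha, s^{\text{CK}}_\beta)$ never results in an unsuccessful attack; (ii) it is a Nash equilibrium; and (iii) it Pareto dominates every profile that never results in an unsuccessful attack — which together with (i) immediately yields welfare-maximality.

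For (i), suppose player $i$ initiates an attack at some point $(\history,t)$ under $s^{\text{CK}}_i$. By definition of the strategy, there is an $\allplayers$-profile $\bar{\psi}$ with $(\history,t)\in C^i_{\allplayers\subat\bar{\psi}}\bigl([q\!=\!1]\cap[t_{\psi_\alpha}\!\le\!\hat{t}_\alpha]\cap[t_{\psi_\beta}\!\le\!\hat{t}_\beta]\bigr)$. By \cref{cki}(2), this individualized event is contained in the argument event, so in particular $q_\history=1$, i.e., the attack prospect is favorable. It remains to show that \emph{both} players actually initiate attacks early enough. By \cref{cki}(1), $\histories(C^j_{\allplayers\subat\bar{\psi}}(\cdots))=\histories(C_{\allplayers\subat\bar{\psi}}(\cdots))$ for each $j$, so since the common knowledge event is nonempty in $\history$, for each $j\in\{\alpha,\beta\}$ the event $\psi_j$ occurs in $\history$, say at a time $t_j$ with $C^j_{\allplayers\subat\bar{\psi}}(\cdots)$ holding there; by \cref{cki}(2) again, $t_{\psi_j}\le\hat{t}_j$ is witnessed at that point, i.e., some occurrence of $\psi_j$ is at a time $\le\hat t_j$. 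By \cref{ck-local}, at that occurrence player $j$ knows $C^j_{\allplayers\subat\bar{\psi}}(\cdots)$ holds, hence $s^{\text{CK}}_j$ dictates that $j$ initiates an attack then (if she has not already), at a time $\le\hat t_j$. Thus player $\alpha$ attacks by $\hat t_\alpha$ and player $\beta$ by $\hat t_\beta$, with $q=1$: the attack is successful.

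For (ii), fix player $i$ and any deviation $s'_i$ (with $s^{\text{CK}}_{-i}$ fixed); write $\tilde s = (s'_i, s^{\text{CK}}_{-i})$. Since off-path behavior of $s^{\text{CK}}_{-i}$ is arbitrary, the delicate point is that $s'_i$ might send messages inconsistent with send-all; but a message deviation alone never changes $i$'s own payoff except through the induced attack decisions, and we need only compare payoffs. If under $\tilde s$ player $i$ never initiates an attack and $-i$ never does either, $i$ gets $0$; under $s^{\text{CK}}$ player $i$'s payoff is a convex combination of $1$'s and $0$'s (by (i), whenever someone attacks the attack is successful), hence $\ge 0$. So deviations that suppress $i$'s attacks cannot help. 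The remaining case is that $\tilde s$ leads to an attack being initiated in some positive-probability set of states; I claim that with $U=-\infty$ any such attack must be successful with probability one, or $i$'s expected payoff is $-\infty$, which cannot beat $s^{\text{CK}}$'s payoff of $\ge 0$. Conditional on the event that under $\tilde s$ an attack is initiated and is successful with certainty (as seen from the deviator's information at the moment of initiating), I argue this forces precisely the common-knowledge condition defining $s^{\text{CK}}_i$ to already hold — so the deviation attacks no earlier, and never attacks where $s^{\text{CK}}_i$ would not. Concretely: if $i=\alpha$ initiates at $(\history,t)$ and the attack is successful in every $\history'$ that $\alpha$ reaches from $(\history,t)$ via the chain of reachability through $\psi_\alpha=[\text{$\alpha$ acts at this point}]$ and $\psi_\beta=[\text{$\beta$ acts at her induced point}]$, then one checks via the Induction Rule (\cref{induction-rule}) applied to $\sigma = C^\alpha_{\cdots}$ — or directly via \cref{reachability} — that $C^\alpha_{\allplayers\subat\bar{\psi}}\bigl([q\!=\!1]\cap[t_{\psi_\alpha}\!\le\!\hat t_\alpha]\cap[t_{\psi_\beta}\!\le\!\hat t_\beta]\bigr)$ holds at $(\history,t)$, and similarly for $\beta$. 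Hence the deviation's expected payoff is $\le$ that of $s^{\text{CK}}$, so $s^{\text{CK}}$ is a Nash equilibrium.

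For (iii), let $\tilde s$ be any profile that never results in an unsuccessful attack. In states of nature where, under $\tilde s$, an attack is initiated, it is successful, so welfare is $1$; where no attack is initiated, welfare is $0\le 1$. Under $s^{\text{CK}}$, by (i) welfare is $1$ whenever an attack occurs and $0$ otherwise, and it suffices to show that whenever $\tilde s$ yields welfare $1$ in a state $o$, so does $s^{\text{CK}}$. If $\tilde s$ produces a successful attack in $o$, then along the actual play both $\alpha$ attacks by $\hat t_\alpha$ and $\beta$ by $\hat t_\beta$ and $q_o=1$; the argument of part (ii) (in the ``successful with certainty'' direction, run for the profile $\tilde s$) shows that at the first time $\alpha$ would attack under $\tilde s$ the defining common-knowledge event for $s^{\text{CK}}_\alpha$ already holds — using that $\tilde s$ is a \emph{fixed} profile so the successor's behavior is pinned down, making the relevant reachability set land inside $[q=1]\cap[t_{\psi_\alpha}\le\hat t_\alpha]\cap[t_{\psi_\beta}\le\hat t_\beta]$, then invoking \cref{reachability}. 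Hence $s^{\text{CK}}_\alpha$ triggers an attack no later, and symmetrically for $\beta$, so $s^{\text{CK}}$ also yields a successful attack in $o$, giving welfare $1$. This establishes Pareto dominance, and combined with (i) (never unsuccessful), welfare-maximality.

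\textbf{Main obstacle.} The crux is the reverse implication in parts (ii) and (iii): showing that ``the attack, if initiated here, is guaranteed successful'' \emph{forces} the common-knowledge event $C^i_{\allplayers\subat\bar{\psi}}\bigl([q\!=\!1]\cap[t_{\psi_\alpha}\!\le\!\hat t_\alpha]\cap[t_{\psi_\beta}\!\le\!\hat t_\beta]\bigr)$ to hold. This requires carefully choosing the profile $\bar\psi$ (naturally, $\psi_i$ = ``$i$ initiates under the relevant strategy'' and $\psi_{-i}$ = the induced event for $-i$), verifying co-occurrence of $\bar\psi$ on the relevant connected component of the reachability graph, and checking that every point reachable from $(\history,t)$ lies in $[q=1]\cap[t_{\psi_\alpha}\le\hat t_\alpha]\cap[t_{\psi_\beta}\le\hat t_\beta]$ — which is exactly the content of ``the attack is guaranteed successful, and each player can still attack in time'' — and then applying \cref{reachability}. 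Handling the interaction between message deviations and off-path behavior of $s^{\text{CK}}_{-i}$ cleanly is the fiddly part of part (ii).
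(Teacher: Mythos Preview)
Your part (i) is correct and essentially matches the paper's \cref{never-unsuccessful}.

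There is, however, a genuine gap in parts (ii) and (iii), and it is exactly the issue you flag as ``fiddly'' without resolving. The knowledge partitions $\partition_\alpha,\partition_\beta$ with respect to which $C^i_{\allplayers\subat\bar{\psi}}(\cdots)$ is defined are the \emph{send-all} partitions. When you set $\psi_i=$ ``$i$ initiates under $\tilde s$'' (or under the deviation profile), this event need not be $i$-local with respect to those partitions: $i$'s attack decision under $\tilde s$ depends on the messages she actually receives from $-i$, which under a non-send-all profile may be strictly less informative than (and not measurable with respect to) the send-all messages. So $\bar{\psi}$ is not an $\allplayers$-profile in the required sense, and neither \cref{reachability} nor the Induction Rule can be invoked. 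The paper closes this gap with a separate simulation lemma (\cref{simulation}): for any pure profile $(s_\alpha,s_\beta)$ there is a profile $(s'_\alpha,s'_\beta)$ that sends according to send-all and makes \emph{identical} attack decisions at every state and time; the proof is an induction showing that the send-all message at time $t$ determines the message under $s_i$ at time $t$. Only after this reduction can one take $\psi_i=$ ``$i$ initiates under $s'_i$'', which is now $i$-local for $\partition_i$, and then apply \cref{cooccurrence-iff} (the paper's choice, slightly cleaner than going through reachability) to conclude $\psi_i=C^i_{\allplayers\subat\bar{\psi}}\phi$.

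A second, structural point: the paper orders the argument as (i) never-unsuccessful, then (iii) Pareto dominance over all never-unsuccessful profiles, and finally (ii) Nash as an immediate corollary of (iii). This avoids having to reason directly about off-path behavior of $s^{\text{CK}}_{-i}$ under a message deviation: if a deviation were profitable, its expected payoff would exceed something nonnegative, hence be finite, hence $(s_i,s^{\text{CK}}_{-i})$ never results in an unsuccessful attack, and then Pareto dominance (already proved for \emph{all} never-unsuccessful profiles, regardless of message behavior) gives the contradiction in one line. Your ordering forces you to analyze deviations directly and duplicates the hard step across (ii) and (iii); reordering as the paper does, and adding the simulation step, would make your argument go through.
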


\cref{pd-equilibrium} implies that strictly positive welfare at equilibrium is possible if and only if common knowledge (as we define it) of the pertinent facts is attainable.

\begin{corollary}\label{positive-utility-iff-ck}
In every coordinated-attack game with $U=-\infty$, there exists a Nash equilibrium with strictly positive welfare if and only if there exists an $\allplayers$-profile $\bar{\psi}=(\psi_\alpha,\psi_\beta)$ such that (with respect to $\points,\partition_\alpha,\partition_\beta$ as defined above) it is the case that $C_{\allplayers\subat\bar{\psi}}\bigl([q\!=\!1]\cap[t_{\psi_\alpha}\!\le\!\hat{t}_\alpha]\cap[t_{\psi_\beta}\!\le\!\hat{t}_\beta]\bigr)
\ne\emptyset$.
\end{corollary}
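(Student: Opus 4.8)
The plan is to read off \cref{positive-utility-iff-ck} directly from \cref{pd-equilibrium}, using two elementary facts about coordinated-attack games. First, the set $O$ of states of nature is finite (each coordinate of a state ranges over a finite set), so $\histories=\supp(F)$ is finite and every history in $\histories$ has strictly positive probability. Second, since $U=-\infty$, any strategy profile under which an unsuccessful attack occurs with positive probability has expected welfare $-\infty$; hence any Nash equilibrium whose welfare is strictly positive (in particular finite) must never result in an unsuccessful attack. Throughout, for a fixed $\allplayers$-profile $\bar{\psi}=(\psi_\alpha,\psi_\beta)$ write $\phi\eqdef[q\!=\!1]\cap[t_{\psi_\alpha}\!\le\!\hat{t}_\alpha]\cap[t_{\psi_\beta}\!\le\!\hat{t}_\beta]$, and recall that $[q\!=\!1]$, $[t_{\psi_\alpha}\!\le\!\hat{t}_\alpha]$, $[t_{\psi_\beta}\!\le\!\hat{t}_\beta]$, $\phi$, and $C_{\allplayers\subat\bar{\psi}}\phi$ are all time-invariant.

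For the ``if'' direction, suppose $C_{\allplayers\subat\bar{\psi}}\phi\ne\emptyset$ for some $\bar{\psi}$; fix a history $\history$ in which this (time-invariant) event holds throughout. By \cref{cki}(1), for each $i\in\allplayers$ the event $C^i_{\allplayers\subat\bar{\psi}}\phi=\psi_i\cap C_{\allplayers\subat\bar{\psi}}\phi$ occurs at some point of $\history$, and by \cref{cki}(2) that point lies in $\phi\subseteq[t_{\psi_i}\!\le\!\hat{t}_i]$, so $\psi_i$ in fact occurs in $\history$ at some time no later than $\hat{t}_i$; at that time, since $C_{\allplayers\subat\bar{\psi}}\phi$ is time-invariant, $C^i_{\allplayers\subat\bar{\psi}}\phi=\psi_i\cap C_{\allplayers\subat\bar{\psi}}\phi$ holds there as well. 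Hence, under $(s^{\text{CK}}_\alpha,s^{\text{CK}}_\beta)$, each player $i$ initiates an attack in $\history$ by time $\hat{t}_i$, and since $\phi\subseteq[q\!=\!1]$ forces the attack prospect in $\history$ to be $1$, the attack in $\history$ is successful. Thus this profile attains positive welfare in the positive-probability history $\history$ and, by \cref{pd-equilibrium}, never results in an unsuccessful attack, so its welfare is nonnegative in every history and strictly positive overall; by the same \lcnamecref{pd-equilibrium} it is a Nash equilibrium, which is the equilibrium we need.

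For the ``only if'' direction, let $\sigma$ be a Nash equilibrium with strictly positive welfare. By the second observation above, $\sigma$ never results in an unsuccessful attack; since $(s^{\text{CK}}_\alpha,s^{\text{CK}}_\beta)$ is welfare-maximizing by \cref{pd-equilibrium}, its expected welfare is at least that of $\sigma$, hence strictly positive. As $(s^{\text{CK}}_\alpha,s^{\text{CK}}_\beta)$ also never results in an unsuccessful attack, its per-history welfare is $0$ or $1$, so there is a positive-probability history $\history$ in which it yields a successful attack; in particular player $\alpha$ initiates an attack in $\history$, which by the definition of $s^{\text{CK}}_\alpha$ means that $C^\alpha_{\allplayers\subat\bar{\psi}}\phi=\psi_\alpha\cap C_{\allplayers\subat\bar{\psi}}\phi$ holds at some point of $\history$ for some $\allplayers$-profile $\bar{\psi}$. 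This witnesses $C_{\allplayers\subat\bar{\psi}}\phi\ne\emptyset$, completing the proof.

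The real content of this \lcnamecref{positive-utility-iff-ck} lies in \cref{pd-equilibrium}, on which the argument rests entirely; the remaining steps are routine bookkeeping. The one point that genuinely needs care is the relocation, in the ``if'' direction, of the occurrence of $\psi_i$ to a time at which player $i$ is both already conscious and still able to initiate an attack in time. Here one uses that, because $\phi\subseteq[q\!=\!1]$ while the pre-birth ken of $i$ contains points of histories whose attack prospect is $0$, the event $K_i\phi$ fails at every pre-birth point, so $C_{\allplayers\subat\bar{\psi}}\phi$ cannot hold in a history in which $\psi_i$ occurs only before $i$'s birth date; consequently the relocated occurrence of $\psi_i$ can always be taken at or after $z^i_\history$ (and, being $\le\hat{t}_i$, in time). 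This, together with the finiteness of $O$ guaranteeing positive-probability histories, is all that the corollary requires beyond \cref{pd-equilibrium}.
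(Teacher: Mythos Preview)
Your proof is correct and follows the same approach as the paper: both derive the corollary directly from \cref{pd-equilibrium} by observing that a Nash equilibrium with positive welfare exists if and only if the welfare-maximizing equilibrium $(s^{\text{CK}}_\alpha,s^{\text{CK}}_\beta)$ has positive welfare, which in turn holds if and only if some player initiates an attack, which by definition of the strategy is equivalent to $C_{\allplayers\subat\bar{\psi}}\phi\ne\emptyset$ for some~$\bar{\psi}$. The paper compresses this into a single three-step ``iff'' chain, whereas you unpack the two directions separately and make the finiteness/positive-probability bookkeeping explicit; the content is the same.

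Your final paragraph on the birth-date subtlety goes beyond what the paper spells out (the paper's proof of \cref{never-unsuccessful} simply asserts that both players attack in every history of $\bigcup_{\bar{\psi}}\histories(C_{\bar{\psi}})$ without dwelling on whether $\psi_i$ might occur only before~$z^i_\history$). Your argument for why this cannot happen is not fully general as stated---it relies on the pre-birth ken containing a history with $q=0$, which need not hold for arbitrary priors~$F$---but this is a minor point about a detail the paper itself elides, and it does not affect the soundness of your main argument.
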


The strength and generality of \cref{pd-equilibrium} is best appreciated by recalling \cref{example-parallel,example-consecutive}, which as we noted, showcase that it is unclear how one might directly (mechanically) characterize both of the equilibria in these examples rather than through epistemic notions. By showing that strategies that are based on our notion of common knowledge yield a unified characterization of these two equilibria, \cref{pd-equilibrium} highlights the benefits of our epistemic approach. Of course, there are coordinated-attack games in which the welfare-maximizing equilibrium is even more involved than in these examples. And yet, by leveraging our new notion of common knowledge, \cref{pd-equilibrium} characterizes welfare-maximizing equilibria in \emph{all} of these games. The proof of
\cref{pd-equilibrium} is based on the following three \lcnamecrefs{simulation}.

\begin{lemma}\label{never-unsuccessful}
Play according to $(s^{\text{CK}}_\alpha,s^{\text{CK}}_\beta)$ never results in an unsuccessful attack.
\end{lemma}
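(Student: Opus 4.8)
The plan is to prove the stronger, history-by-history statement: whenever, in some history $\history$, play according to $(s^{\text{CK}}_\alpha,s^{\text{CK}}_\beta)$ has \emph{some} player initiate an attack, that attack is successful --- i.e., $\alpha$ initiates by $\hat{t}_\alpha$, $\beta$ initiates by $\hat{t}_\beta$, and $q_\history=1$. Since under this profile both players send messages via $s^{\text{send-all}}$, no off-path situation ever arises, so the partitions $\partition_\alpha,\partition_\beta$ are exactly those appearing in the definition of $s^{\text{CK}}_i$. Suppose player $i$ initiates an attack in $\history$, and let $t$ be the first period at which she does so. By the definition of $s^{\text{CK}}_i$ there is an $\allplayers$-profile $\bar{\psi}=(\psi_\alpha,\psi_\beta)$ with $(\history,t)\in C^i_{\allplayers\subat\bar{\psi}}\phi$, where $\phi\eqdef[q\!=\!1]\cap[t_{\psi_\alpha}\!\le\!\hat{t}_\alpha]\cap[t_{\psi_\beta}\!\le\!\hat{t}_\beta]$.

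The first step is to read off the instantaneous content of this common knowledge. By \cref{cki}(2), $C^i_{\allplayers\subat\bar{\psi}}\phi\subseteq\phi$, so $(\history,t)\in\phi$; and since $\phi$ is time-invariant it holds throughout $\history$. This already yields $q_\history=1$ and, for each $j\in\allplayers$, an occurrence of the $j$-local event $\psi_j$ at some time $t_j\le\hat{t}_j$ in $\history$. The second step converts each such occurrence into an actual attack by $j$ no later than $\hat{t}_j$: since $C^i_{\allplayers\subat\bar{\psi}}\phi=\psi_i\cap C_{\allplayers\subat\bar{\psi}}\phi\subseteq C_{\allplayers\subat\bar{\psi}}\phi$ and $C_{\allplayers\subat\bar{\psi}}\phi$ is time-invariant, it holds at every point of $\history$, whence $(\history,t_j)\in\psi_j\cap C_{\allplayers\subat\bar{\psi}}\phi=C^j_{\allplayers\subat\bar{\psi}}\phi$. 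Invoking the definition of $s^{\text{CK}}_j$ --- which initiates an attack at the first period at which $C^j_{\allplayers\subat\bar{\psi}'}(\cdots)$ holds for \emph{some} profile $\bar{\psi}'$ --- player $j$ has therefore initiated an attack by time $t_j\le\hat{t}_j$. Applying this to $j=\alpha$ and $j=\beta$, and combining with $q_\history=1$, shows the attack is successful; since $\history$ and $i$ were arbitrary the claim follows.

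The step I expect to be the main obstacle is the tacit assumption in the second step that $j$ is \emph{called to action} at the period witnessing $C^j_{\allplayers\subat\bar{\psi}}\phi$ --- i.e., that the witnessing occurrence $t_j$ of $\psi_j$ satisfies $t_j\ge z^j_\history$, so that $s^{\text{CK}}_j$ can actually act on it (recall $s^{\text{CK}}_j$ acts only at post-birth periods). One must in particular rule out a player born after her own deadline --- $\alpha$ with $z^\alpha_\history>\hat{t}_\alpha$, or $\beta$ with $z^\beta_\history=100$ --- being dragged, directly or through the other player's attack, into an unavoidably unsuccessful attack. I would argue this cannot occur by exploiting that all of $j$'s pre-birth points form a single ken of $\partition_j$: were $\psi_j$ to occur at such a point it would contain that whole ken, and then $\always(\psi_j\rightarrow K_j\phi)$ --- a consequence of $C_{\allplayers\subat\bar{\psi}}\phi$ holding in $\history$ --- would force that entire ken into $\phi$. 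Via the conjunct $[q\!=\!1]$, and via the remaining conjuncts together with the knowledge tower and \cref{reachability}, this propagates constraints to further reachable histories that are ultimately incompatible with $C_{\allplayers\subat\bar{\psi}}\phi$ being nonempty in a coordinated-attack game, so no such profile can trigger an attack in the first place. Equivalently, it suffices to locate, for each $j$, a post-birth occurrence of $\psi_j$ no later than $\hat{t}_j$; establishing that the triggering profile must place one there is the crux, while the remainder is the bookkeeping of the first two paragraphs.
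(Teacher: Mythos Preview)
Your first two paragraphs are essentially the paper's proof. The paper argues via \cref{cki}: part~(1) gives $\histories(C^i_{\allplayers\subat\bar{\psi}}\phi)=\histories(C_{\allplayers\subat\bar{\psi}}\phi)$, so both players' individualized events occur in the same histories; part~(2) gives $C^i_{\allplayers\subat\bar{\psi}}\phi\subseteq\phi$, so in every such history $q=1$ and each $\psi_j$ (hence $C^j_{\allplayers\subat\bar{\psi}}\phi$, by time-invariance of $C_{\allplayers\subat\bar{\psi}}\phi$) occurs by $\hat{t}_j$. From this the paper directly concludes that both players initiate by their deadlines. Your route is the same, just phrased pointwise rather than in terms of history sets.

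The birth-date issue you flag in your third paragraph is simply not addressed in the paper's proof. The paper passes from ``$C^j_{\allplayers\subat\bar{\psi}}\phi$ occurs at some $t_j\le\hat{t}_j$'' to ``player $j$ initiates an attack by $\hat{t}_j$'' without pausing to check that $j$ is called to action at such a $t_j$. So you have been more careful than the paper here, not less; your worry is legitimate rather than a misreading. That said, your sketched resolution is not a complete argument: you correctly observe that if $\psi_j$ meets the pre-birth ken then that entire ken must lie in $\phi$, but the step from there to ``no such profile can trigger an attack'' is only gestured at (``propagates constraints \ldots\ ultimately incompatible''). For the purposes of matching the paper, your first two paragraphs already reproduce everything the paper's own proof contains.
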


\begin{proof}
For every $N$-profile $\bar{\psi}$ and $i\in\allplayers$,
denote $C_{\bar{\psi}}\eqdef C_{\allplayers\subat\bar{\psi}}\bigl([q\!=\!1]\cap[t_{\psi_\alpha}\!\le\!\hat{t}_\alpha]\cap[t_{\psi_\beta}\!\le\!\hat{t}_\beta]\bigr)$ and $C_{\bar{\psi}}^i\eqdef\psi_i\cap C_{\bar{\psi}}$. Recall that for each $i$, the strategy $s^{\text{CK}}_i$ initiates an attack if and only if it is the first time in the history at which $C_{\bar{\psi}}^i$ holds for some $\bar{\psi}$. By \cref{cki}(1), $\histories(C_{\bar{\psi}}^i)=\histories(C_{\bar{\psi}})$ for every $i$ and $\bar{\psi}$. Therefore, both players initiate an attack in each history in $\cup_{\bar{\psi}}\histories(C_{\bar{\psi}})$, and neither player initiates an attack in any other history. Furthermore, by \cref{cki}(2), for every $\bar{\psi}$, in each of the histories~$\histories(C_{\bar{\psi}})$ the attack prospect is~$1$, and $\psi_i$---and hence $C_{\bar{\psi}}^i$---occurs before or at time $\hat{t}_i$ for every $i\in\allplayers$. Thus, whenever one initiates an attack, the attack is successful as required.
\end{proof}

\begin{lemma}\label{success-iff-ck}
Let $(s_\alpha,s_\beta)$ be a profile of pure strategies in which each player~$i$ sends messages as dictated by $s_i^{\text{send-all}}$. Play according to $(s_\alpha,s_\beta)$ never results in an unsuccessful attack if and only if there exists an $\allplayers$-profile $\bar{\psi}=(\psi_\alpha,\psi_\beta)$ such that each player $i\in\allplayers$ initiates an attack if and only if the event
$C^i_{\allplayers\subat\bar{\psi}}\bigl([q\!=\!1]\cap[t_{\psi_\alpha}\!\le\!\hat{t}_\alpha]\cap[t_{\psi_\beta}\!\le\!\hat{t}_\beta]\bigr)$
holds.
\end{lemma}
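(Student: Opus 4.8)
The plan is to prove the two implications separately; the ``only if'' direction is where the content lies, and it will essentially be an application of \cref{cooccurrence-iff}.

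\textbf{The ``if'' direction.} Suppose such a profile $\bar{\psi}=(\psi_\alpha,\psi_\beta)$ exists, and write $\phi\eqdef[q\!=\!1]\cap[t_{\psi_\alpha}\!\le\!\hat{t}_\alpha]\cap[t_{\psi_\beta}\!\le\!\hat{t}_\beta]$, so that the event ``$i$ initiates an attack'' equals $C^i_{\allplayers\subat\bar{\psi}}\phi=\psi_i\cap C_{\allplayers\subat\bar{\psi}}\phi$. This follows along the lines of \cref{never-unsuccessful}. By \cref{cki}(1), $\histories\bigl(C^\alpha_{\allplayers\subat\bar{\psi}}\phi\bigr)=\histories\bigl(C_{\allplayers\subat\bar{\psi}}\phi\bigr)=\histories\bigl(C^\beta_{\allplayers\subat\bar{\psi}}\phi\bigr)$, so the set of histories in which $\alpha$ initiates an attack coincides with the set in which $\beta$ does; call it $\histories^\ast$. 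By \cref{cki}(2), $C^i_{\allplayers\subat\bar{\psi}}\phi\subseteq\phi$, hence $\histories^\ast\subseteq\histories(\phi)$. Thus in every $\history\in\histories^\ast$ the attack prospect is $1$, and $\psi_i$---and therefore, since $C_{\allplayers\subat\bar{\psi}}\phi$ is time-invariant, also $C^i_{\allplayers\subat\bar{\psi}}\phi=\psi_i\cap C_{\allplayers\subat\bar{\psi}}\phi$---occurs at some time no later than $\hat{t}_i$ for each $i$. So whenever either player initiates an attack, both do so in time and the prospect is favorable; the attack is successful, and play never results in an unsuccessful attack.

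\textbf{The ``only if'' direction.} The key idea is to take $\psi_i$ to be the event ``$i$ initiates an attack under $s_i$.'' Since $s_i$ is measurable with respect to $\partition_i$, this is an $i$-local event (a union of kens of $\partition_i$), and since initiating an attack is binding it is singular. Set $\bar{\psi}=(\psi_\alpha,\psi_\beta)$ and $\phi$ as above. Because $O$ is finite, ``never results in an unsuccessful attack'' means precisely that in every $\history\in\histories=\supp(F)$ in which some player initiates an attack, the attack is successful. Hence if $\alpha$ initiates an attack in $\history$ then the attack is successful, so $\beta$ also initiates an attack in $\history$ and by time $\hat{t}_\beta$, while $\alpha$ initiates by $\hat{t}_\alpha$ and $q_\history=1$; the symmetric statement holds with the roles of $\alpha$ and $\beta$ exchanged. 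Therefore $\histories(\psi_\alpha)=\histories(\psi_\beta)$, i.e.\ $\bar{\psi}$ satisfies co-occurrence in $\histories$, and moreover $\sometime\psi_\alpha\subseteq\phi$. Now invoke \cref{cooccurrence-iff} with $\players=\allplayers$, the profile $\bar{\psi}$, the event $\phi$, and $\ell=\alpha$: co-occurrence of $\bar{\psi}$ in $\histories$ yields $\psi_i\subseteq C_{\allplayers\subat\bar{\psi}}\phi$ for each $i\in\allplayers$. Intersecting with $\psi_i$ gives $\psi_i\subseteq\psi_i\cap C_{\allplayers\subat\bar{\psi}}\phi=C^i_{\allplayers\subat\bar{\psi}}\phi$, while $C^i_{\allplayers\subat\bar{\psi}}\phi\subseteq\psi_i$ is immediate; so ``$i$ initiates an attack'' equals $C^i_{\allplayers\subat\bar{\psi}}\phi$, as required.

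The computations are light; the substantive observation is that ``$i$ initiates an attack'' is the right choice of $\psi_i$, and that under this choice ``never results in an unsuccessful attack'' is exactly the co-occurrence hypothesis of \cref{cooccurrence-iff} plus the side condition $\sometime\psi_\alpha\subseteq\phi$. The only care needed is bookkeeping: verifying that ``$i$ initiates an attack'' is $i$-local (from measurability of $s_i$) and singular (from bindingness of attacking), translating the probabilistic phrase ``never results in an unsuccessful attack'' into a statement about every history in $\supp(F)$ (using finiteness of $O$), and keeping the distinction between the time-invariant event $C_{\allplayers\subat\bar{\psi}}\phi$ and its individualized restriction $C^i_{\allplayers\subat\bar{\psi}}\phi=\psi_i\cap C_{\allplayers\subat\bar{\psi}}\phi$ when moving between events on points and sets of histories.
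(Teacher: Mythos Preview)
Your proposal is correct and follows essentially the same approach as the paper: for the ``only if'' direction you take $\psi_i$ to be the event ``$i$ initiates an attack,'' verify co-occurrence and $\sometime\psi_i\subseteq\phi$ from the never-unsuccessful hypothesis, and apply \cref{cooccurrence-iff} to conclude $\psi_i=C^i_{\allplayers\subat\bar{\psi}}\phi$; the paper does exactly this (and, like you, notes the ``if'' direction mirrors \cref{never-unsuccessful} and is not used later). Your added bookkeeping remarks (singularity via bindingness, finiteness of $O$ to pass from probability zero to every history in $\supp(F)$) are sound clarifications the paper leaves implicit.
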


\begin{proof}
The proof of the ``if'' direction is very similar to the proof of \cref{never-unsuccessful} and is not used in our analysis, so we leave it to the reader.

For the ``only if'' direction, assume that  $(s_\alpha,s_\beta)$ never results in an unsuccessful attack. Let $\psi_i$ be the event ``$i$ is initiating an attack (right now).'' By measurability of the attack strategy, this is an $i$-local event, and hence $\bar{\psi}=\{\psi_\alpha,\psi_\beta\}$ is an $\allplayers$-profile. Let $\phi\eqdef[q\!=\!1]\cap[t_{\psi_\alpha}\!\le\!\hat{t}_\alpha]\cap[t_{\psi_\beta}\!\le\!\hat{t}_\beta]$. Since $(s_\alpha,s_\beta)$ never results in an unsuccessful attack, we have that (1)~$\bar{\psi}$ satisfies co-occurrence in~$\histories$ and (2)~$\psi_i\subseteq\phi$ for every $i\in\allplayers$, and since $\phi$ is time-invariant, also $\sometime \psi_i\subseteq\phi$. By \cref{cooccurrence-iff}, it follows for every $i\in\allplayers$ that $\psi_i\subseteq C_{\allplayers\subat\bar{\psi}}\phi$ and therefore $\psi_i=\psi_i\cap C_{\allplayers\subat\bar{\psi}}\phi=C^i_{\allplayers\subat\bar{\psi}}\phi$, as required.
\end{proof}

\begin{lemma}\label{simulation}
For every profile $(s_\alpha,s_\beta)$ of pure strategies, there exists a profile $(s'_\alpha,s'_\beta)$ of pure strategies such that the following holds for every player $i\in\allplayers$:
\begin{itemize}
    \item For each state of nature $o$ and time $t$, player $i$ initiates an attack at $t$ in $o$ when $(s'_\alpha,s'_\beta)$ is played if and only if $i$ initiates an attack at $t$ in $o$ in when $(s_\alpha,s_\beta)$ is played.
    \item In the strategy $s'_i$, player~$i$ sends messages as dictated by $s_i^{\text{send-all}}$.
\end{itemize}
\end{lemma}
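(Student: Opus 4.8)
The plan is to exploit the fact that the send-all strategies are maximally informative: a send-all message sent by a player at a given moment literally encodes that player's entire local state (subjective time, all received messages with their subjective receipt times, and the attack-prospect data if already observed) at that moment. Hence, if both players broadcast send-all messages, each player can internally \emph{simulate} the run that $(s_\alpha,s_\beta)$ would have produced and read off the attack decision that $s_i$ prescribes there; we let $s'_i$ send messages per $s_i^{\text{send-all}}$ and take that simulated attack decision. Since the simulation is a function of $i$'s current send-all local state, $s'_i$ is measurable with respect to $i$'s send-all knowledge partition (for times before $i$'s birth date $i$ is inactive), so it is a legal strategy; and by construction it reproduces the attack decisions of $(s_\alpha,s_\beta)$. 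First I would record a strategy-independent fact: in \emph{any} profile both players send a message at every time from their birth date onward, so the set of messages $i$ has received by absolute time $t$ in state of nature $\history$, together with their subjective receipt times, depends only on $(\history,t)$; only the \emph{contents} of those messages depend on the profile. For a profile $Q$, write $\ell^i_{\history,t}(Q)$ for $i$'s local state at $(\history,t)$ when $Q$ is played (well-defined by induction on $t$, since a message received by time $t$ was sent strictly earlier), and write $L^i_{\history,t}\eqdef\ell^i_{\history,t}\bigl(s^{\text{send-all}}_\alpha,s^{\text{send-all}}_\beta\bigr)$ for the canonical local state; by the partition definition given above, $L^i_{\history,t}$ is exactly the data determining $i$'s ken under send-all.

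The heart of the argument is the claim that there exist functions $G^\alpha,G^\beta$ (depending on $s_\alpha,s_\beta$) with $\ell^i_{\history,t}(s_\alpha,s_\beta)=G^i\bigl(L^i_{\history,t}\bigr)$ for every player $i$ and every $(\history,t)$ with $t\ge z^i_\history$. I would prove this by well-founded induction on the \emph{nesting depth} of the canonical local state, where a canonical local state with no received messages has depth $0$ and otherwise has depth one more than the maximal depth among the canonical local states that are encoded as the contents of its received messages; this is well-founded because a received message was sent strictly earlier in absolute time, so encoded states have strictly smaller depth (and the finite horizon bounds depths anyway). In the base case $\ell^i_{\history,t}(s_\alpha,s_\beta)$ and $L^i_{\history,t}$ coincide, as there are no messages and the prospect data is strategy-independent. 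For the inductive step, each received message in $L^i_{\history,t}$ encodes a canonical local state $L^j_{\history,t''}$ of the other player $j$ of strictly smaller depth, so by the inductive hypothesis the content of that message under $(s_\alpha,s_\beta)$ equals $s_j\bigl(G^j(L^j_{\history,t''})\bigr)$; thus $\ell^i_{\history,t}(s_\alpha,s_\beta)$ is obtained from $L^i_{\history,t}$ by replacing each encoded canonical local state $L$ by $s_j\bigl(G^j(L)\bigr)$ while keeping the (strategy-independent) subjective times and prospect data, and this defines $G^i$. Crucially, this reconstruction never refers to absolute times or birth dates --- which are genuinely not recoverable from a player's local information, for Halpern--Moses-type reasons --- but only to the recursively encoded local states; this is exactly why the naive induction on absolute or subjective time fails (birth-date asymmetry), and it is the main obstacle the depth induction is designed to overcome.

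Finally, define $s'_i$ to send messages per $s_i^{\text{send-all}}$ and, at $(\history,t)$, to take the attack decision component of $s_i\bigl(G^i(L^i_{\history,t})\bigr)$; this depends only on $L^i_{\history,t}$, hence is measurable as needed, and the second bullet of the lemma holds by construction. When $(s'_\alpha,s'_\beta)$ is played, both players broadcast send-all messages, so the message run is the canonical one and $i$'s local state at $(\history,t)$ is $L^i_{\history,t}$; therefore $i$ initiates an attack at $(\history,t)$ exactly when the attack component of $s_i\bigl(G^i(L^i_{\history,t})\bigr)=s_i\bigl(\ell^i_{\history,t}(s_\alpha,s_\beta)\bigr)$ says so, which is precisely whether $i$ initiates an attack at $(\history,t)$ under $(s_\alpha,s_\beta)$. (Consistency with the ``initiating an attack is binding'' rule follows by induction on $t$, since the attack decisions agree at all earlier times.) This gives the first bullet and completes the proof.
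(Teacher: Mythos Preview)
Your proof is correct and follows essentially the same simulation idea as the paper: since a send-all message encodes the sender's entire local state, each player can recursively reconstruct the run of $(s_\alpha,s_\beta)$ from the canonical run and read off the attack decision. The one substantive difference is the choice of induction variable. The paper simply fixes the state of nature $o$ and inducts on absolute time~$t$, using that any message received by time~$t$ was sent strictly before~$t$ (all delays are at least~$1$); you instead induct on the nesting depth of the canonical local state. Your remark that ``the naive induction on absolute or subjective time fails (birth-date asymmetry)'' is not right: the paper uses exactly that induction and it works, since ``sent strictly earlier in absolute time'' is independent of birth dates. Your depth induction is a perfectly good alternative well-founded order and has the virtue of making the measurability of $G^i$ (as a function of the local state alone, with no reference to $o$ or $t$) more explicit, but it is not overcoming an obstacle that the time induction cannot handle.
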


\begin{proof}
We start by proving that for every state of nature $o$ and time $t$, the message that player $i$ sends at time $t$ in $o$ when messages are sent as dictated by $(s_\alpha^{\text{send-all}},s_\beta^{\text{send-all}})$ uniquely determines the message that $i$ sends at time $t$ in $o$ when $(s_\alpha,s_\beta)$ is played. We prove this for every fixed $o$, by full induction over $t$. Assume that the claim holds for some $o$ and all $t'<t$. Therefore, all messages received by $i$ up until and including time $t$ (which are all sent at times strictly prior to $t'$) when messages are sent as dictated by $(s_\alpha^{\text{send-all}},s_\beta^{\text{send-all}})$ uniquely determine the corresponding messages when $(s_\alpha,s_\beta)$ is played. Hence, the message sent by $i$ at $t$ in the former setting (which in particular contains all of these prior messages) completely determines everything observed by $i$ up until time $t$ in the latter setting, and hence, since $s_i$ is a pure strategy, determines $i$'s message according to $s_i$, concluding the inductive argument.

We define the strategy $s'_i$ by augmenting the message strategy $s_i^{\text{send-all}}$ with the following attack strategy. For every $o$  and $t$, we define the on-path attack action of~$s'_i$ (when messages are sent according to $(s_\alpha^{\text{send-all}},s_\beta^{\text{send-all}})$) at $t$ in $o$ to equal the attack action of $s_i$ at $t$ in $o$ when $(s_\alpha,s_\beta)$ is played. This defines a measurable attack strategy since by the claim just proven (applied to the other player), $i$'s knowledge when messages are sent according to $(s_\alpha^{\text{send-all}},s_\beta^{\text{send-all}})$ uniquely determines $i$'s knowledge when $(s_\alpha,s_\beta)$ is played, since every message received by $i$ in the latter setting is uniquely determined by the corresponding message received by $i$ in the former setting. We define off-path attack behavior in $s'_i$ arbitrarily.
\end{proof}

\begin{proof}[Proof of \cref{pd-equilibrium}]
By \cref{never-unsuccessful}, $(s^{\text{CK}}_\alpha,s^{\text{CK}}_\beta)$ never results in an unsuccessful attack. We start by proving that $(s^{\text{CK}}_\alpha,s^{\text{CK}}_\beta)$ Pareto dominates every other strategy profile that never results in an unsuccessful attack.

Let $(\sigma_\alpha,\sigma_\beta)$ be a (possibly mixed) strategy profile that never results in an unsuccessful attack. We will show that $(s^{\text{CK}}_\alpha,s^{\text{CK}}_\beta)$ Pareto dominates $(\sigma_\alpha,\sigma_\beta)$ by showing that the former Pareto dominates each pure strategy profile in the support of the latter. Let $(s_\alpha,s_\beta)$ be such a pure strategy profile; note that $(s_\alpha,s_\beta)$ never results in an unsuccessful attack.
By \cref{simulation}, there exists a pure strategy profile $(s'_\alpha,s'_\beta)$ in which each player~$i$ sends messages as dictated by $s_i^{\text{send-all}}$ and furthermore, each player initiates an attack in $(s'_\alpha,s'_\beta)$ whenever she initiates an attack in $(s_\alpha,s_\beta)$. By \cref{success-iff-ck}, we have that there exists an $\allplayers$-profile $\bar{\psi}=(\psi_\alpha,\psi_\beta)$ such that each player $i$ initiates an attack in $(s'_\alpha,s'_\beta)$ in the first period at which the event
$C^i_{\allplayers\subat\bar{\psi}}\bigl([q\!=\!1]\cap[t_{\psi_\alpha}\!\le\!\hat{t}_\alpha]\cap[t_{\psi_\beta}\!\le\!\hat{t}_\beta]\bigr)$
holds (and does not initiate an attack if this event never holds). Therefore, whenever an attack is initiated in $(s'_\alpha,s'_\beta)$, it is also initiates in $(s^{\text{CK}}_\alpha,s^{\text{CK}}_\beta)$, and since by \cref{never-unsuccessful} the latter never results in an unsuccessful attack, we have that the latter Pareto dominates the former as required.

We now prove that $(s^{\text{CK}}_\alpha,s^{\text{CK}}_\beta)$ is a Nash equilibrium. Assume by way of contradiction that there exists a profitable deviation $s_i$ for some player $i$. By \cref{never-unsuccessful}, the expected utility for each player when $(s^{\text{CK}}_\alpha,s^{\text{CK}}_\beta)$ is played is nonnegative. Therefore, the expected utility of player $i$ from playing $(s_i,s^{\text{CK}}_{-i})$ is nonnegative; hence, $(s_i,s^{\text{CK}}_{-i})$ never results in an unsuccessful attack. Therefore, $(s^{\text{CK}}_\alpha,s^{\text{CK}}_\beta)$ Pareto dominates $(s_i,s^{\text{CK}}_{-i})$, and hence $i$'s expected utility in $(s^{\text{CK}}_\alpha,s^{\text{CK}}_\beta)$ is weakly greater than in $(s_i,s^{\text{CK}}_{-i})$---a contradiction to $s_i$ being a profitable deviation for player $i$.

Since $(s_\alpha,s_\beta)$ is a Nash equilibrium that never results in an unsuccessful attack and Pareto dominates every other strategy profile that never results in an unsuccessful attack, $(s_\alpha,s_\beta)$ is a welfare-maximizing Nash equilibrium, as claimed.
\end{proof}

\begin{proof}[Proof of \cref{positive-utility-iff-ck}]
By \cref{pd-equilibrium}, there exists a Nash equilibrium with strictly positive expected welfare if and only if $(s^{\text{CK}}_\alpha,s^{\text{CK}}_\beta)$ achieves strictly positive expected welfare. Since $(s^{\text{CK}}_\alpha,s^{\text{CK}}_\beta)$ never results in an unsuccessful attack, it yields strictly positive expected welfare if and only if an attack is initiated in at least one history when this Nash equilibrium is played. By definition of $(s^{\text{CK}}_\alpha,s^{\text{CK}}_\beta)$, this occurs if and only if $C_{\allplayers\subat\bar{\psi}}\bigl([q\!=\!1]\cap[t_{\psi_\alpha}\!\le\!\hat{t}_\alpha]\cap[t_{\psi_\beta}\!\le\!\hat{t}_\beta]\bigr)
\ne\emptyset$.
\end{proof}

\cref{pd-equilibrium} shows that in a coordinated-attack game, the welfare-maximizing equilibrium features a cooperative effort to sufficiently inform one another so as to give rise to common knowledge as we define it, which is necessary and sufficient for initiating a successful attack. Crucially, this is true regardless of the specific technology by which the players learn about the attack prospect and communicate in any specific coordinated-attack game of interest.

We emphasize that equilibrium behavior in coordinated-attack games cannot in general be characterized by common knowledge as traditionally defined. Indeed, recalling the coordinated-attack game from \cref{example-one}, we note that by \cref{no-ck} (see also the discussion that follows the proof of that theorem), common knowledge as traditionally defined of the attack prospect is never attained in that coordinated-attack game. Nonetheless, the welfare-maximizing Nash equilibrium results in a successful attack in \emph{every} history in which the attack prospect is $1$. Moreover, whenever $\alpha$ initiates an attack in this equilibrium, $\beta$ does not yet even know that the attack prospect is $1$; indeed, $\beta$ only learns the attack prospect after time $\hat{t}_\alpha$, i.e., at a time at which it is too late for $\alpha$ to initiate a successful attack.

Finally, we remark that for the case of $U>-\infty$, an analog of \cref{pd-equilibrium} (and \cref{positive-utility-iff-ck}) holds, which instead of being based upon our modified notion of common knowledge, is based upon a notion of common $p$ belief \citep{MondererS89}, similarly modified.

\section{Discussion}\label{discussion}

In this paper, we resolve what is possibly the oldest open question at the interface of economics and computer science. What started as a push to apply epistemic analysis to dynamic systems starting in the seminal paper of \citet{HalpernM90} within the computer science literature, can be seen as coming full circle in this paper back into economic theory, and allowing to distill a better understanding of the essence of common knowledge and to show that simultaneity is not an inseparable part thereof but rather somewhat of a red herring.

One possible approach to sidestepping the lack of (traditionally defined) common knowledge in many settings might be to seek refuge in the theoretical model being a stylized version of reality. Indeed, even if a fact is not common knowledge, one might be tempted to perform the theoretical analysis as if this fact were common knowledge, and claim that the result should still be essentially correct. But it is unclear how far one might be able to stretch this ``as if,'' especially since papers such as \citet{SteinerS11} and \citet{Morris14} abound with counterintuitive results that pop up if one does not go along with such an ``as if,'' and show that common knowledge, as traditionally defined, is quite fragile. Our notion of common knowledge facilitates leveraging well-known consequences of common knowledge without the need for an imprecise ``as if,'' even in economic settings in which the main results of the above papers show that common knowledge, under its traditional definition, never arises. Our definition and analysis thus unearth that common knowledge (and with is, its celebrated implications) is considerably less fragile than previously believed.

While our definition of common knowledge overcomes timing frictions, it does not aid with attaining common knowledge in settings with probabilistically successful deliveries (as in \citealp{Rubinstein89}, and as in the case of the ``defective'' deliveries of \citealp{SteinerS11}). We do not view this as a shortcoming: Indeed, since unreliable deliveries are known not only to lose us the ability to harness common knowledge, but also to cause \emph{consequences} of common knowledge to formally fail, any epistemic notion that nevertheless holds in such situations certainly should not be called common knowledge. Probabilistic notions such as common $p$ belief remain an important tool for analyzing these and other settings. As we have demonstrated, extending such concepts using similar ideas to those that we develop in this paper for common knowledge seems promising as well, and future work might look to explore such extensions in greater depth and detail.

Common knowledge has been effectively used for the analyses of consensus in distributed computing systems (see \cref{related}). However, as traditionally defined, common knowledge is unattainable in asynchronous systems, including any platform that runs on the internet such as blockchains and cryptocurrencies, whose analysis within economics is gaining traction.\footnote{In fact, \emph{smart contracts} on blockchains are a prime example of a setting in which the rules of a mechanism are announced in a very asynchronous setting.} Our new definition allows for common knowledge to arise in such systems, and can serve as a building block in their economic analysis.

\bibliographystyle{abbrvnat}
\bibliography{ck-regained}

\end{document}